\newif\iftr\trtrue    
\newtheorem{definition}{Definition}[section]
\newtheorem{lemma}[definition]{Lemma}
\newtheorem{proposition}[definition]{Proposition}
\newtheorem{example}[definition]{Example}
\newtheorem{fact}[definition]{Fact}
\crefname{fact}{Fact}{Facts}
\Crefname{fact}{Fact}{Facts}
\tikzset{
  cnode/.style={
    shape=circle,
    minimum size = 0mm,
    inner sep = 1pt,
    font=\tiny,
    draw
  },
  carrow/.style={
    ->,
    shorten >=1pt,
    >=stealth',
    auto,
    draw,
    sloped
  }
}
\newcommand\dboxed[1]{
  \dbox{\ensuremath{#1}}
}
\newcommand{\csys}[2][c]{
	 \dbox{\ensuremath{
		\begin{array}{#1}
		  #2
		\end{array}
	 }
  }
}
\newif\ifemi
\newcommand{\tnxbehapi}[1][partly]{
  Research {#1} supported by the EU
  H2020 RISE programme under the Marie Skłodowska-Curie grant
  agreement No 778233,
}
\newcommand{\tnxitmatters}{
  by the MIUR project PRIN
  2017FTXR7S \quo{IT-MaTTerS} (Methods and Tools for Trustworthy Smart
  Systems)
}
\newcommand{\tnxunict}{
  and by the Progetto di Ateneo \quo{Piaceri}.
}
\newcommand{\tnxgncs}{
  The authors have also been partially supported by 
INdAM as members of GNCS (Gruppo Nazionale per il Calcolo Scientifico).
}
\newcommand{\ifempty}[3]{%
  \ifthenelse{\isempty{#1}}{#2}{#3}%
}
\newcommand{\mkfun}[4][\colorFun]{
  \newcommand{#2}[1][#4]{
    {#1\textsf{#3}}
    \ifempty{##1}{}{
      ({##1})}
  }
}
\newcommand{\mkuop}[4][\colorFun]{
  \newcommand{#2}[1][#4]{
    {#1\textsf{#3}}
    \ifempty{##1}{}{
      \, {##1}}
  }
}
\newcommand{\hidden}[1]{}
\newcommand{\cf}[2]{
  \fontsize{#1}{#1}{\selectfont{#2}}
}
\newcommand{\emi}[2]{
  \marginpar{\fcolorbox{red}{shadecolor}{\cf{#1}{{#2}}}}
}
\newcommand{\emic}[2]{\par
  \fcolorbox{red}{shadecolor}{\parbox{\linewidth}{ 
      \color{gray}
      \begin{description}
      \item[{\color{blue} #2}]{\sf #1}
      \end{description}}}
}
\newcommand{\emi}[2]{}
\newcommand{\emic}[2]{}{}
\newcommand{\sst}{\;\big|\;}
\newcommand{\dom}[1]{\operatorname{dom}({#1})}
\newcommand{\conf}[1]{\ensuremath{\langle {#1} \rangle}}
\newcommand{\qqand}[1][and]{\qquad\text{#1}\qquad}
\newcommand{\qand}[1][and]{\quad\text{#1}\quad}
\newcommand{\upd}[3]{{#1}[{#2} \mapsto {#3}]}
\newcommand{\quo}[1]{\lq\lq {#1}\rq\rq}
\def\finex{{\unskip\nobreak\hfil
\penalty50\hskip1em\null\nobreak\hfil{\Large $\diamond$}
\parfillskip=0pt\finalhyphendemerits=0\endgraf}}
\definecolor{shadecolor}{rgb}{1,0.99,0.9}
\definecolor{bg}{rgb}{0.95,0.95,0.95}
\def \bfr {\begin{color}{blue}} 
\def \efr {\end{color}}
\def \biv {\begin{color}{purple}} 
\def \eiv {\end{color}}
\def \beM {\begin{color}{orange}} 
\def \eeM {\end{color}}
\newcommand{\prestrict}[2]{{#1}\mid_{#2}}
\newcommand{\cat}{{\cdot}}
\newcommand{\tstates}[1]{\mathsf{T}(#1)}
\mkfun{\ptpof}{ptp}{{\gint[]}}
\newcommand{\cproj}[2]{%
    \ifstrequal{#2}{\HH}%
        {_{#2}\rfloor{#1}}%
        {\ifstrequal{#2}{\KK}{{#1}\lfloor_{#2}}{\PackageError{cproj}{Undefined option: #2}{}}}%
		}
\newcommandx{\nof}[3][1={\HH,\KK}, 2=\aCM, usedefault=@]{\mathsf{nof}_{#1}\ifempty{#3}{}{(#2, #3)}}
\newcommand{\Dual}[1]{\overline{#1}}
\newcommand{\comp}{{\asymp}}
\newcommand{\ssem}[1]{{\llbracket #1 \rrbracket}}
\newcommand{\gateway}[2]{\mathsf{gw}(#1,#2)}
\newcommand{\connect}[2]{\stackrel{\hspace{-3pt}#1\!\leftrightarrow\!#2\!}{}}
\newcommandx{\gspider}[2][1=T_1,2=T_2,usedefault=@]{\text{spider}(\ifempty{#1}{\_}{#1}, \ifempty{#2}{\_}{#2})}
\newcommandx{\glink}[3][1=\aG,2=T_1,3=T_2,usedefault=@]{\text{link}(\ifempty{#1}{\_}{#1}, \ifempty{#2}{\_}{#2},  \ifempty{#3}{\_}{#3})}
\newcommandx{\gintcomp}[2][1=I,2=J,usedefault=@]{\stackrel{\p[#1],\p[#2]}{\bowtie}}
\newcommand{\fillcolor}{orange!5}
\def\colorPtp{\color{blue}}
\def\colorFun{\color{NavyBlue}}
\def\colorOp{\color{OliveGreen}}
\def\colorNode{\color{cyan}}
\def\colorR{\color{OliveGreen}}
\def\colorE{\color{orange}}
\def\colorMsg{\color{BrickRed}}
\newcommand{\msg}[1][m]{\mathsf{\colorMsg{#1}}}
\newcommand{\msgset}{\mathcal{\colorMsg M}}
\newcommand{\lset}{\mathcal{L}}
\newcommand{\lint}{\lset_{\text{int}}}
\newcommand{\lact}{\lset_{\text{act}}}
\newcommand{\lio}{\lset_{\text{i/o}}}
\newcommand{\sset}{\mathcal{S}}
\newcommand{\ptp}[1][A]{\ensuremath{\mathsf{\colorPtp{\capitalisewords{#1}}}}}
\newcommand{\p}{\ptp}
\newcommand{\q}{{\ptp[B]}}
\newcommandx{\ggcommon}[3][1=\ptp,2={\aH},3={\aH'},usedefault=@]{f_{#1}}
\newcommandx{\opair}[2][1={\ae},2={\ae'},usedefault=@]{\conf{{#1},{#2}}}
\newcommandx{\hopair}[2][1={\aE},2={\aE'},usedefault=@]{\llparenthesis\, {#1},{#2}\, \rrparenthesis}
\newcommandx{\wf}[2][1={\aG},2={\aG'},usedefault=@]{wf({#1}, {#2})}
\newcommandx{\wb}[2][1={\aG},2={\aG'},usedefault=@]{wb({#1}, {#2})}
\newcommandx{\ws}[2][1={\aG},2={\aG'},usedefault=@]{ws({#1}, {#2})}
\newcommandx{\widx}[2][1={\aW},2={i},usedefault=@]{{#1}[{#2}]}
\newcommandx{\outop}[2][1=\gname,2={}]{{\colorOp{!}}^{{#1}{#2}}}
\newcommandx{\inop}[2][1=\gname,2={}]{{\colorOp{?}}^{{#1}{#2}}}
\newcommandx{\aout}[5][1={\p},2={\q},3={},4=m,5={},usedefault=@]{
  \achan[#1][#2] \outop[{#3}] {\msg[#4]}{#5}
}
\newcommandx{\ain}[5][1={\p},2={\q},3={},4=m,5={},usedefault=@]{
  \achan[#1][#2] \inop[{#3}] {\msg[#4]}{#5}
}
\newcommandx{\adep}[1][1={}]{
  \conf{ \aout[@][@][@][@][{#1}], \ain[@][@][@][@][{#1}]}
}
\newcommandx{\hproj}[2][1=\aH, 2=\ptp, usedefault=@]{
  \ifempty{#1}{}{{#1}}\ifempty{#2}{}{{^{\scriptscriptstyle @{#2}}}}
}
\newcommandx{\eproj}[2][1=\aE,2=\ptp, usedefault=@]{
  {{#1}}\ifempty{#2}{}{{^{\scriptscriptstyle @{#2}}}}
}
\newcommandx{\mklooptwo}[4][1=.5,2=1.5]{ 
  \node[ogate,above = #1 of {#3}] (entry#3) {};
  \pgfgetlastxy \xentry \yentry;
  \pgfmathtruncatemacro{\xentryrounded}{\xentry};
  \path (#4);
   \pgfgetlastxy \xexit \yexit;
  \pgfmathtruncatemacro{\xexitrounded}{\xexit};
  \path[line] (entry#3) -- (#3);
  \pgfmathsetmacro\tmpdiff{abs(\xentryrounded - \xexitrounded)}
  \path[line] (#4) -|  ($(#4)+(\tmpdiff,0)+(#2,0)$) |- (entry#3);
}
\newcommand{\apom}{r}
\newcommand{\aR}[1][R]{{\colorR{#1}}}
\newcommand{\aConf}{s}
\newcommandx{\detM}[1][1=\aCM,usedefault=@]{\Delta({#1})}
\newcommand{\gsubs}[2]{^{#1} / _{#2}}
\newcommandx{\gsubst}[3][1=\aM,2=q,3=q',usedefault=@]{
  \left \{\gsubs{#3}{#2} \right \}#1
}
\newcommand{\HH}{\mathsf{\color{blue}H}}
\newcommand{\KK}{\mathsf{\color{blue}K}}
\newcommand{\Set}[1]{\{\,#1\,\}}
\newcommandx{\cm}[2][1=\ptp, 2=\aM]{{#2}_{#1}}
\newcommandx{\achan}[2][1=A,2=B,usedefault=@]{{\ptp[#1]\,\ptp[#2]}}
\newcommand{\ptpset}{\mathcal{\colorPtp{P}}}
\newcommand{\pset}{\ptpset}
\newcommand{\oact}{\outop[]}
\newcommand{\iact}{\inop[]}
\newcommand{\tset}{\to}
\newcommandx{\cauttr}[5][1=p,2={\gint[][A][@][B]}, 3=q, 4=H, 5=K, usedefault=@]{
  {#1} \xrightarrow[{\ptp[#4]\ \ptp[#5]}]{#2} {#3}
}
\newcommand{\RS}[1][]{\mathcal{R}({#1})}
\newcommand{\trans}[2][{}]{\,\xrightarrow{#2}_{#1}\,}
\newcommandx{\acfsmout}[3][1=A,2=B,3=m,usedefault=@]{\achan[{#1}][{#2}] \oact {\msg[{#3}]}}
\newcommandx{\acfsmin}[3][1=A,2=B,3=m,usedefault=@]{\achan[{#1}][{#2}] \iact {\msg[{#3}]}}
\newcommandx{\fsaout}[2][1={\p},2={},usedefault=@]{
  \ptp[#1] \ \outop[]\ \msg[{#2}]
}
\newcommandx{\fsain}[2][1={\p},2={},usedefault=@]{
  \ptp[#1] \ \inop[]\ \msg[{#2}]
}
\newcommand{\linenumfontsize}{\@setfontsize{\linenumfontsize}{3pt}{3pt}}
\newcommand{\aG}{\mathsf{G}}
\newcommand{\gseqop}{{\colorOp ;}\,}
\newcommand{\gparop}{{\colorOp \ |\ }}
\newcommand{\gchoop}{{\colorOp \ +\ }}
\newcommand{\grecop}{{\colorOp *}}
\newcommand{\grecopp}{{\colorOp{@}}}
\newcommand{\gname}[1][i]{{\colorNode{\scriptstyle\textsf{#1}}}}
\newcommandx{\nmerge}[2][1={i},2={},usedefault=@]{
  \ifempty{#2}{
    \ifempty{#1}{\mu}{\gname[-{#1}]}
  }{-{#2}}
}
\mkfun{\esbj}{sbj}{\ae}
\newcommandx{\gnode}[2][1=i,2=\gint,usedefault=@]{
  {\ifempty{#1}{}{\colorNode{\gname[{#1}].}}} {#2}
}
\newcommandx{\gint}[4][1={i},2=A,3=\msg,4=B,usedefault=@]{
  \ptp[#2] {\colorOp \xrightarrow{\scriptscriptstyle\gname[#1]}} \ptp[#4] \colon {\msg[{#3}]}
}
\newcommandx{\gout}[4][1=\gname,2=\ptp,3=\msg,4={\ptp[C]},usedefault=@]{
  \achan[{#2}][{#4}] {\colorOp {\colorOp{!}}} {\msg[{#3}]}
}
\newcommandx{\gin}[4][1={},2=\ptp,3=\msg,4={\ptp[C]},usedefault=@]{
  \achan[{#2}][{#4}] {\colorOp {\colorOp{?}}} {\msg[{#3}]}
}
\newcommandx{\gseq}[3][1=i,2={\aG},3={\aG'},usedefault=@]{
  \gnode[{#1}][{#2} \gseqop {#3}]
}
\newcommandx{\gpar}[3][1={},2={\aG},3={\aG'},usedefault=@]{
  \gnode[{#1}][\ifempty{#1}{{#2} \gparop {#3}}{({#2} \gparop {#3})}]
}
\newcommandx{\gcho}[3][1={},2={\aG},3={\aG'},usedefault=@]{
  \gnode[{#1}][\ifempty{#1}{{#2} \gchoop {#3}}{\big({#2} \gchoop {#3}\big)}]
}
\newcommandx{\gchov}[3][1={},2={\aG},3={\aG'},usedefault=@]{
  \gnode[{#1}][\left(
  \begin{array}l
    \ifempty{#1}{{#2} \\ \gchoop \\ {#3}}{\!\!{#2} \\ \gchoop \\ {#3}}
  \end{array}\right)
  ]
}
\newcommandx{\grec}[3][1={},2={\aG},3={\p},usedefault=@]{
  \gnode[{#1}][\ifempty{#1}{\grecop {#2} \grecopp {#3}}{\big(\grecop {#2} \grecopp {#3}\big)}]
}
\newcommand{\getcentroid}[2]{
    \coordinate (tmpgatecoord) at (0,0);
    \foreach \n [count=\i] in {#1}{
      \path (\n);
      \coordinate (tmpgatecoord) at ($(tmpgatecoord) + (\n)$);
      \coordinate (#2) at ($1/\i*(tmpgatecoord)$);
    }
}
\tikzset{
  src/.style={draw,circle,fill=white,
    minimum size=2mm,
    inner sep=0pt},
  sink/.style={draw,circle,double,fill=white,
    minimum size=1.5mm,
    inner sep=0pt},
  node/.style={draw,circle,fill=black,
    minimum size=2mm,
    inner sep=0pt},
  %
  block/.style = {rectangle, draw=gray, align=center, fill=orange!25, rounded corners=0.1cm,
    minimum size=5mm, inner sep=2pt},
  prenode/.style = {minimum size=9pt,inner sep=2pt, font=\Large},
  bblock/.style = {rectangle, draw=blue!50, opacity=.5, line width=1pt, align=center, fill=white, rounded corners=0.1cm,
    minimum size=7mm, inner sep=2pt},
  prenode/.style = {minimum size=9pt,inner sep=2pt, font=\Large},
  agate/.style={draw, rectangle,
    minimum size=3mm,
    inner sep=0pt,
    fill=orange!25,
    postaction={path picture={%
        \draw[red]
        ([yshift=\gatedistanceinand]path picture bounding box.south) --
        ([yshift=-\gatedistanceinand]path picture bounding box.north) ;}}},
  ogate/.style = {
    diamond, draw, fill=orange!25,
    minimum size=4mm,
    inner sep=0pt,
    postaction={path picture={%
        \draw[red]
        ([yshift=\gatedistancein]path picture bounding box.south) -- ([yshift=-\gatedistancein]path picture bounding box.north)
        ([xshift=-\gatedistancein]path picture bounding box.east) -- ([xshift=\gatedistancein]path picture bounding box.west)
        ;}}},
  %
  anygate/.style = {circle, draw, fill=white,
    minimum size=4mm,
    inner sep=0pt,
    postaction={path picture={%
        \draw[black]
        ([xshift=-\gatedistancein,yshift=\gatedistancein]path picture bounding box.south east) --
        ([xshift=\gatedistancein,yshift=-\gatedistancein]path picture bounding box.north west)
        ([xshift=-\gatedistancein,yshift=-\gatedistancein]path picture bounding box.north east) --
        ([xshift=\gatedistancein,yshift=\gatedistancein]path picture bounding box.south west)
        ;}}},
  %
  elli/.style = {draw,densely dotted,-},
  %
  line/.style = {draw,->, rounded corners=0.07cm,>=latex},
  nline/.style = {draw,semithick, ->},
  pline/.style = {draw,->,>=latex},
  node distance=1cm and 0.7cm,
  baseline=(current  bounding  box.center),
}
\tikzset{
  graphaxiom/.style={
    node distance=0.7 and 0.7cm,
    scale=.5,
    transform shape
  }
}
\tikzset{
  hgsem/.style={
    draw,
    node distance=2cm and 1cm,
    transform shape,
    smooth,
    every node/.style = {font=\sffamily\bfseries}
  }
}
\tikzset{
  hgstyle/.style={
    src color={#1},
    tgt color={#1},
    centroid color={#1},
    centroid label={#1},
    centroid name={#1},
    centroid radius={#1},
    centroid ratio={#1},
    xoffset={#1},
    yoffset={#1},
    xsrcoffset={#1},
    ysrcoffset={#1},
    xtgtoffset={#1},
    ytgtoffset={#1},
    font={#1},
    centroid angle={#1},
    centroid tolerance={#1}
  },
  src color/.store in = \hgsrccol,
  tgt color/.store in = \hgtgtcol,
  centroid color/.store in =\hgfillcolor,
  centroid label/.store in =\hglabel,
  centroid name/.store in =\hgname,
  centroid radius/.store in = \hgradius,
  centroid ratio/.store in = \hgratio,
  xoffset/.store in =\hgxoffset,
  yoffset/.store in =\hgyoffset,
  xsrcoffset/.store in =\hgxsrcoffset,
  ysrcoffset/.store in =\hgysrcoffset,
  xtgtoffset/.store in =\hgxtgtoffset,
  ytgtoffset/.store in =\hgytgtoffset,
  centroid angle/.store in =\hgangle,
  centroid tolerance/.store in =\hgtolerance,
  src color = black,
  tgt color = black,
  centroid color = orange!40,
  centroid label={},
  centroid name={dummycentroid},
  centroid radius = .7pt,
  centroid ratio = .35,
  xoffset = 0,
  yoffset = 0,
  xsrcoffset = 0,
  ysrcoffset = 0,
  xtgtoffset = 0,
  ytgtoffset = 0,
  font=\sffamily\scriptsize,
  centroid angle=0,
  centroid tolerance=10pt
}
\newcommandx{\mkhg}[5][1={},4={},5={},usedefault=@]{
  \begingroup
  \tikzset{#1}
  \StrCount{#2,}{,
  }[\l] 
  \StrCount{#3,}{,}[\m] 
  \ifthenelse{\l = 1 \AND \m = 1}{
    \ifempty{#4}{
      \ifempty{#5}{
        \path[hgsem, ->, >=stealth', shorten >=1pt] (#2) -- (#3);
      }{
        \path[hgsem, ->, >=stealth', shorten >=1pt] (#2) #5 (#3);
      }
    }{
      \ifempty{#5}{
        \path[hgsem, ->, >=stealth', shorten >=1pt, #4] (#2) -- (#3);
      }{
        \path[hgsem, ->, >=stealth', shorten >=1pt, #4] (#2) #5 (#3);
      }
    }
  }{
    \coordinate (srcoffset) at (\hgxsrcoffset,\hgysrcoffset);
    \coordinate (tgtoffset) at (\hgxtgtoffset,\hgytgtoffset);
    \getcentroid{#2}{srccentroid};
    \getcentroid{#3}{tgtcentroid};
    \node[label={left:\hglabel}] (\hgname) at ($(srccentroid)!{1-\hgratio}!\hgangle:(tgtcentroid) + (\hgxoffset,\hgyoffset)$) {};
    \pgfgetlastxy \xc \yc;
    \pgfmathtruncatemacro{\xcontrol}{\xc};
    \pgfmathtruncatemacro{\ycontrol}{\yc};
    \foreach \n in {#2}{
      \path (\n);
      \pgfgetlastxy \xntmp \yntmp;
      \pgfmathtruncatemacro{\xn}{\xntmp};
      \pgfmathtruncatemacro{\yn}{\yntmp};
      \pgfmathsetmacro\xtmpdiff{abs(\xn - \xcontrol + \hgxsrcoffset)};
      \pgfmathsetmacro\ytmpdiff{abs(\yn - \ycontrol + \hgytgtoffset)};
      \ifdim \xtmpdiff pt > \hgtolerance
      \ifempty{#4}{
        \path[hgsem, \hgsrccol] (\n) .. controls ($(srccentroid.center) + (srcoffset)$) .. (\hgname.center);
      }{
        \path[hgsem, \hgsrccol] (\n) .. controls ($(srccentroid.center) + (srcoffset)$) .. (\hgname.center);
      }
      \else
      \ifempty{#4}{
        \path[hgsem, \hgsrccol] (\n) -- (\hgname.center);
      }{
        \path[hgsem, \hgsrccol, #4] (\n) -- (\hgname.center);
      }
      \fi
    }
    \foreach \n in {#3}{
      \path (\n);
      \pgfgetlastxy \xntmp \yntmp;
      \pgfmathtruncatemacro{\xn}{\xntmp};
      \pgfmathtruncatemacro{\yn}{\yntmp};
      \pgfmathsetmacro\xtmpdiff{abs(\xn - \xcontrol)};
      \pgfmathsetmacro\ytmpdiff{abs(\yn - \ycontrol)};
      \ifdim \xtmpdiff pt > \hgtolerance
      \ifempty{#4}{
        \path[hgsem, ->, >=stealth', shorten >=1pt, \hgtgtcol] (\hgname.center) .. controls (tgtcentroid.center) and ($(tgtcentroid.center) + (tgtoffset)$) .. (\n);
      }{
        \path[hgsem, ->, >=stealth', shorten >=1pt, \hgtgtcol,#4] (\hgname.center) .. controls (tgtcentroid.center) and ($(tgtcentroid.center) + (tgtoffset)$) .. (\n);
      }
      \else
      \ifempty{#4}{
        \path[hgsem, ->, >=stealth', shorten >=1pt, \hgtgtcol] (\hgname.center) --  (\n);
      }{
        \path[hgsem, ->, >=stealth', shorten >=1pt, \hgtgtcol] (\hgname.center) --  (\n);
      }
      \fi
    }
    \fill[\hgfillcolor] (\hgname) circle [radius=\hgradius];
  }
  \endgroup
}
\newcommandx{\hgordeq}[1][1={\aH},usedefault=@]{\sqsubseteq_{#1}}
\newcommandx{\gintsem}[4][4=.5]{
  \tikz[hgsem,scale=#4,every node/.style={font=\scriptsize}]{
    \node (out) {$\aout[{#1}][{#2}][][{\msg[{#3}]}]$};
    \node[below = 20pt of out] (in) {$\ain[{#1}][{#2}][][{\msg[{#3}]}]$};
    \mkhg{out}{in};
  }
}
\newcommandx{\gsem}[2][1={\aG},2={},usedefault=@]{[\![ {#1} ]\!]_{#2}}
\newcommandx{\rbot}{\text{undef}}
\newcommandx{\rtrs}[1][1={\aH},usedefault=@]{{#1}^{\star}}
\newcommandx{\gord}[1][1={\aG},usedefault=@]{\leq_{#1}}
\newcommandx{\gordeq}[1][1={\aG},usedefault=@]{\leq_{#1}}
\mkfun{\cause}{cs}{}
\mkfun{\effect}{ef}{}
\mkfun{\participants}{ptps}{}
\newcommandx{\aW}{w}
\newcommandx{\rlang}{\mathcal{L}}
\newcommand{\gfun}[1]{\ensuremath{\mathsf{\colorFun #1}}}
\mkfun{\eact}{\gfun{act}}{}
\mkfun{\enode}{\gfun{cp}}{}
\mkfun{\cp}{\gfun{cp}}{}
\mkuop{\rmax}{\gfun{max}}{\aH}
\mkuop{\rmin}{\gfun{min}}{\aH}
\mkuop{\rMAX}{\gfun{lst}}{\aH}
\mkuop{\rMIN}{\gfun{fst}}{\aH}
\newcommandx{\rseq}[2][1=\aG,2={\aG'},usedefault=@]{\gfun{seq}({#1},{#2})}
\newcommandx{\rpar}[2][1=\aG,2={\aG'},usedefault=@]{\gfun{par}({#1},{#2})}
\newcommandx{\gproj}[2][1=\aG,2=\ptp]{{#1}\!\downarrow_{#2}}
\newcommandx{\gprojPN}[2][1=\aG,2=\ptp]{{#1}\!\downarrow_{#2}^{\mathsf{P\hspace{-0.5pt}N}}}
\newcommandx{\cinit}[1][1={\aQzero},usedefault=@]{{#1}}
\newcommandx{\cfinal}[1][1={q_e},usedefault=@]{{#1}}
\newcommandx{\geproj}[4][1=\aG,2=\ptp,3=\cinit,4=\cfinal,usedefault=@]{
  {#1}\downarrow_{#2}^{{#3},{#4}}
}
\newcommand*{\StrikeThruDistance}{0.15cm}%
\tikzset{strike thru arrow/.style={
    decoration={markings, mark=at position 0.5 with {
        \draw [blue, thick,-] 
            ++ (-\StrikeThruDistance,-\StrikeThruDistance) 
            -- ( \StrikeThruDistance, \StrikeThruDistance);}
    },
    postaction={decorate},
}}
\newcommandx{\ich}[1][1={\aG},usedefault=@]{{#1}^{\oplus}}
\newcommandx{\ichedges}[2][1={\aG},2={\gname},usedefault=@]{{#1}^{\oplus}({#2})}
\newcommandx{\parts}[1]{2^{#1}}
\newcommandx{\actch}{c}
\newcommandx{\soundactch}[2][1={\aG},2={\actch},usedefault=@]{{#1} \,\circledR\, {#2}}
\newcommandx{\rOnActch}[2][1={\aG},2={\actch},usedefault=@]{{#1} \setminus {#2}}
\newcommandx{\rOnActchClean}[2][1={\aG},2={\actch},usedefault=@]{{#1} \circledR {#2}}
\newcommandx{\rAllEvents}[1][1={\aG},usedefault=@]{\mathit{dom}(#1)}
\newcommand{\AV}{\mathcal{V}}
\newcommand{\aH}{H}
\newcommandx{\hgvertex}[2][1=\al,2=\gname,usedefault=@]{{#1}_{\textcolor{red}{[{#2}]}}}
\newcommand{\aE}{{\colorE E}}
\renewcommand{\ae}[1][e]{{\colorE{#1}}}
\newcommand{\al}[1][\lambda]{{\colorE{#1}}}
\newcommandx{\hyedge}[1]{\{#1\}}
\newcommandx{\rdiv}[2][1=\gcho,2=\ptp,usedefault=@]{
  \gfun{div}_{#2}(#1)
}
\newcommandx{\rrdiv}[5][1={\aG},2={\aG'},3={\AV},4={\AV'},5=\ptp,usedefault=@]{
  \gfun{div}^{#3,#4}_{#5}(#1,#2)
}
\newcommandx{\pdiv}[3][1={\apom_1},2={\apom_2},3={\apom},usedefault=@]{
  \gfun{div}_{#3}(#1,#2)
}
\newcommandx{\pfork}[3][1={\apom_1},2={\apom_2},3={\apom},usedefault=@]{
  \gfun{fork}_{#3}(#1,#2)
}
\newcommandx{\mkint}[6][3=i,4=\p,5=\msg,6=\q,usedefault=@]{
  \node[bblock,{#1}] (#2) {$\gint[#3][#4][#5][#6]$};
}
\newcommandx{\mkgraph}[3][1=.5cm]{
  \node[source,above = #1 of {#2}] (src#2) {};
  \node[sink,below  = #1 of {#3}] (sink#3) {};
  \path[line] (src#2) -- (#2);
  \path[line] (#3) -- (sink#3);
}
\newcommandx{\mkloop}[4][1=.5,2=1.5]{
  \node[ogate,above = #1 of {#3}] (entry#3) {};
  \pgfgetlastxy \xentry \yentry;
  \pgfmathtruncatemacro{\xentryrounded}{\xentry};
  \node[ogate,below  = #1 of {#4}] (exit#4) {};
  \pgfgetlastxy \xexit \yexit;
  \pgfmathtruncatemacro{\xexitrounded}{\xexit};
  \path[line] (entry#3) -- (#3);
  \path[line] (#4) -- (exit#4);
  \pgfmathsetmacro\tmpdiff{abs(\xentryrounded - \xexitrounded)}
  \path[line] (exit#4) -|  ($(exit#4)+(\tmpdiff,0)+(#2,0)$) |- (entry#3);
}
\newcommandx{\mkfork}[4][2=gatenode,3=i,4=.6,usedefault=@]{
  \mkgatebegin{#1}[{\gname[#3]}][agate][#4]{#2}
}
\newcommandx{\mkbranch}[4][2=gatenode,3=i,4=.6,usedefault=@]{
  \mkgatebegin{#1}[{\gname[#3]}][ogate][#4]{#2}
}
\newcommandx{\mkgatebegin}[5][2={},3=ogate,4=.5]{
  %
  \coordinate (gatecord) at (0,0);
  \foreach \n [count=\i] in {#1}{
    \pgfgetlastxy \xc \yc;
    \path (\n);
    \pgfgetlastxy \xn \yn;
    \coordinate (gatecord) at ($(gatecord) + (\xn,0)$);
    \coordinate (gatecord) at ($1/\i*(gatecord)$);
    \ifdim \yn < \yc
    \node (max) at (0,\yc) {};
    \else
    \node (max) at (0,\yn) {};
    \fi
  }
  \coordinate (gatecord) at ($(gatecord) + (0,#4) + (max)$);
  \node[#3,label={below:$#2$}] (#5) at (gatecord) {};
  \pgfgetlastxy{\xgate}{\ygate};
  \pgfmathtruncatemacro{\xgateround}{\xgate};
  \StrCount{#1,}{,}[\l] 
  \ifnum \l < 2 {\errmessage{#1 argument should be a comma-separated list of lenght >= 2}}
  \else{
    \foreach \n in {#1}{
      \path (\n);
      \pgfgetlastxy{\xnode}{\ynode};
      \pgfmathtruncatemacro{\xnround}{\xnode};
      \pgfmathsetmacro\tmpdiff{abs(\xnround - \xgateround)}
      \ifdim \tmpdiff pt > 1 pt \path[line] (#5) -| (\n);
      \else
        \path[line] (#5) -- (\n);
      \fi
    }
  }
  \fi
}
\newcommandx{\gorthopath}[4][3=|-,4=-2pt,usedefault=@]{
  \path #1;
  \pgfgetlastxy{\xabove}{\yabove};
  \pgfmathtruncatemacro{\xaboveround}{\xabove};
  \path #2;
  \pgfgetlastxy{\xbelow}{\ybelow};
  \pgfmathtruncatemacro{\xnround}{\xbelow};
  \pgfmathsetmacro\tmpdiff{abs(\xnround - \xaboveround)}
  \ifdim \tmpdiff pt > 1 pt \path[line] #2 #3 #1;
  \else
  \path[line] #2 -- #1;
  \fi
}
\newcommandx{\mkmerge}[4][2=gatenode,3=i,4=0,usedefault=@]{\mkgateend{#1}[{\ifempty{#3}{}{\nmerge[#3]}}][ogate][#4]{#2}}
\newcommandx{\mkjoin}[4][2=gatenode,3=i,4=0,usedefault=@]{\mkgateend{#1}[{\ifempty{#3}{}{\nmerge[#3]}}][agate][#4]{#2}}
\newcommandx{\mkgateend}[5][2={},3=ogate,4=.5,usedefault=@]{
  %
  \coordinate (gatecord) at (0,0);
  \foreach \n [count=\i] in {#1}{
    \pgfgetlastxy \xc \yc;
    \path (\n);
    \pgfgetlastxy \xn \yn;
    \coordinate (gatecord) at ($(gatecord) + (\xn,0)$);
    \coordinate (gatecord) at ($1/\i*(gatecord)$);
    \ifdim \yn > \yc
    \node (min) at (0,\yc) {};
    \else
    \node (min) at (0,\yn) {};
    \fi
  }
  \coordinate (gatecord) at ($(gatecord) - (0,#4) + (min)$);
  \node[#3,label={above:$#2$}] (#5) at (gatecord) {};
  \pgfgetlastxy{\xgate}{\ygate};
  \pgfmathtruncatemacro{\xgateround}{\xgate};
  \StrCount{#1,}{,}[\l] 
  \ifnum \l < 2 {\errmessage{#1 argument should be a comma-separated list of lenght >= 2}}
  \else{
    \foreach \n in {#1}{
      \path (\n);
      \pgfgetlastxy{\xnode}{\ynode};
      \pgfmathtruncatemacro{\xnround}{\xnode};
      \pgfmathsetmacro\tmpdiff{abs(\xnround - \xgateround)}
      \ifdim \tmpdiff pt > 1 pt \path[line] (\n) |- (#5);
      \else
        \path[line] (\n) -- (#5);
      \fi
    }
  }
  \fi
}
\newcommand{\gatedistancein}{3pt}
\newcommand{\gatedistanceinand}{2pt}
\tikzset{
  src/.style={draw,circle,fill=white,
    minimum size=2mm,
    inner sep=0pt
  },
  sink/.style={draw,circle,double,fill=white,
    minimum size=1.5mm,
    inner sep=0pt
  },
  node/.style={draw,circle,fill=black,
    minimum size=2mm,
    inner sep=0pt
  },
  source/.style={draw,circle,fill=white,
    minimum size=3mm,
    inner sep=0pt
  },
  sink/.style={draw,circle,double,fill=white,
    minimum size=3mm,
    inner sep=0pt
  },
  block/.style = {rectangle, draw=gray, align=center, fill=orange!25, rounded corners=0.1cm,
    minimum size=5mm, inner sep=2pt},
  prenode/.style = {minimum size=9pt,inner sep=2pt, font=\Large},
  bblock/.style = {rectangle, draw=blue!50, opacity=.7, line width=.5pt, align=center, fill=white, rounded corners=0.1cm,
    minimum size=4mm, inner sep=1pt},
  prenode/.style = {minimum size=9pt,inner sep=2pt, font=\Large},
  agate/.style={draw, rectangle,
    minimum size=3mm,
    inner sep=0pt,
    fill=orange!25,
    label={[red]center:$\mid$}
  },
  ogate/.style = {
    diamond, draw, fill=orange!25,
    minimum size=4mm,
    inner sep=0pt,
    label={[red]center:$+$}
  },
  lgate/.style = {
    diamond, draw, fill=orange!25,
    minimum size=4mm,
    inner sep=0pt,
    label={[red]center:$\circlearrowleft$}
    },
  altogate/.style = {
    diamond, draw,
    minimum size=4mm,
    inner sep=0pt,
    postaction={path picture={%
        \draw
        ([yshift=\gatedistancein]path picture bounding box.south) -- ([yshift=-\gatedistancein]path picture bounding box.north)
        ([xshift=-\gatedistancein]path picture bounding box.east) -- ([xshift=\gatedistancein]path picture bounding box.west)
        ;}}},
  altgate/.style={draw, rectangle,
    minimum size=3mm,
    inner sep=0pt,
    postaction={path picture={%
        \draw
        ([yshift=\gatedistanceinand]path picture bounding box.south) --
        ([yshift=-\gatedistanceinand]path picture bounding box.north) ;}}},
  anygate/.style = {circle, draw, fill=white,
    minimum size=4mm,
    inner sep=0pt,
    postaction={path picture={%
        \draw[black]
        ([xshift=-\gatedistancein,yshift=\gatedistancein]path picture bounding box.south east) --
        ([xshift=\gatedistancein,yshift=-\gatedistancein]path picture bounding box.north west)
        ([xshift=-\gatedistancein,yshift=-\gatedistancein]path picture bounding box.north east) --
        ([xshift=\gatedistancein,yshift=\gatedistancein]path picture bounding box.south west)
        ;}}
  },
  smallglobal/.style={
        node distance=1cm and 0.8cm, semithick, scale=0.8, every node/.style={transform shape}
  },
  elli/.style = {draw,densely dotted,-},
  %
  line/.style = {draw,->, rounded corners=0.07cm,>=latex},
  nline/.style = {draw,semithick, ->},
  pline/.style = {draw,->,>=latex},
  node distance=1cm and 0.7cm,
  baseline=(current  bounding  box.center),
  local/.style={rectangle, draw, fill=\fillcolor, drop shadow,
    text centered, rounded corners, minimum height=5em
  },
  bigar/.style={
    draw,very thick, ->
  },
  process/.style={rectangle, draw=gray, fill=\fillcolor, drop shadow,
    text centered, minimum height=5em,text=gray
  },
  choreo/.style={rectangle, draw, fill=\fillcolor, drop shadow,
    text centered, rounded corners, minimum height=5em
  },
  mycfsm/.style={
        font=\footnotesize,
        initial where=above,
        ->,>=stealth,auto,
		  node distance=1.5cm,
        scale=.85,
		  every node/.style={transform shape},
        every state/.style={cnode, inner sep=1pt, transform shape},
		  every edge/.style={carrow},
        baseline=(current  bounding  box.center),
        initial text={}
  },
  machinecloud/.style={
    cloud, cloud puffs=10, cloud ignores aspect, minimum height=.1cm, minimum width=2cm, draw
  },
  fitting node/.style={
    inner sep=0pt,
    fill=none,
    draw=none,
    reset transform,
    fit={(\pgf@pathminx,\pgf@pathminy) (\pgf@pathmaxx,\pgf@pathmaxy)}
  },
  mypetri/.style={
    font=\footnotesize,
    baseline=(current  bounding  box.center)
  },
  silentrans/.style = {rectangle, draw=black, align=center, fill=black,
    minimum height=1pt,
    minimum width=15pt,
    inner sep=1.5pt
  },
  reset transform/.code={\pgftransformreset},
  tmtape/.style={draw,minimum size=1.2cm}
}
\newcommand{\gunlessop}{\mbox{\colorOp\tiny\tt unless}}
\newcommandx{\gtry}[5][1=\gname,2={\aG_1 \gchoop \cdots \gchoop \aG_n},3=\gin,4=\gout,5={j},usedefault=@]{
  \def\foo{\gtryop\ {#2} \ \gcatchop\ {#3} {\colorOp \Rightarrow} {#4} {\colorOp \bullet} {\gname[{#5}]}}
  \gnode[{#1}][{\ifempty{#1} {\foo } { \big(\foo \big) }}]
}
\newcommandx{\gtrycatch}[4][1=\gname,2={\aG},3=\gin,4={\aG'},usedefault=@]{
  \def\foo{\gtryop\ {#2} \ \gcatchop\ {#3} \gdoop\ {#4}}
  \gnode[{#1}][{\ifempty{#1} {\foo} {\big( \foo \big) }}]
}
\def\colorGuard{\color{cyan}}
\newcommand{\aguard}{{\colorGuard \phi}}
\newcommandx{\agG}[2][1={\aG},2=\aguard]{{#1} \ifempty{#2}{}{\ \gunlessop\ {#2}}}
\newcommandx{\grcho}[4][1=\gname,2={\agG},3={\agG[\aG'][\aguard']},4={\cdots},usedefault=@]{
  \def\foo{{#2} {\ \ifempty{#4}{\gchoop}{\gchoop \ {#4}\  \gchoop}\ } {#3}}
  \gnode[{#1}][\ifempty{#1}{\foo}{\big( \foo \big)}]
}
\newcommandx{\ggprefix}[3][1=\ptp,2={\aR},3={\aR'},usedefault=@]{f_{#1}} 
\newcommand{\aconfigfn}{\chi}
\newcommand{\aconfig}{\ell}
\newcommand{\lstates}{\statemap}
\newcommandx{\sysconfig}[3][1=\lstates,2=\aconfigfn,3={},usedefault=@]{
  \conf{ {#1},{#2} \ifempty{#3}{}{, #3} }
}
\newcommand{\sysctxfn}[1][]{\gamma_{#1}}
\newcommandx{\sysctx}[2][1=\aQ,2={},usedefault=@]{({#1},\sysctxfn[{#2}])}
\newcommandx{\alog}[4][1=\msg,2=q,3=\gname,4=t,usedefault=@]{\big({#1},{#2},{#3},{#4}\big)}
\newcommand{\aCM}{M}\newcommand{\aM}{\aCM}
\newcommand{\aQ}{Q}
\newcommandx{\aQzero}[1][1=,usedefault=@]{
  {\ifempty{#1}{q_0}{q_{0#1}}}
}
\newcommand{\badbranches}[1][]{\beta\ifempty{#1}{}{\big({#1}\big)}}
\newcommand{\aTrs}{\tset}
\newcommandx{\guardedaction}[2][1=\al,2=\aguard,usedefault=@]{
  {#1} \ifempty{#2}{}{/} {#2}
}
\newcommandx{\atrM}[4][1=q,2=\al,3={\hat q,\hat \al, \aguard},4=q',usedefault=@]{
  {#1} \xrightarrow[{#3}]{\guardedaction[{#2}][]} {{#4}}
}
\newcommandx{\atrS}[5][
  1={\sysconfig[@][@][\badbranches]},
  2=\al,
  3=\aguard,
  4={\sysconfig[\lstates'][\aconfigfn'][\badbranches]},
  5=\sysctx,usedefault=@
]{
  {#1} \xRightarrow{\qquad} {{#4}}
}
\newcommandx{\arevtrS}[2][
  1={\sysconfig[@][@][\badbranches]},
  2={\sysconfig[\lstates'][\aconfigfn'][\badbranches']},
  usedefault=@
]{
  {#1} \rightsquigarrow {#2}
}
\newcommand{\aCS}[1][S]{\mathsf{#1}}
\newcommandx{\enables}[2][1=\aconfigfn,2=\aguard,usedefault=@]{{#1} \vdash {#2}}
\newcommandx{\gprojfn}[5][1=\aG,2=\ptp,3=\cinit,4=\cfinal,5={},usedefault=@]{
  \mathbf{proj}_{#2}({#1},{#3},{#4}\ifempty{#5}{}{,{#5}})
}
\newcommandx{\rbp}[3][1=\aG,2=\aconfigfn,3=\achan,usedefault=@]{\mathtt{RBP}_{{#1},{#2}}\ifempty{#3}{}{\big({#3}\big)}}
\newcommand{\apseudoCFSM}{\mathtt{M}}
\newcommandx{\pseudoseq}[2][1=\apseudoCFSM,2=\apseudoCFSM',usedefault=@]{{#1}  ; {#2}}
\newcommandx{\pseudoCFSM}[4][1=\aQ,2=\aQzero,3=\cfinal,4=\aTrs,usedefault=@]{(#1 \ ; #2 \ ; #3 \ ; #4)}
\newcommandx{\markt}[3][1=\hat{\al},2=\hat{q},3=\aguard,usedefault=@]{\%\big({#1} , {#2}, {#3}\big)}
\newcommandx{\borderfn}[2][1=\aconfig,2=\aloop,usedefault=@]{
  \mathsf{border}_{{#2}}\ifempty{#1}{}{\big({#1}\big)}
}
\newcommandx{\ggvisually}[4][1=5pt,2=15pt,3=5pt,4=5pt,usedefault=@]{
  \def\dist{\hspace{1.0cm}}
  \tikzset{
    mycallout/.style={
      fill=green!10, opacity=.5, overlay, align=center,
      cloud callout, cloud puffs=15, aspect=1.9, cloud ignores aspect, cloud puff arc=100, shading=ball
    }
  }
  $\begin{array}{c@{\dist}c@{\dist}c@{\dist}c@{\dist}c}
     \begin{tikzpicture}[node distance=0.9cm and 0.4cm, every node/.style={scale=.7,transform shape}]
       \mkint{}{int}[]
       \mkgraph{int}{int};
       \node[mycallout, above = .3cm of srcint, xshift=1cm, callout absolute pointer={(srcint.east)}] {source node};
       \node[mycallout, below = .3cm of sinkint, xshift=-1cm, callout absolute pointer={(sinkint.west)}] {sink node};
     \end{tikzpicture}
     &
     \begin{tikzpicture}[node distance=.9cm and 0.4cm, every node/.style={scale=.7,transform shape}]
       \node[bblock] at (0,0) (g) {$\aG$};
       \node[node, below=of g] (s1) {};
       \node[bblock, below=of s1] (gp) {$\aG'$};
       \path[line,dotted] (g) -- (s1);
       \path[line,dotted] (s1) -- (gp);
     \end{tikzpicture}
     &
     \begin{tikzpicture}[node distance=.4cm and 0.4cm, every node/.style={scale=.7,transform shape}]
       \node[bblock] at (-.7,0) (g) {$\aG$};
       \node[bblock] at (.7,0)  (gp) {$\aG'$};
       \node[node, above=of g] (f) {};
       \node[node, below=of g] (j) {};
       \node[node, above=of gp] (fp) {};
       \node[node, below=of gp] (jp) {};
       \path[line,dotted] (f) -- (g);
       \path[line,dotted] (g) -- (j);
       \path[line,dotted] (fp) -- (gp);
       \path[line,dotted] (gp) -- (jp);
       \mkfork{f,fp}[fork][][#1];
       \mkjoin{j,jp}[join][][#2];
       \mkgraph{fork}{join};
       \node[mycallout, above = .3cm of fork, xshift=1cm, callout absolute pointer={(fork.east)}] {fork gate};
       \node[mycallout, above = -.9cm of join, xshift=-1cm, callout absolute pointer={(join.west)}] {join gate};
     \end{tikzpicture}
     &
     \begin{tikzpicture}[node distance=.4cm and 0.4cm, every node/.style={scale=.7,transform shape}]
       \node[bblock] at (-.7,0) (g) {$\aG$};
       \node[bblock] at (.7,0)  (gp) {$\aG'$};
       \node[node, above=of g] (f) {};
       \node[node, below=of g] (j) {};
       \node[node, above=of gp] (fp) {};
       \node[node, below=of gp] (jp) {};
       \path[line,dotted] (f) -- (g);
       \path[line,dotted] (g) -- (j);
       \path[line,dotted] (fp) -- (gp);
       \path[line,dotted] (gp) -- (jp);
       \mkbranch{f,fp}[fork][][#3];
       \mkmerge{j,jp}[join][][#4];
       \mkgraph{fork}{join};
       \node[mycallout, above = .3cm of fork, xshift=1cm, callout absolute pointer={(fork.east)}] {branch gate};
       \node[mycallout, above = -.9cm of join, xshift=-1cm, callout absolute pointer={(join.west)}] {merge gate};
     \end{tikzpicture}
     &
     \begin{tikzpicture}[node distance=0.4cm and 0.4cm, every node/.style={scale=.7,transform shape}]
       \node[bblock] (g) {$\aG$};
       \node[node, above=.5cm of g] (f) {};
       \node[node, below=.5cm of g] (j) {};
       \path[line,dotted] (f) -- (g);
       \path[line,dotted] (g) -- (j);
       \mkloop[.5][@]{f}{j}[];
       \mkgraph[.4cm]{entryf}{exitj};
       \node[mycallout, above = .2cm of entryf, xshift=1.3cm, callout absolute pointer={(entryf.east)}] {loop entry};
       \node[mycallout, above = -.7cm of exitj, xshift=-1.3cm, callout absolute pointer={(exitj.west)}] {loop exit};
     \end{tikzpicture}
     \\
     \text{\scriptsize interaction}
     &
     \text{\scriptsize sequential}
     &
     \text{\scriptsize parallel}
     &
     \text{\scriptsize branching}
     &
     \text{\scriptsize iteration}
   \end{array}$
}
\title{
  On Composing Communicating Systems\thanks{
	 \tnxbehapi \tnxitmatters \tnxunict \tnxgncs The authors thanks the reviewers for their helpful comments and also Mariangiola Dezani for her support.
  }
  \iftr \\{Full Version} \fi
}
\author{
  Franco Barbanera\institute{Dept. of Mathematics and Computer Science, University of Catania (Italy)}
  \and
  Ivan Lanese \institute{Focus Team, University of Bologna/INRIA (Italy)
  }
  \and Emilio Tuosto \institute{Gran Sasso Science Institute (Italy)}
  }
\begin{document}


\maketitle              

\begin{abstract}
  Communication is an essential element of modern software, yet
  programming and analysing communicating systems are difficult tasks.
  A reason for this difficulty is the lack of compositional mechanisms
  that preserve relevant communication properties.

  This problem has been recently addressed for the well-known model of
  \emph{communicating systems}, that is sets of components consisting
  of finite-state machines capable of exchanging messages.
  The main idea of this approach is to take two systems, select a
  participant from each of them, and derive from those participants a
  pair of coupled gateways connecting the two systems.
  More precisely, a message directed to one of the gateways is
  forwarded to the gateway in the other system, which sends it to the
  other system.
  It has been shown that, under some suitable \emph{compatibility}
  conditions between gateways, this composition mechanism preserves
  deadlock freedom for asynchronous as well as symmetric synchronous
  communications (where sender and receiver play the same part in
  determining which message to exchange).

  This paper considers the case of \emph{asymmetric synchronous
	 communications} where senders decide independently which message
  should be exchanged.
  We show here that preservation of lock freedom requires
  sequentiality of gateways, while this is not needed for preservation of
  either deadlock freedom or strong lock freedom.
\end{abstract}

\section{Introduction}\label{sec:intro}
Communication is an essential constitutive element of modern software
due to the fact that applications are increasingly developed in
distributed architectures (e.g., service-oriented architectures,
microservices, cloud, etc.).
In practice, APIs and libraries featuring different communication
mechanisms are available for practically any programming language.

Reasoning on and developing communicating systems are difficult
endeavours.
Indeed, several models have been used to study interactions between
systems (e.g., process algebras, transition systems, Petri nets,
logical frameworks, etc.).
The so-called \emph{business logic}, necessary to determine
\emph{what} has to be communicated, needs to be complemented with the
so-called \emph{application level protocol} specifying \emph{how}
information spreads across a system.
Conceptual and programming errors may occur in the realisation of
application level protocols.
For instance, it may happen that some components in a system are
prevented to communicate because all the expected partners terminated
their execution (deadlock).
Other typical errors occur when a system is not lock-free, that is
when some components cannot progress because all their partners are
perpetually involved in other communications.
These kinds of problems can arise when a system can evolve in
different ways depending on some conditions and components have
inconsistent \quo{views} of the state of the system.
If this happens, some components may reach a state no longer
\quo{compatible} with the state of their partners and therefore
communications do not happen as expected.

We illustrate these problems with some simple examples for deadlock
freedom (similar examples may be given for lock freedom).
Suppose we want to model a client-server system where clients'
requests are acknowledged either with an answer or with \quo{unknown}
from servers.
Due to its popularity, we choose CCS~\cite{MilnerR:calcs} to
introduce this scenario, so take the following agents:
\begin{align}\label{eq:v1}
  \ptp[C] = \bar{\msg[r]}. \msg[a] + \bar{\msg[r]}. \msg[u]
  \quad\qqand\quad
  \ptp[D] = \msg[r]. \bar{\msg[a]} + \msg[r]. \bar{\msg[u]}
\end{align}
where ports $\msg[r]$, $\msg[a]$, and $\msg[u]$ are respectively used
to commmunicate requests, answers, and unknowns.
(Recall that in CCS $\msg[a].\_$ and $\_+\_$ represent respectively
prefix and non-deterministic choice.)
The common interpretation of agents in \eqref{eq:v1} is that $\msg[x]$
and $\bar{\msg[x]}$ respectively represent an input and an output on
port $\msg[x]$.
It is a simple observation that the system $\ptp[C] \mid \ptp[D]$
where $\ptp[C]$ and $\ptp[D]$ run in parallel can evolve to e.g., the
deadlock state $\msg[a] \mid \bar{\msg[u]}$ where each party is
waiting for the other to progress.
The problem is that the choice of what communication should happen
after a request is taken independently by $\ptp[C]$ and $\ptp[D]$
instead of letting $\ptp[D]$ to take the decision and drive $\ptp[C]$
\quo{on the right} branch.
This is attempted in the next version:
\begin{align}\label{eq:v2}
  \ptp[C] = \bar{\msg[r]}. (\msg[a] + \msg[u])
  \quad\qqand\quad
  \ptp[D] = \msg[r]. (\bar{\msg[a]} + \bar{\msg[u]})
\end{align}
A key difference with the agents in~\eqref{eq:v1} is that the server
$\ptp[D]$ in \eqref{eq:v2} decides what to reply to the client
$\ptp[C]$, which becomes aware of the choice through the interaction
with $\ptp[D]$ after the request has been made.
%
%
Assume now that $\ptp[D]$ acts as a proxy to another server, say
$\ptp[D]'$.
When $\ptp[D]$ cannot return an answer to the client \ptp[c] it interacts with
$\ptp[D]'$ on port $\msg[p]$.
Answers are sent directly to \ptp[c] if $\ptp[D]'$ can compute them,
otherwise $\ptp[D]'$ returns an unknown on port $\msg[u]'$ to
$\ptp[D]$ which forwards it to \ptp[c].
This is modelled by the agents
\begin{align}\label{eq:v3}
  \ptp[D] = \msg[r]. (\bar{\msg[a]} + \bar{\msg[p]}. \msg[u]' . \bar{\msg[u]})
  \quad\qqand\quad
  \ptp[D]' = \msg[p]. (\bar{\msg[a]} + \bar{\msg[u]'})
\end{align}
Note that this change is completely transparent to agent $\ptp[c]$,
which in fact stays as in~\eqref{eq:v2}.
It is now more difficult to ascertain if these choices may lead to a
deadlock since the decision of $\ptp[D]$ may involve also $\ptp[D]'$.
Indeed, the parallel composition of agents in \eqref{eq:v3} may
deadlock because, when $\ptp[C]$ and $\ptp[D]$ interact on port
$\msg[a]$, $\ptp[D]'$ hangs on port $\msg[p]$ and, likewise, if
$\ptp[C]$ and $\ptp[D']$ interact on port $\msg[a]$ then $\ptp[D]$
hangs on port $\msg[u]'$.

A reason for this difficulty is that it is hard to define
compositional mechanisms that preserve relevant communication
properties such as deadlock or lock freedom.
Recently, an approach to the composition of concurrent and distributed
systems has been proposed in~\cite{francoICEprev,BarbaneradH19} for
the well-known model of systems of \emph{communicating finite-state
  machines} (CFSMs)~\cite{bz83}, that is sets of finite-state automata
capable of exchanging messages.
The compositional mechanism is based on the idea that two given
systems, say $\aCS$ and $\aCS'$, are composed by transforming two
CFSMs, say $\HH$ in $\aCS$ and $\KK$ in $\aCS'$, into \quo{coupled
  forwarders}.
Basically, each message that $\HH$ receives from a machine in $\aCS$
is forwarded to $\KK$ and vice versa.
It has been shown that, under suitable \emph{compatibility} conditions
between $\HH$ and $\KK$, this composition mechanism preserves deadlock
freedom for asynchronous as well as symmetric synchronous
communications (where sender and receiver play the same part in
determining which message to exchange).
The compatibility condition identified
in~\cite{francoICEprev,BarbaneradH19} consists in exhibiting
essentially dual behaviours: gateway $\HH$ is able to receive a
message whenever gateway $\KK$ is willing to send one and vice versa.
As observed in~\cite{nostroJLAMP}, a remarkable feature of such an
approach is that it enables the composition of systems originally
designed as \emph{closed} systems.
As far as two compatible machines can be found, any two systems can be
composed by transforming as hinted above the compatible machines.


The results in~\cite{francoICEprev,BarbaneradH19} are developed in the
asynchronous semantics of CFSMs.
These results have been transferred in~\cite{BLT20b} to a setting
where CFSMs communicate synchronously much like as the communication
mechanisms considered for instance in process algebras like CCS, ACP,
etc.
This model assumes a perfect symmetry between sender and receiver in
synchronous communications.
Let us again discuss this with an example.
Consider the agents
\begin{align}\label{ex:symmetry}
  \ptp[T] = \bar{\msg[a]}.\ptp[P] + \bar{\msg[b]}.\ptp[Q]
  \qqand
  \ptp[R] = \msg[a].\ptp[P]' + \msg[b].\ptp[Q]'  
\end{align}
According to the standard semantics of CCS~\cite{MilnerR:calcs},
system $(\ptp[T] \mid \ptp[R])\setminus\Set{\msg[a],\msg[b]}$ has two
possible evolutions:
\begin{align*}
  (\ptp[T] \mid \ptp[R])\setminus\Set{\msg[a],\msg[b]} \trans \tau \ptp[P] \mid \ptp[P]'
  \qqand
  (\ptp[T] \mid \ptp[R])\setminus\Set{\msg[a],\msg[b]} \trans \tau \ptp[Q] \mid \ptp[Q]'
\end{align*}
namely, either both $\ptp[T]$ and $\ptp[R]$ opt for the \quo{leftmost}
branch (synchronising on $\msg[a]$) or they both choose the
\quo{rightmost} one (synchronising on $\msg[b]$).
(Recall that in CCS $\_ \setminus X$ is the hiding of ports in the set
$X$ and that $\tau$ represents an internal action.)
This means that, the resolution of the choice is implicit
in the communication mechanism: a branch is taken as soon as $\ptp[T]$
and $\ptp[R]$ synchronise on the corresponding port.
Intuitively, no distinction is made between sender and receiver
(formally they are indeed interchangeable); this implies that the
communication mechanism is at the very core of choice
resolution~\cite{BLT20b}.

Interestingly, for synchronous communications an alternative
interpretation is actually possible. Indeed, this perfect symmetry is not
assumed so that sender and receiver play different roles in choice
resolution while still relying on synchronous communication.
Let us explain this interpretation using again CCS.
Consider a variant of CCS where outputs must be enabled before
being fired.
One could formally specify that with the following reduction rules:
\begin{align}\label{eq:aCCS}
  \bar{\msg[a]}.\ptp[P] + \ptp[P]' \trans \tau \bar{\bar{\msg[a]}}.\ptp[P]
  \qqand
  \bar{\bar{\msg[a]}}.\ptp[P] \mid (\msg[a].\ptp[Q] + \ptp[Q]') \trans \tau \ptp[P] | \ptp[Q]
\end{align}
whereby the leftmost rule \emph{chooses} one of the possible outputs
of the sender (the chosen output is marked by the double bar in our
notation) and the rightmost rule actually synchronises sender and
receiver.
This semantics is an abstract model of \emph{asymmetric}
communications (used e.g., in~\cite{BartolettiSZ14,Padovani10}), where
silent steps taken using the left rule model some internal computation
of the sender to decide what to communicate to the partner.
In other words, now the choice is entirely resolved on \quo{one side}
while the communication is a mere interaction of complementary
actions, the output and the input.
This asymmetry, at the core of asynchronous communication,
can therefore also carry for synchronous communication.

It is worth observing that asymmetric communications abstract a rather
common programming pattern where sending components may choose the
output to execute \emph{depending} on some internal computation.
For instance, elaborating on the proxy scenario in~\eqref{eq:v3},
$\ptp[D]$ could decide to directly send unknowns to normal clients
while reserving the use of $\ptp[D]'$ only for \quo{privileged}
clients.

\paragraph{Contributions.}

This paper transfers the composition by gateway mechanism
of~\cite{francoICEprev,BarbaneradH19} to the case of asymmetric
synchronous communication of CFSMs.
%
The main technical results are that, in the asymmetric case, gateway composition
\begin{itemize}
\item preserves deadlock freedom (as well as a strong version of lock
  freedom) provided that systems are composable (the relation of compatibility -- one of the requirements for systems to be composable -- in the present paper is less restrictive than 
  the one 
 used in \cite{BLT20b});
\item preserves lock freedom if systems are composable and gateways are \emph{sequential},
  namely each state has at most one outgoing transition.
\end{itemize}
Interestingly, preservation of deadlock freedom can be guaranteed under
milder conditions than in the 
symmetric case.
In fact, sequentiality of gateways is not necessary to preserve
deadlock freedom in the asymmetric case, while it is in the symmetric one.



%

\paragraph{Structure of the paper.} \cref{sec:back} introduces systems
of (asymmetric synchronous) CFSMs, related notions and
communication properties.
Composition by gateways is introduced and discussed in
\cref{sec:binary} together with the compatibility relation.
\cref{sec:proppres} discusses the issues that prevent gateway
composition to preserve communication properties.
\cref{sec:pbc} is devoted to the preservation of communication
properties.
Conclusions, related and future work are discussed in~\cref{sec:conc}.
\iftr\else
For space limitation, full proofs are reported in~\cite{bltice22TR}.
\fi


\section{Background}\label{sec:back}
\newcommand{\cfsmA}{
  \begin{tikzpicture}[mycfsm]
	 \node[state, initial, initial where = left, initial distance = .4cm, initial text={\p}] (zero) {$0$};
	 \node[state] (one)  [above right of=zero] {$1$};
	 \node[state] (two)  [below right of=one] {$2$};
	 \path (zero) edge [bend left]  node[above] {$\tau$} (one)
	 (one) edge [bend left]  node[above] {$\aout[A][H][][m]$} (two)
	 ;
  \end{tikzpicture}
}
\newcommand{\cfsmH}{
  \begin{tikzpicture}[mycfsm]
	 \node[state, initial, initial where = above, initial distance = .4cm, initial text={$\HH$}] (zero) {$0$};
	 \node[state] (one)  [below of=zero] {$1$};
	 \path
	 (zero) edge [bend right] node[below] {$\ain[A][H][][m]$} (one)
	 edge [bend left] node[above,rotate=180] {$\ain[B][H][][n]$} (one)
	 ;
  \end{tikzpicture}
}
\newcommand{\cfsmB}{
  \begin{tikzpicture}[mycfsm]
	 \node[state, initial, initial where = left, initial distance = .4cm, initial text={\q}] (zero) {$0$};
	 \node[state] (one)  [below right of=zero] {$1$};
	 \node[state] (two)  [above right of=one] {$2$};
	 \path (zero) edge [bend right]   node[below] {$\tau$} (one)
	 (one) edge [bend right]   node[below] {$\aout[B][H][][n]$} (two)
	 ;
  \end{tikzpicture}
}
\newcommand{\cfsmK}{
  \begin{tikzpicture}[mycfsm]
	 \node[state, initial, initial where = left, initial distance = .4cm, initial text={$\KK$}] (zero) {$0$};
	 \node[state] (one) [above right of=zero, yshift = -.5cm]   {$1$};
	 \node[state] (two) [below right of=zero, yshift = .5cm]   {$2$};
	 \node[state] (three) [below right of=one, yshift = .5cm]   {$3$};
	 \path (one) edge[bend left] node[above] {$\aout[K][C][][m]$} (three)
	 (two) edge[bend right] node[below] {$\aout[K][D][][n]$} (three)
	 (zero) edge[bend left] node[above] {$\tau$} (one)
	 (zero) edge[bend right] node[below] {$\tau$} (two)
	 ;
  \end{tikzpicture}
}
\newcommand{\cfsmC}{
  \begin{tikzpicture}[mycfsm]
	 \node[state, initial, initial where = left, initial distance = .4cm, initial text={$\ptp[C]$}] (zero) {$0$};
	 \node[state] (one) [above right of=zero, yshift = -.5cm]   {$1$};
	 \node[state] (two) [right  of=one]   {$2$};
	 \node[state] (three) [below right of=two, yshift = .5cm]   {$3$};
	 \path (one) edge node[above] {$\tau$} (two)
	 (two) edge[bend left] node[above] {$\aout[C][E][][c]$} (three)
	 (zero) edge[bend left] node[above] {$\ain[K][C][][m]$}
	 (one)  edge node[below] {$\ain[E][C][][s]$} (three)
	 ;
  \end{tikzpicture}
}
\newcommand{\cfsmD}{
  \begin{tikzpicture}[mycfsm]
	 \node[state, initial, initial where = left, initial distance = .4cm, initial text={$\ptp[D]$}] (zero) {$0$};
	 \node[state] (zero) {$0$};
	 \node[state] (one) [above right of=zero, yshift = -.5cm]   {$1$};
	 \node[state] (two) [right  of=one]   {$2$};
	 \node[state] (three) [below right  of=two, yshift = .5cm]   {$3$};
	 \path (one) edge node[above] {$\tau$} (two)
	 (zero) edge[bend left] node[above] {$\ain[K][D][][n]$} (one)
	 (zero) edge node[below] {$\ain[E][D][][s]$} (three)
	 (two) edge[bend left] node[above] {$\aout[D][E][][d]$} (three)
	 ;
  \end{tikzpicture}
}
\newcommand{\cfsmE}{
  \begin{tikzpicture}[mycfsm]
	 \node[state, initial, initial where = left, initial distance = .4cm, initial text={$\ptp[E]$}] (zero) {$0$};
	 \node[state] (three)  [below right of=zero, yshift = .5cm] {$4$};
	 \node[state] (four)  [above right  of=zero, yshift = -.5cm] {$1$};
	 \node[state] (five)  [right  of=three] {$5$};
	 \node[state] (six)  [right   of=four] {$2$};
	 \node[state] (two)  [below right of=six, yshift = .5cm] {$3$};
	 \path
	 (zero) edge [bend right] node[below] {$\ain[C][E][][c]$} (three)
	 edge [bend left] node[above] {$\ain[D][E][][d]$} (four)
	 (three) edge  node[above] {$\tau$} (five)
	 (four) edge node[above] {$\tau$} (six)
	 (five) edge[bend right]  node[below] {$\aout[E][D][][s]$} (two)
	 (six) edge [bend left] node[above] {$\aout[E][C][][s]$} (two)
	 ;
  \end{tikzpicture}
}

Communicating Finite State Machines (CFSMs)~\cite{bz83} are Finite
State Automata (FSAs) where transitions are labelled by
communications.
We recall basic notions on FSAs.

A \emph{finite state automaton} (FSA) is a tuple
$A = \conf{\sset, q_0, \lset, \tset}$ where
\begin{itemize}
\item $\sset$ is a finite set of states (ranged over by lowercase
  italic Latin letters);
\item $q_0 \in \sset$ is the \emph{initial state};
\item $\lset$ is a finite set of labels
\item
  $\tset \,\subseteq\, \sset \times (\lset \cup \{\tau\}) \times
  \sset$ is a set of transitions.
\end{itemize}
%
Hereafter, we let $\al$ range over $\lset \cup \{\tau\}$ when it is
immaterial to specify the set of labels or it is understood.
We use the usual notation $q_1 \trans \al q_2$ for the transition
$(q_1,\al,q_2) \in \trans{}$, and $q_1 \trans{} q_2$ when there exists a
label $\al$ such that $q_1 \trans \al q_2$.
Let $\_\cat\_$ be the concatenation operation on labels and write
$p \trans \pi q$ where $\pi = \al_1 \cat \al_2 \cat \ldots \cat \al_n$
whenever
$p \trans{\al_1} p_1 \trans{\al_2} \ldots \trans{\al_n} p_n = q$.
We let $\pi,\psi,\ldots$ range over $\lset^\star$ (i.e., sequences of
labels) and define the set of \emph{reachable states in $A$ from $q$}
as
\begin{align*}
  \RS[A,q] & = \Set{p \sst \text{there is } \pi \in \lset^\star \text{ such that } q \trans \pi p}
\end{align*}
The set of \emph{reachable states in $A$} is $\RS[A] = \RS[A,q_0]$.
For succinctness, $q \trans{\al} q' \in A$ means that the transition
belongs to (the set of transitions of) $A$; likewise, $q \in A$ means
that $q$ belongs to the states of $A$.
We say that $q \trans{\al} q'$ is an \emph{outgoing}
(resp. \emph{incoming}) transition of $q$ (resp. $q'$).
Since we use FSAs to formalise communicating systems, accepting states
are disregarded (as also done in~\cite{bz83}).

We adapt the definitions in~\cite{bz83} to cast CFSMs in our context.
Let $\mathfrak{P}$ be a set of \emph{participants} (or \emph{roles},
ranged over by $\p$, $\p[B]$, etc.) and $\msgset$ a set of
\emph{messages} (ranged over by $\msg$, $\msg[n]$, etc.).
We take $\mathfrak{P}$ and $\msgset$ disjoint.
An \emph{output label} is written as $\aout$ and represents the
willingness of \p\ to send message $\msg$ to \q; likewise, an
\emph{input label} is written as $\ain$ and represents the
willingness of \q\ to receive message $\msg$ from \p.
The \emph{subjects} of an output label $\aout$ and of an input label
$\ain$ are \p\ and \q, respectively.

\begin{definition}[CFSMs]\label{def:cfsm}
  Let	 $\lact = \{\aout, \ain \mid \p[A] \neq \p[B] \in \mathfrak{P}, \msg \in \msgset\}$.
  A \emph{communicating finite-state machine} (CFSM) is an FSA $\aCM$
  with labels $\lact \cup \Set{\tau}$ such that, for any transition
  $p \trans \al q$ of $\aCM$,
  \begin{itemize}
  \item if $\al$ is an output label then $p \neq q$ and $p$ has
	 exactly one incoming transition, and such transition is labelled
	 by $\tau$;
  \item if $\al = \tau$ then $p \neq q$ and $q$ has exactly one
	 outgoing transition, and such transition is labelled by an output
	 label.
  \end{itemize}
  A state of $\aCM$ is
  \begin{itemize}
  \item \emph{terminal}, if it has no outgoing transition; we define
	 $\tstates \aCM = \Set{p \in \aCM \sst p \text{ is terminal in }
		\aCM}$
  \item \emph{sending}, if it is not terminal and all its outgoing
	 transitions have output labels
  \item \emph{receiving}, if it is not terminal and all its outgoing
	 transitions have input labels
  \item \emph{mixed}, if it has a silent outgoing transition and an
	 outgoing transition with an input label.
  \end{itemize}
  A CFSM is \emph{\p-local} if all its non $\tau$ transitions have subject \p.
\end{definition}
Unlike in~\cite{BLT20b}, the transitions of our CFSMs can also be
labelled by the silent action $\tau$.
\cref{def:cfsm} can be looked at as the CFSM 
counterpart of the $\tau C$ contracts described
in~\cite{BartolettiCZ15}.
Imposing the no-mixed state condition on our CFSM, turns them into the
communicating automata counterpart of the processes (contracts) called
\quo{session behaviours}\footnote{Actually different variations of
  this name are used in the listed references.} in
e.g.,~\cite{HennessyB12,BdL13,BartolettiSZ14}.
These processes are in turn the process counterpart of (binary) session types~\cite{honda.vasconcelos.kubo:language-primitives}.
As we shall see below (and also shown in~\cite{BarbaneradH19}
and~\cite{BLT20b}), the absence of mixed states is necessary in
order to get the preservation of properties by composition.
As a matter of fact, we could drop the conditions related to
$\tau$-transitions in case a transition like
$\cauttr[p][{\aout[X][Y][@][z]}][q][][]$ is the only outgoing
transition from $p$, namely when no actual choice of output actions is
possible in $p$.
We however prefer to avoid this distinction because
($i$) our uniform treatment of transitions allows us to
immediately adapt definitions in a more abstract setting
and ($ii$)
secondly, uniformity allows us to simplify some technicalities.
Said that, all proofs in the present paper could easily be adapted to
the above mentioned alternative definition of CFSM.


\begin{definition}[Communicating systems]\label{def:cs}
  A \emph{(communicating) system over $\ptpset$} is a map
  $\aCS = (\aCM_{\p})_{\p \in \ptpset}$ assigning an $\p$-local CFSM
  $\aCM_{\p}$ to each participant $\p \in \ptpset$ where
  $\ptpset \subseteq \mathfrak{P}$ is finite and any participant
  occurring in a transition of $\aCM_{\p}$ is in $\ptpset$.
\end{definition}
Note that \cref{def:cs} requires that any input or output label does
refer to participants belonging to the system itself.
In other words, \cref{def:cs} restricts to \emph{closed} systems.

We define the synchronous semantics of communicating systems as an FSA
(differently from the asynchronous case, where the set of states can
be infinite).
Hereafter, $\dom f$ denotes the domain of a function $f$ and
$\upd f x y$ denotes the update of $f$ in a point $x \in \dom f$ with
the value $y$.

\begin{definition}[Asymmetric synchronisations]\label{def:syncSem}
  Let $\aCS$ be a communicating system.
  A \emph{configuration of } $\aCS$ is a map
  $\aConf = (q_{\p})_{\p \in \dom \aCS}$ assigning a \emph{local state}
  $q_{\p}\in \aCS(\p)$ to each $\p \in \dom \aCS$.

  The \emph{asymmetric synchronisations} of $\aCS$ is the FSA
  $\ssem{\aCS} = \conf{\sset, \aConf_0, \lint\cup\Set{\tau},\tset}$
  where
  \begin{itemize}
  \item $\sset$ is the set of synchronous configurations of $\aCS$, as
	 defined above;
  \item 
	 $\aConf_0=(q_{0\p})_{\p \in \dom \aCS} \in
	 \sset$ is the \emph{initial} configuration where, for  each $\p \in \dom \aCS$, 
	 $q_{0\p}$ is the initial state of $\aCS(\p)$; 
  \item
	 $\lint \ =\ \{\gint[] \mid \p \neq \q \in \mathfrak{P} \text{ and
	 } \msg \in \msgset\}$ is a set of interaction labels;
  \item 
	 $\aConf \trans{\gint[]} \upd \aConf \p {q, \q \mapsto q'} \in \ssem{\aCS}$
	  if $\aConf(\p) \trans{\aout} q \in \aCS(\p)$ and $\aConf(\q)
	 \trans{\ain} q' \in \aCS(\q)$;
   \item 
	 $\aConf \trans{\tau} \upd \aConf \p {q} \in \ssem{\aCS}$
	  if $\aConf(\p) \trans{\tau} q \in \aCS(\p)$;
  \end{itemize}
  Configuration $\aConf$ \emph{enables \p\ in $\aCS$} if $\aConf(\p)$
  has at least an outgoing transition.
  %
\end{definition}
As expected, an interaction $\gint[]$ occurs when $\p$ performs an
output $\aout$ (which has been previously chosen) and $\q$ the
corresponding input $\ain$.
\begin{example}\label{ex:sem}Let us consider the communicating system
  $\aCS= (\aCM_{\ptp[X]})_{\ptp[X] \in
	 \Set{\ptp[K],\ptp[C],\ptp[D],\ptp[E]}}$, where
  \[
	 \begin{array}{c@{\hspace{1.5cm}}c}
		\cfsmK
		&
		\cfsmC
		\\
		\cfsmD
		&
		\cfsmE
	 \end{array}
  \]
  A sequence of 
  transitions of
  $\ssem{\aCS}$ out of $\aConf_0$ is, according to \cref{def:syncSem},
\[\begin{array}{cccccr}
		\aConf_0 = (0_{\KK},0_{\ptp[C]},0_{\ptp[D]}, 0_{\ptp[E]})
		& \trans \tau & (1_{\KK},0_{\ptp[C]},0_{\ptp[D]}, 0_{\ptp[E]})
		& \trans{\gint[][K][m][C]} & (3_{\KK},1_{\ptp[C]},0_{\ptp[D]},0_{\ptp[E]})
		\\
		& \trans \tau & (3_{\KK},2_{\ptp[C]},0_{\ptp[D]},0_{\ptp[E]})
		& \trans{\gint[][C][c][E]} & (3_{\KK},3_{\ptp[C]},0_{\ptp[D]},4_{\ptp[E]})
		\\
		& \trans \tau & (3_{\KK},3_{\ptp[C]},0_{\ptp[D]},5_{\ptp[E]})
		& \trans{\gint[][E][s][D]} & (3_{\KK},3_{\ptp[C]}, 3_{\ptp[D]},3_{\ptp[E]})
		& \hspace{3.25cm}\mbox{\finex}
	 \end{array}\]
\end{example}
The symmetric synchronisation in~\cite{BLT20b} for systems without
$\tau$-transitions can be readily obtained from the above definition
by disregarding the clause for the $\tau$-transitions.

In the following, $\ptpof[\tau] = \emptyset$ and
$\ptpof = \ptpof[\aout] = \ptpof[\ain] = \Set{\p,\q}$ and, for a
sequence $\pi = \al_1 \cat \cdots \cat \al_n$, we let
$\ptpof[\pi] = \cup_{1 \leq i \leq n} \ptpof[\al_i]$.

As discussed in \cref{sec:intro}, we shall study the preservation of
communication properties under composition.
We shall consider the following relevant properties: deadlock freedom, lock
freedom and strong lock freedom.
The definitions below adapt the ones in~\cite{cf05} to a synchronous
setting (as done also in~\cite{LangeTY15,gt18,BLT20b}).
\begin{definition}[Communication properties]\label{def:props}
  Let $\aCS$ be a communicating system on $\ptpset$.
  We say that a participant $\p \in \ptpset$ is \emph{involved} in a
  run $\aConf \trans{\al_1} \aConf_1 \ldots \trans{\al_n} \aConf_n$ of
  $\aCS$ if there is $1 \leq i \leq n$ such that either
  $\p \in \ptpof[\al_i]$ or $\al_i = \tau$,
  $\aConf_i(\p) \trans \tau q$ in $\aCS(\p)$, and
  $\aConf_{i+1} = \upd{\aConf_i} \p q$.
  \begin{description}
  \item[Deadlock freedom]
	 A configuration $\aConf \in \RS[\ssem \aCS]$ is a \emph{deadlock}
	 if
	 \begin{itemize}
	 \item $\aConf$ has no outgoing transitions in $\ssem{\aCS}$ and
	 \item there exists $\p \in \ptpset$ such that $\aConf(\p)$ enables \p\ in $\aCS$.
	 \end{itemize}
	 A system is \emph{deadlock-free} if none of its configurations is a deadlock.
    \item[Lock freedom]
	 Let $\p\in\ptpset$.
	 A configuration $\aConf \in \RS[\ssem
	 \aCS]$ is a \emph{lock} for $\p$ if
	 \begin{itemize}
	 \item $\aConf(\p)$ has outgoing transitions; and
	 \item \p\ is not involved in any run from $\aConf$.
	 \end{itemize}
	 A system is \emph{lock-free} if none of its configurations is a
	 lock for any of its participants.
  \item[Strong lock freedom] System $\aCS$ is \emph{strongly lock-free
		for $\p \in \ptpset$} if for each $\aConf \in \RS[\ssem \aCS]$
	 enabling \p\ in $\aCS$ then \p\ is involved in all maximal
	 sequences from $\aConf$.
	 \\
	 A system is \emph{strongly lock free} if it is strongly lock free
	 for each of its participants.
	 \end{description}
  \end{definition}
  
  \begin{proposition}
  \begin{enumerate}
  \item
  Lock-freedom implies deadlock-freedom;
  \item
  Strong lock freedom implies lock freedom.
  \end{enumerate}
  \end{proposition}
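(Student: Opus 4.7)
The plan is to prove both implications by contraposition, exploiting the fact that the conditions ``enables $\p$ in $\aCS$'' and ``$\aConf(\p)$ has outgoing transitions'' coincide, which lets us match up the premises of the various definitions cleanly.

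For the first item, I would start from a configuration $\aConf \in \RS[\ssem{\aCS}]$ that is a deadlock, and exhibit a participant $\p$ for which $\aConf$ is a lock. By the definition of deadlock there is some $\p \in \ptpset$ such that $\aConf(\p)$ enables $\p$ in $\aCS$, which (by definition of ``enables'') is exactly the first requirement for $\aConf$ being a lock for $\p$. For the second requirement I would use that $\aConf$ has no outgoing transition in $\ssem{\aCS}$: the only run from $\aConf$ in the sense of \cref{def:props} is then the empty one $\aConf$ (with $n=0$), and the clause $1 \leq i \leq n$ in the definition of ``involved'' is vacuously unsatisfiable, so $\p$ is not involved in any run from $\aConf$.

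For the second item, suppose $\aCS$ is strongly lock-free and $\aConf \in \RS[\ssem{\aCS}]$ is a lock for some $\p \in \ptpset$. Since $\aConf(\p)$ has outgoing transitions in $\aCS(\p)$, $\aConf$ enables $\p$ in $\aCS$; by strong lock freedom for $\p$, every maximal sequence from $\aConf$ involves $\p$. It remains to extract from this a run of $\ssem{\aCS}$ from $\aConf$ that involves $\p$, contradicting the lock property. If $\aConf$ has no outgoing transitions in $\ssem{\aCS}$, the empty sequence is itself a maximal sequence and cannot involve $\p$, directly contradicting strong lock-freedom. Otherwise any maximal sequence $\sigma$ from $\aConf$ must contain some first index $i \geq 1$ at which $\p$ is involved, and then the finite prefix of $\sigma$ of length $i$ is a run from $\aConf$ in which $\p$ is involved, contradicting the lock hypothesis.

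The only real subtlety, and the one place where I would slow down, is keeping track of the typing of ``runs'' versus ``maximal sequences''. As written, \cref{def:props} quantifies ``involved'' over finite runs, whereas strong lock freedom quantifies over maximal (possibly infinite) sequences; the second part of the argument hinges on the observation that if $\p$ is involved in a maximal sequence at position $i$, then truncating that sequence at position $i$ yields a finite run of $\ssem{\aCS}$ from $\aConf$ in which $\p$ is still involved. Once this is spelled out, both implications are immediate.
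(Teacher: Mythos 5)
Your argument is correct. Note that the paper states this proposition without any proof at all, treating both implications as immediate consequences of the definitions, so there is no official argument to compare against; yours supplies exactly the reasoning the authors left implicit. Both parts are sound: in part 1 the deadlocked configuration is itself a lock for the enabled participant because the only run from it is the empty one; in part 2 the only step you leave tacit is that a maximal sequence from $\aConf$ always exists (obtained by greedily extending until no transition is available, or forever), after which your truncation observation --- that involvement of $\p$ at some finite position $i$ of a maximal, possibly infinite, sequence yields a finite run from $\aConf$ involving $\p$ --- correctly bridges the mismatch between the finite runs quantified over in the definition of lock and the maximal sequences quantified over in the definition of strong lock freedom.
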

  
  \begin{example}
	 Let us consider the system $\aCS$ of \cref{ex:sem}. The only other
	 maximal transition sequence in $\ssem \aCS$ out of $\aConf_0$,
	 besides the one described in \cref{ex:sem}, is {\small
	 $$\begin{array}{ccccc}
		\aConf_0 = (0_{\KK},0_{\ptp[C]},0_{\ptp[D]}, 0_{\ptp[E]})
		& \trans \tau & (2_{\KK},0_{\ptp[C]},0_{\ptp[D]}, 0_{\ptp[E]})
		& \trans{\gint[][K][n][D]} & (3_{\KK},0_{\ptp[C]},1_{\ptp[D]},0_{\ptp[E]})
		\\
		& \trans \tau & (3_{\KK},0_{\ptp[C]},2_{\ptp[D]},0_{\ptp[E]})
		& \trans{\gint[][D][d][E]} & (3_{\KK},0_{\ptp[C]},3_{\ptp[D]},1_{\ptp[E]})
		\\
		& \trans \tau & (3_{\KK},0_{\ptp[C]},3_{\ptp[D]},2_{\ptp[E]})
		& \trans{\gint[][E][s][C]} & (3_{\KK},3_{\ptp[C]}, 3_{\ptp[D]},3_{\ptp[E]})
	 \end{array}$$}
These two sequences are both maximal and contain all the elements of $\RS[\ssem \aCS]$.
By the above observations it is possible to check $\aCS$ to be strongly lock free.
\finex
\end{example}

\section{Composition via Gateways}\label{sec:binary}
We now discuss the composition of systems of CFSMs via gateways (cf.
~\cite{francoICEprev,BarbaneradH19}) and study its properties under
asymmetric synchronisation.
The main idea is that two systems of CFSMs, say $\aCS_1$ and $\aCS_2$,
can be composed by transforming one participant in each of them into
gateways connected to each other.

\subsection{Building gateways}
Hereafter, $\HH$ and $\KK$ denote the selected participant in $\aCS_1$
and $\aCS_2$ respectively selected for the composition.
The gateways for $\HH$ and $\KK$ are connected to each other and act
as forwarders: each message sent to the gateway for $\HH$ by a
participant from the original system $\aCS_1$ is now forwarded to the
gateway for $\KK$, that in turn forwards it to the same participant to
which $\KK$ sent it in the original system $\aCS_2$.
The dual will happen to messages that the gateway for $\KK$ receives
from $\aCS_2$. A main advantage of this approach is that no extension
of the CFSM model is needed to transform systems of CFSMs, which are
normally closed systems, into open systems that can be
composed. Another advantage is that the composition is fully
transparent to all participants different from $\HH$ and $\KK$.

We will now define composition via gateways on systems of CFSMs,
following the intuition above.

\begin{definition}[Gateway]\label{def:gateway}
  Given a $\HH$-local CFSM $\aCM$ and a participant $\KK$, the
  \emph{gateway of $\aCM$ towards $\KK$} is the CFSM
  $\gateway \aCM \KK$ obtained by replacing in $\aCM$
  \begin{itemize}
    \item each pair of consecutive transitions $p \trans{\tau} q
  \trans{\aout[H][A]} r$ with
  \begin{align}\label{eq:gtwout}
	 p \trans{\ain[K][H]} p' \trans{\tau} q
	 \trans{\aout[H][A]} r
	 \qqand[for some fresh state $p'$]
  \end{align}
\item each transition $p \trans{\ain[A][H]} r$ with
  \begin{align}\label{eq:gtwin}
	 p \trans{\ain[A][H]} p' \trans{\tau} p'' \trans{\aout[H][K]} r
	 \qqand[for some fresh states $p'$ and $p''$]
  \end{align}
\end{itemize}
  We shall call \emph{external} the states like $p$ and $r$ and
  \emph{internal} the states like $p'$, $p''$ and $q$.
\end{definition}
Note that gateways execute \quo{segments} of the form
described in~\eqref{eq:gtwout} and~\eqref{eq:gtwin} in the above
definition.
Also, by very construction, we have the following
\begin{fact}
  Given a $\HH$-local CFSM $\aCM$ and a participant $\KK$,
  each state of $\gateway \aCM \KK$ 
  has at most one incoming or outgoing
  $\tau$ transition.
\end{fact}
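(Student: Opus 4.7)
The proof is by case analysis on the states of $\gateway \aCM \KK$, which come in five flavours according to \cref{def:gateway}: the original external states $p, r$ (from either replacement clause), the original internal states $q$ (middle of a $\tau$-output pair treated by clause~\eqref{eq:gtwout}), and the fresh internal states $p'$ and $p''$ introduced by~\eqref{eq:gtwout} and~\eqref{eq:gtwin}. The plan is to inspect, for each flavour, which of the newly generated $\tau$-transitions can have that state as source or target, using the conditions of \cref{def:cfsm} to ensure uniqueness.

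First I would note that the only $\tau$-transitions appearing in $\gateway \aCM \KK$ are those of the form $p' \trans{\tau} q$ introduced in~\eqref{eq:gtwout}, and those of the form $p' \trans{\tau} p''$ introduced in~\eqref{eq:gtwin}; in particular, the original $\tau$-transitions of $\aCM$ do not survive, since every such transition is the first half of some pair $p \trans{\tau} q \trans{\aout[H][A]} r$ rewritten by~\eqref{eq:gtwout} (this is because by \cref{def:cfsm} every $\tau$-target has exactly one outgoing transition, which is an output). From this observation it follows immediately that the external states $p, r$ are touched by no $\tau$-transition at all: the sources of the new $\tau$'s are always fresh states $p'$, and the targets are either a middle state $q$ or a fresh $p''$.

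Next, I would check the remaining cases. The fresh state $p'$ introduced by~\eqref{eq:gtwout} has exactly one outgoing $\tau$ (namely the one into $q$) and only the newly added input $\ain[K][H]$ from $p$ as incoming transition, so it carries a single $\tau$-edge. Analogously, the fresh $p'$ of~\eqref{eq:gtwin} has exactly one outgoing $\tau$ to $p''$ and an incoming input label, while the fresh $p''$ has the incoming $\tau$ from $p'$ and an outgoing output label. In each case the state participates in exactly one $\tau$-transition. For the middle state $q$ I would use the CFSM constraint: since $q$ is the target of a $\tau$ in $\aCM$, \cref{def:cfsm} forces $q$ to have a unique outgoing transition (an output), hence no outgoing $\tau$ in $\gateway \aCM \KK$; and since $q$ is the source of an output, \cref{def:cfsm} forces $q$ to have a single incoming transition (a $\tau$), so the rewriting~\eqref{eq:gtwout} introduces a single incoming $\tau$-edge $p' \trans{\tau} q$ at $q$.

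The only delicate point, and really the only potential obstacle, is to make sure that the rewriting clauses~\eqref{eq:gtwout} and~\eqref{eq:gtwin} do not overlap at a common state so as to cumulate two $\tau$'s. This is settled by the CFSM invariants of \cref{def:cfsm}: every $\tau$-transition of $\aCM$ belongs to exactly one pair $p \trans{\tau} q \trans{\aout[H][A]} r$, so each middle state $q$ is touched by~\eqref{eq:gtwout} only once; the freshness of $p'$ and $p''$ guarantees that no two rewritings share those states; and clause~\eqref{eq:gtwin} adds $\tau$-edges only between fresh states, never at $p$ or $r$. Putting these observations together yields the stated bound on incoming and outgoing $\tau$-transitions at every state.
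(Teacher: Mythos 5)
Your proof is correct and follows the only route available: a direct inspection of the two rewriting clauses of \cref{def:gateway}, which is precisely what the paper itself invokes (it states the fact ``by very construction'' and gives no explicit proof). Your case analysis properly fills in the one point that actually needs the CFSM invariants of \cref{def:cfsm} — namely that the $\tau$/output pairs are uniquely determined and cannot overlap at a shared state, so no state accumulates two $\tau$-edges — so nothing is missing.
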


We compose systems with disjoint participants taking all the
participants of the original systems but $\HH$ and $\KK$, whereas
$\HH$ and $\KK$ are replaced by their respective gateways.

Given two functions $f$ and $g$ such that
$\dom f \cap \dom g = \emptyset$, we let $f + g$ denote the function
behaving as function $f$ on $\dom f$ and as function $g$ on $\dom g$.
\begin{definition}[System composition]
Let $\aCS_1$ and $\aCS_2$ be two systems with disjoint domains.
  The \emph{composition of $\aCS_1$ and $\aCS_2$ via
	 $\HH \in \dom{\aCS_1}$ and $\KK \in \dom{\aCS_2}$} is defined as
  \[\aCS_1 \connect \HH \KK \aCS_2 = \upd{\aCS_1}{\HH}{\gateway{\aCS_1(\HH)} \KK}
	 + \upd{\aCS_2} \KK {\gateway{\aCS_2(\KK)} \HH}
  \]
(Note that $\dom{\aCS_1 \connect \HH \KK \aCS_2} = \dom{\aCS_1} \cup \dom{\aCS_2}$.)
\end{definition}
We remark that, by the above approach for composition, we do not
actually need to formalise the notion of \emph{open} system.
In fact any closed system can be looked at as open by choosing two
suitable participants in the \quo{to-be-connected} systems and
transforming them into two forwarders.
Also, note that the notion of composition above is structural:
corresponding notions of behaviourals composition have been studied
in~\cite{nostroJLAMP} and in~\cite{survey} for multiparty session
types.

\begin{example}\label{ex:gc}
  Let us take the following two communicating systems.
  \[
	 \aCS_1 = \dboxed{
		\begin{array}{l@{\hspace{.1cm}}l}
		  \begin{array}{c}
			 \cfsmA
			 \\
			 \cfsmB
		  \end{array}
		  &
		  \cfsmH
		\end{array}
	 }
	 \quad
	 \aCS_2 = \dboxed{\arraycolsep=1pt\def\arraystretch{3}
		\begin{array}{cc}
		  \cfsmK
		  &
			 \cfsmE
		  \\[-1em]
		  \cfsmC
		  &
		  	 \cfsmD
		\end{array}
	 }
  \]
  The system $\aCS_1 \connect{\HH}{\KK} \aCS_2$ is
  \[
	 \csys{\arraycolsep=1.2pt
		\begin{array}{ll}
		  \begin{array}{c}
		  \cfsmA
		  \\
		  \cfsmB
		  \end{array}
		  &
			 \begin{tikzpicture}[mycfsm]
				\node[state, initial, initial where = above, initial distance = .4cm, initial text={$\gateway {\aCM_\HH} \KK$}] (zero) {$0$};
				\node[state] (three)  [below left of=zero] {$2$};
				\node[state] (four)  [below right  of=zero] {$3$};
				\node[state] (five)  [below  of=three] {$4$};
				\node[state] (six)  [below   of=four] {$5$};
				\node[state] (one)  [below  right of=five] {$1$};
				\path (zero) edge [bend right] node[above] {$\ain[A][H][][m]$}
				(three) edge [bend left] node[above] {$\ain[B][H][][n]$} (four)
				(three) edge  node[below] {$\tau$} (five)
				(four) edge node[below] {$\tau$} (six)
				(five) edge[bend right]  node[below] {$\aout[H][K][][m]$} (one)
				(six) edge[bend left]  node[below] {$\aout[H][K][][n]$} (one)
				;
			 \end{tikzpicture}
		\end{array}
		\qquad
		\begin{array}{l@{\quad}l}
		  \begin{tikzpicture}[mycfsm]
			 \node[state, initial, initial where = above, initial distance = .4cm, initial text={$\gateway {\aCM_\KK} \HH$}] (zero) {$0$};
			 \node[state] (zero) {$0$};
			 \node[state] (four) [below right of=zero]   {$4$};
			 \node[state] (five) [below left of=zero]   {$5$};
			 \node[state] (one) [below of=five]   {$1$};
			 \node[state] (two) [below  of=four]   {$2$};
			 \node[state] (three) [below right of=one]   {$3$};
			 \path (one) edge[bend right] node[below] {$\aout[K][C][][m]$} (three)
			 (two) edge[bend left] node[below] {$\aout[K][D][][n]$} (three)
			 (five) edge node[below] {$\tau$} (one)
			 (four) edge node[below] {$\tau$} (two)
			 (zero) edge[bend right] node[above] {$\ain[H][K][][m]$} (five)
			 (zero) edge[bend left] node[above] {$\ain[H][K][][n]$} (four)
			 ;
		  \end{tikzpicture}
		  &
			 \begin{array}{c}
				\cfsmC
				\\
				\cfsmD
				\\
				\cfsmE
			 \end{array}
		\end{array}
	 }
  \]
  Note that the CFSMs \p, \q, $\ptp[C]$, $\ptp[D]$, and $\ptp[E]$
  remain unchanged.
  \finex
\end{example}
  
\iftr
  Given a configuration of the composition of systems $\aCS_1$ and
  $\aCS_2$ we can retrieve the configurations of the two subsystems by
  taking only the states of participants in $\aCS_1$ and $\aCS_2$
  while avoiding, for the gateways, to take the fresh states
  introduced by the gateway construction.
  Indeed, we shall prove in \cref{prop:comp} that for each
  $\aConf \in \RS[\ssem{\aCS}]$, we have
  $\cproj \aConf \HH \in \RS[\ssem{\aCS_1}]$ and
  $\cproj \aConf \KK \in \RS[\ssem{\aCS_2}]$.
  \begin{definition}[Configuration projections]\label{def:cproj}
	 Let $\aConf$ be a configuration of a composed system
	 $\aCS_1 \connect \HH \KK \aCS_2$ and
	 $\aCM = \gateway{\aCS_1(\HH)} \KK$.
	 The \emph{left-projection of $\aConf$ on $\aCS_1$} is the map
	 $\cproj \aConf \HH$ defined on $\dom{\aCS_1}$ by
	 \[
		\cproj \aConf \HH : \p \mapsto
		\begin{cases}
		  q, & \text{if } \aConf(\p) \not\in \aCS_1(\p)
		  \text{ and there is } q \trans{\ain[k][h]} \aConf(\p) \in \aCM
		  \text{ or } q \trans {\ain[k][h]} \trans \tau \aConf(\p) \in \aCM
		  \\
		  r, & \text{if } \aConf(\p) \not\in \aCS_1(\p)
		  \text{ and there is } \aConf(\p) \trans{\aout[h][k]} r \in \aCM
		  \text{ or } \aConf(\p) \trans \tau \trans{\aout[h][k]} r \in \aCM
		  \\
		   \aConf(\p), & \text{otherwise } 
		\end{cases}
	 \]
	 The definition of \emph{right-projection} $\cproj \aConf \KK$ is analogous.
  \end{definition}
  \begin{proposition}
	 Left- and right-projections are well-defined.
  \end{proposition}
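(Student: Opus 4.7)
The plan is to show that for every configuration $\aConf$ of $\aCS_1 \connect{\HH}{\KK} \aCS_2$, the function $\cproj{\aConf}{\HH}$ assigns to each $\p \in \dom{\aCS_1}$ exactly one state, and that this state lies in $\aCS_1(\p)$. The argument for the right-projection $\cproj{\aConf}{\KK}$ is entirely symmetric, so I would treat only the left case in detail.

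First I would split on whether $\p = \HH$ or not. For $\p \neq \HH$ the CFSM of $\p$ is unchanged in the composition, so $\aConf(\p) \in \aCS_1(\p)$; then neither antecedent of the first two clauses of \cref{def:cproj} is satisfied, the third clause fires, and the value is $\aConf(\p)$ itself. For $\p = \HH$ I would analyse $\aConf(\HH)$ as a state of $\aCM = \gateway{\aCS_1(\HH)}{\KK}$. If $\aConf(\HH)$ is external, i.e.\ it already belongs to $\aCS_1(\HH)$, the third clause again applies. Otherwise $\aConf(\HH)$ is one of the four kinds of internal state produced by \cref{def:gateway}: a fresh $p'$ or the original intermediate $q$ of a pattern of shape~\eqref{eq:gtwout}, or a fresh $p'$ or $p''$ of a pattern of shape~\eqref{eq:gtwin}. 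In the first two cases the shape of $\aCM$ forces the incoming path to be $p \trans{\ain[K][H]} \aConf(\HH)$ respectively $p \trans{\ain[K][H]} p' \trans{\tau} \aConf(\HH)$ for a unique external $p$, so the first clause fires and returns $p$. In the last two cases the outgoing path is $\aConf(\HH) \trans{\tau} p'' \trans{\aout[H][K]} r$ respectively $\aConf(\HH) \trans{\aout[H][K]} r$ for a unique external $r$, so the second clause fires and returns $r$.

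The main obstacle will be to establish both the uniqueness of the witness and the disjointness of the first two clauses. Uniqueness reduces to the rigidity of the paths displayed above: by \cref{def:cfsm} a state that sources an output has a unique incoming transition, necessarily labelled $\tau$, and a state targeted by a $\tau$-transition has a unique outgoing transition, necessarily labelled by an output; combined with the observation that each fresh $p'$, $p''$ has exactly one incoming and one outgoing transition (both created by the construction), this pins down $p$ and $r$. Disjointness follows because $\aout[H][K]$-transitions are introduced only by~\eqref{eq:gtwin} and $\ain[K][H]$-transitions only by~\eqref{eq:gtwout}; inspecting the four kinds of internal states shows that none of them has both an incoming path of the first shape and an outgoing path of the second shape, so the antecedents of the two clauses are mutually exclusive. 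Finally, the witnesses $p$ and $r$ are the external endpoints named in the defining patterns, and hence belong to $\aCS_1(\HH)$, so the projection lands in $\aCS_1$ as required.
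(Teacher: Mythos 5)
Your argument is in substance the paper's own proof, just spelled out: the paper simply observes that the projections act as the identity on non-fresh states and that each fresh state introduced by the gateway construction has a unique predecessor and successor, which is exactly the ``rigidity'' you establish, together with the (implicit but worth making explicit) mutual exclusion of the first two clauses. One point of your case analysis is off, though it does not endanger the well-definedness conclusion. The intermediate state $q$ of a pattern of shape~\eqref{eq:gtwout} is \emph{not} a new state: \cref{def:gateway} reuses the original states $p$, $q$, $r$ of $\aCS_1(\HH)$ and only adds the fresh $p'$, so the paper's ``internal'' is not synonymous with ``fresh''. Hence for $\aConf(\HH)=q$ the guard $\aConf(\HH)\not\in\aCS_1(\HH)$ of the first clause of \cref{def:cproj} fails and the ``otherwise'' clause returns $q$ itself, not the external predecessor $p$ as you claim. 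Only three kinds of states actually fall under the first two clauses ($p'$ from~\eqref{eq:gtwout}, and $p'$, $p''$ from~\eqref{eq:gtwin}); for these your uniqueness and disjointness arguments are correct and suffice.
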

  \begin{proof}
	 It is enough to show that $\cproj \cdot \HH$ and $\cproj \cdot
	 \KK$ uniquely assign a state to fresh states because on non-fresh
	 stats both functions act as the identify map.
	 This follows since, by construction, each state introduced by our
	 gateway construction has unique successor and predecessor.
  \end{proof}
  Intuitively, only $\HH$ is aware of an input from $\KK$ when $\HH$
  is in the internal state reached after an input from $\KK$; hence to
  have a coherent configuration we take the state of $\HH$ before the
  input.
  If instead $\HH$ is in an internal state corresponding to an output
  to $\KK$, then other participants in $\aCS_1$ know that the message
  has been sent; hence to have a coherent configuration we take the
  state of $\HH$ after the send.
  (A similar intuition applies to $\cproj \aConf \KK$.)
  \begin{example}\label{ex:confproj}
	 Let
	 $\aConf =
	 (2_{\p},0_{\q},1_{\HH},5_{\KK},0_{\ptp[C]},0_{\ptp[D]},0_{\ptp[e]})$;
	 and $\aCS = \aCS_1 \connect{\HH}{\KK} \aCS_2$ be the system of
	 \cref{ex:gc}.
	 Then, $\aConf \in \RS[\ssem{\aCS}]$, namely
	 $\aConf$ is reachable in $\aCS$.
	 In fact
	 \begin{eqnarray*}
		\aConf_0 = (0_{\p}, 0_{\q}, 0_{\HH},0_{\KK},0_{\ptp[C]},0_{\ptp[D]}, 0_{\ptp[E]})
		& \trans \tau & (1_{\p}, 0_{\q}, 0_{\HH},0_{\KK},0_{\ptp[C]},0_{\ptp[D]}, 0_{\ptp[E]})
		\\
		& \trans{\gint[][A][m][H]} & (2_{\p}, 0_{\q}, 2_{\HH},0_{\KK},0_{\ptp[C]},0_{\ptp[D]},0_{\ptp[E]})
		\\
		& \trans \tau & (2_{\p}, 0_{\q}, 4_{\HH},0_{\KK},0_{\ptp[C]},0_{\ptp[D]},0_{\ptp[E]})
		\\
		& \trans{\gint[][H][m][K]} & (2_{\p}, 0_{\q}, 1_{\HH},5_{\KK},0_{\ptp[C]},0_{\ptp[D]},0_{\ptp[E]})
	 \end{eqnarray*}
  	 The projections of $\aConf$ on, respectively, $\aCS_1$ and $\aCS_2$
	 are $\cproj \aConf \HH = (2_{\p},0_{\q}, 1_{\HH})$ and
	 $\cproj \aConf \KK = (0_{\KK},0_{\ptp[C]},0_{\ptp[D]},0_{\ptp[E]})$.
	 \finex
  \end{example}
\fi
\subsection{Compatibility}

A few simple auxiliary notions are useful.
Let
$\lio = \Set{\ain[][][][m], \aout[][][][m] \sst \msg[m]\in\msgset}$
and define the functions $\mathsf{io} : \lact \to \lio$ and
$\Dual{(\cdot)} : \lio \to \lio$ as follows
\[
  \begin{array}[c]{c@{\qquad}c@{\qquad}c}
    \mathsf{io}(\ain) = \ain[][][][m]
    &
           \mathsf{io}(\aout) = \aout[][][][m]
    \qqand
    \Dual{\ain[][][][m]} = \aout[][][][m]
    &
      \Dual{\aout[][][][m]} = \ain[][][][m]
  \end{array}
\]
These functions extend to CFSMs
in the obvious way: given a CFSM  $\aCM = \conf{\sset,q_0,\lact,\tset}$ we define
$\mathsf{io}(\aCM) = \conf{\sset,q_0,\lio,\tset'}$ where
$\tset' = \Set{ q \trans[{}]{\mathsf{io}(\al)} q' \sst q
  \trans[{}]{\al} q' \in \aCM, \al\in\lact} \cup \Set{q
  \trans[{}]{\tau} q' \sst q \trans[{}]{\tau} q' \in \aCM}$ and
likewise for $\Dual{\aCM}$.

Informally, two CFSMs $\aCM_1$ and $\aCM_2$ are \emph{compatible} if
each output of $\aCM_1$ has a corresponding input in $\aCM_2$ and vice
versa once the identities of communicating partners are blurred away.
\begin{definition}[Compatibility]\label{def:compatibility}
  Let  $\aCM$ and $\aCM'$ be two FSAs on $\lio$.
  An \emph{io-correspondence} is a relation $R$ between states of
  $\aCM$ and those of $\aCM'$ such that whenever $(q, q') \in R$:
  \begin{itemize}
  \item $q \in \tstates{\aCM}$ if, and only if,
	 $q' \in \tstates{\aCM'}$ (cf. \cref{def:cfsm})
  \item if $q \trans{\aout[][][]} r \in \aCM$ then there is
	 $q' \trans{\ain[][][]} r' \in \aCM'$ such that $(r, r') \in R$
  \item if $q' \trans{\aout[][][]} r' \in \aCM'$ then there is
	 $q \trans{\ain[][][]} r \in \aCM$ such that $(r, r') \in R$
  \item if $q \trans \tau r \in \aCM$ then  $(r, q') \in R$
  \item if $q' \trans \tau r' \in \aCM'$ then  $(q, r') \in R$
  \end{itemize}
  Two CFSMs $\aCM$ and $\aCM'$ are \emph{compatible} (in symbols
  $\aCM \comp \aCM'$) if there is an io-correspondence relating the
  initial states of $\mathsf{io}(\aCM_1)$ and $\mathsf{io}(\aCM')$.
\end{definition}
\begin{example}[Compatibility]\label{ex:compatibility}
  The machines $\HH$ and $\KK$ of \cref{ex:gc} are compatible.
  For a more complex example, consider the following CFSMs
  \[
	 \begin{tikzpicture}[mycfsm]
		\node[state, initial, initial where = left, initial distance = .4cm, initial text={$\HH$}] (zero) {$0$};
      \node[state] (one) [above right of=zero]   {$1$};
      \node[state] (two) [right  of=one]   {$2$};
      \node[state] (three) [right of=two]   {$3$};
      \node[state] (four) [right  of=three]   {$4$};
      \node[state] (five) [right  of=four]   {$5$};
      \node[state] (six) [below right of=five]   {$6$};
      \node[state] (seven) [below right of=zero]   {$7$};
      \node[state] (eight) [right  of=seven]   {$8$};
      \node[state] (nine) [right of=eight]   {$9$};
		\node[state] (ten) [right of=nine]   {$10$};
		\node[state] (eleven) [right of=ten]   {$11$};
      \path
      (zero) edge [bend left] node[above] {$\tau$} (one)
      (zero) edge   [bend right] node[below] {$\tau$} (seven)
      (one) edge node[above] {$\aout[H][A]$} (two)
      (two) edge    node[above] {$\tau$} (three)
      (three) edge   node[above] {$\aout[H][C][][y]$} (four)
      (four) edge    node[above] {$\tau$} (five) %
      (five) edge  [bend left]  node[above] {$\aout[H][B]$} (six)
      (seven) edge   node[below] {$\aout[H][B]$} (eight)
      (eight) edge  node[below] {$\tau$} (nine)
      (nine) edge  node[below] {$\aout[H][C][][x]$} (ten)
      (ten) edge  node[below] {$\tau$} (eleven)
      (eleven) edge   [bend right] node[below] {$\aout[H][A]$} (six)
      ;
	 \end{tikzpicture}
	 \qquad\qquad
	 \begin{tikzpicture}[mycfsm]
		\node[state, initial, initial where = left, initial distance = .4cm, initial text={$\KK$}] (zero) {$0$};
      \node[state] (one) [above right of=zero]   {$1$};
      \node[state] (two) [right  of=one]   {$2$};
      \node[state] (three) [below right of=two]   {$3$};
      \node[state] (four) [below right of=zero]   {$4$};
      \node[state] (five) [right  of=four]   {$5$};
      \path
      (zero) edge [bend left] node[above] {$\ain[D][K]$} (one)
      (one) edge node[above] {$\ain[D][K][][x]$} (two)
      (two) edge  [bend left]  node[above] {$\ain[D][K]$} (three)
      (zero) edge   [bend right] node[below] {$\ain[E][K]$} (four)
      (four) edge  node[below] {$\ain[E][K][][y]$} (five)
      (five) edge  [bend right]  node[below] {$\ain[E][K]$} (three)
		(zero) edge node[above] {$\ain[E][K][][z]$} (three)
      ;
	 \end{tikzpicture}
  \]
The above $\HH$ and $\KK$ are compatible. Apart for $\tau$ actions preceding them,
$\HH$ can only perform output actions, whereas $\KK$ can only perform input actions.
By disregarding the names of the receivers in the actions of  $\HH$, and of the senders in those of
$\KK$, any output action after its corresponding $\tau$ can find a matching input in $\KK$.
The vice versa does not hold, since none of the possible output actions that can occur
after a $\tau$ from $0$ (i.e. the outputs from $1$ and $7$ in $\HH$) can actually 
match  the input action $\ain[E][K][][z]$ from $0$ in $\KK$. Such a possibility is in fact allowed
by our definition of compatibility.
\finex
\end{example}
\cref{def:compatibility} transfers the notion of compatibility given
in~\cite{nostroJLAMP} for processes in multiparty sessions.
Also, \cref{def:compatibility} differs from the notions of
compatibility in~\cite{BLT20b} and
in~\cite{francoICEprev,BarbaneradH19} which are defined as
bisimulations and do not involve $\tau$-transitions.

\begin{definition}\label{inoutdet}
  An \p-local CFSM $\aCM$ is:
  \begin{enumerate}
  \item
  \label{inoutdet-in}
   {\em ?-deterministic} if $p \trans{\ain[x][a]} q$ and
	 $p \trans{\ain[y][a]} r \in \aCM$ implies $q = r$;
  \item 
  \label{inoutdet-out}
  {\em !-deterministic} if $p \trans{\tau}\, \trans{\aout[a][x]} q$ and
	 $p  \trans{\tau} \, \trans{\aout[a][y]} r \in \aCM$ implies $q = r$;
  \item {\em ?!-deterministic} if it is both ?-deterministic and
	 !-deterministic;
\end{enumerate}
\end{definition}

A non-terminal state $q \in \aCM$ is \emph{asymmetric sending}
(resp. \emph{receiving}) if all its outgoing transitions have $\tau$
(resp. receiving) labels; $q$ is a \emph{asymmetric mixed} state if it
is neither asymmetric sending nor receiving.

\begin{example}
Machines $\HH$ and $\KK$ in \cref{ex:compatibility} are, respectively
non !-deterministic and non ?-deterministic. In particular, conditions
   (\ref{inoutdet-out}) and  (\ref{inoutdet-in}) of  \cref{inoutdet}  fail for, respectively,
   state $0$ of $\HH$ and state $0$ of $\KK$.
\end{example}

We require a stronger condition then compatibity for two systems to be
composable.
\begin{definition}[$(\HH,\KK)$-composability]
  Let $\aCS_1$ and $\aCS_2$ be two systems with disjoint domains.  We
  say that $\aCS_1$ and $\aCS_2$ are \emph{$(\HH,\KK)$-composable} if
  $\HH \in \dom{\aCS_1}$ and $\KK \in \dom{\aCS_2}$ are two compatible
  ?!-deterministic machines with no asymmetric mixed states.
\end{definition}


\section{Composition Related Issues}\label{sec:proppres}
It is known that under symmetric synchronisation composition spoils
deadlock-freedom; this is shown by the example below, borrowed
from~\cite{BLT20b}.
\begin{example}[Symmetric synchronisation spoils deadlock-freedom preservation]\label{ex:sync}
  Take the systems
  \[
  \aCS_1 = \dboxed{
	 \begin{tikzpicture}[mycfsm]
		\node[state, initial, initial where = left, initial distance = .4cm, initial text={\p}] (zero) {$0$};
      \node[state] (one) [below of=zero] {$1$};
      \path (zero) edge node[above] {$\aout[A][H][][m]$} (one)
      ;
	 \end{tikzpicture}
	 \qquad\quad
	 \begin{tikzpicture}[mycfsm]
		\node[state, initial, initial where = left, initial distance = .4cm, initial text={$\HH$}] (zero) {$0$};
      \node[state] (one) [below of=zero] {$1$};
      \path
      (zero) edge[bend left] node[below] {$\ain[A][H][][m]$} (one)
      (zero) edge[bend right] node[below] {$\ain[A][H][][x]$} (one)
      ;
	 \end{tikzpicture}
  }
  \qqand
  \aCS_2 =
  \dboxed{
    \begin{tikzpicture}[mycfsm]
		\node[state, initial, initial where = left, initial distance = .4cm, initial text={$\KK$}] (zero) {$0$};
      \node[state] (one) [below of=zero]   {$1$};
      \path
      (zero) edge[bend left] node[below] {$\aout[K][B][][m]$} (one)
      (zero) edge[bend right] node[below] {$\aout[K][B][][x]$} (one)
      ;
	 \end{tikzpicture}
	 \qquad\quad
	 \begin{tikzpicture}[mycfsm]
		\node[state, initial, initial where = left, initial distance = .4cm, initial text={$\q$}] (zero) {$0$};
      \node[state] (one) [below of=zero] {$1$};
      \path (zero) edge node[above] {$\ain[K][B][][x]$} (one)
      ;
  \end{tikzpicture}
  }
  \]
  Clearly, $\aCS_1$ and $\aCS_2$ are $(\HH,\KK)$-composable and
  deadlock-free, yet their composition
  $\aCS = \aCS_1 \connect \HH \KK \aCS_2$ has a deadlock. In fact,
  when the gateway for $\KK$ receives $\msg$, it tries to
  synchronise with participant \q\ on message $\msg$ while \q\ is
  waiting only for $\msg[x]$.
  For $\aCS_2$ in isolation, this is not a deadlock, since \q\ and
  $\KK$ synchronise on $\msg[x]$ under the symmetric semantics.
  \finex
\end{example}
Notice that the counterexample of \cref{ex:sync} does not apply in an
asynchronous setting.
Indeed, $\aCS_2$ could deadlock due to the fact that $\KK$ could send
$\msg$ without synchronising with $\q$.
Likewise, the counterexample of \cref{ex:sync} does not apply in our
asymmetric setting.
Even if communication is still synchronous, the $\tau$-transitions
introduced to resolve internal choices (i.e., those prefixing outputs)
allow $\aCS_2$ to reach a deadlock configuration by choosing the
$\tau$-transition leading to the output $\aout[k][b]$.

\iftr
  Let $\aCM$ be an $\HH$-local CFSM and
  $\KK \in \ptpset \setminus \Set{\HH}$ be a participant not occurring
  in $\aCM$.
  Function $\nof[]{}$ maps the states of $\gateway \aCM \KK$ to the
  states of $\aCM$ as follows:
  \begin{align*}
	 \nof q =
	 &
		\begin{cases}
		  p, & \text{if } p \trans \al q \in \gateway \aCM \KK \text{ and } \al \text{ input label with }\KK \not\in \ptpof[\al]
		  \\
		  p, & \text{if } p \trans \al p' \trans \tau q \in \gateway \aCM \KK \text{ and } \al \text{ input label with }\KK \in \ptpof[\al]
		  \\
		  r, & \text{if } q \trans \al r \in \gateway \aCM \KK \text{ and } \al \text{ output label with }\KK \not\in \ptpof[\al]
		  \\
		  r, & \text{if } q \trans \tau q' \trans \al r \in \gateway \aCM \KK \text{ and } \al \text{ output label with }\KK \in \ptpof[\al]
		  \\
		  q,  & \text{if } q \text{ is a state of } \aCM
		\end{cases}
  \end{align*}
  Note that the first four clause imply that $q$ is a fresh state
  of $\gateway \aCM \KK$.
  



\begin{lemma}\label{lem:nof}
  Function $\nof[]{}$ is well-defined.
\end{lemma}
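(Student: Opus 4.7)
The plan is to prove well-definedness by a case analysis on the state $q$ in $\gateway{\aCM}{\KK}$, showing that exactly one clause of the definition applies to $q$ and yields a uniquely determined value. By \cref{def:gateway}, states of $\gateway{\aCM}{\KK}$ split into non-fresh states (those already in $\aCM$) and fresh states inserted by one of the two transformation patterns~(\ref{eq:gtwout}) and~(\ref{eq:gtwin}). The fresh states further split into three sub-types: the $p'$ inserted between $\ain[K][H]$ and $\tau$ by~(\ref{eq:gtwout}); the $p'$ inserted between $\ain[A][H]$ (with $\ptp[A]\neq\KK$) and $\tau$ by~(\ref{eq:gtwin}); and the $p''$ inserted between $\tau$ and $\aout[H][K]$ by~(\ref{eq:gtwin}).

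First I would observe that each fresh state has exactly one incoming and one outgoing transition in $\gateway{\aCM}{\KK}$, so that whichever clause applies, the $p$ (or $r$) it references is uniquely determined by $q$; this is immediate from the construction, since the inserted segments contribute only the transitions shown. Then I would match each sub-type against the appropriate clause: the two binary discriminants ``incoming input versus outgoing output'' and ``$\KK$ appears in the label or not'' pair up naturally with clauses~1--4, so that each fresh sub-type is covered by exactly one of those clauses, while non-fresh states fall under clause~5 yielding $q$ itself.

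The main obstacle will be verifying mutual exclusivity, i.e., ruling out spurious matches across clauses. The critical ingredients are the CFSM well-formedness constraints of \cref{def:cfsm}---a $\tau$ transition always leads to a state whose unique outgoing transition is an output, and a state with outgoing output has its unique incoming being a $\tau$---together with the assumption that $\KK$ does not occur in the transitions of $\aCM$. The label checks on $\KK$ separate the forwarding transitions added by the gateway from the communications preserved from $\aCM$, while the structural constraints ensure that the chained patterns in clauses~2 and~4 admit only one matching decomposition. Once each state is matched to exactly one clause with a uniquely determined image, well-definedness follows.
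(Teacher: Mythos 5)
Your route is the same as the paper's: the identity on the states of $\aCM$, plus the observation that each state inserted by \cref{def:gateway} has a single incoming and a single outgoing transition, so whichever clause fires determines the value uniquely. The paper's proof stops essentially there, so your extra concern for mutual exclusivity is in principle the more careful of the two.

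The difficulty is that the exclusivity-and-coverage step, as you set it up, does not match the definition as the paper actually uses it. You assume that clauses~1--4 target only fresh states and that clause~5 sends every state of $\aCM$ to itself. But in clause~2 the argument is the $q$ at the \emph{end} of the chain $p \trans{\al} p' \trans{\tau} q$, i.e.\ the original intermediate state of pattern~\eqref{eq:gtwout}: a state of $\aCM$ that \cref{def:gateway} classifies as internal but which is not fresh. The paper's example $\nof[\KK,\HH][{\aCS_2(\KK)}]{2}=0$ (where $2$ is a state of $\aCS_2(\KK)$) and the proof of \cref{prop:comp} both rely on that state being mapped to $p$ rather than to itself; on such a state, clauses~2, 3 and 5 all match literally, with three generally different values $p$, $r$ and $q$. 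Dually, your ``natural pairing'' cannot be a bijection --- there are three kinds of fresh states against four clauses --- and under a literal reading the fresh state following the $\ain[K][H]$ input in pattern~\eqref{eq:gtwout}, as well as the one preceding the $\aout[H][K]$ output in pattern~\eqref{eq:gtwin}, is the argument of no clause at all. Making the lemma go through really requires fixing a priority among the clauses, or reading clause~5 as applying to external states only. The paper's own proof glosses over this exactly as you do, but since your argument explicitly rests on establishing exclusivity, this is the step that would fail if carried out faithfully.
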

\begin{proof}
  Let $\aCM$ be an $\HH$-local CFSM and
  $\KK \in \ptpset \setminus \Set{\HH}$.
  We have to check only internal states as the restriction of $\nof[]{}$
  to the states of $\aCM$ is the identity by definition.
  If $q$ is an internal state of $\gateway \aCM \KK$, by definition of
  $\gateway \aCM \KK$, there is a unique $q'$ such that either
  $q' \trans{\ain[A][H][][m]} q \in \gateway \aCM \KK$ or
  $q \trans{\aout[H][A][][m]} q' \in \gateway \aCM \KK$.
\end{proof}

\begin{example}
  For $\aCS_1 \connect{\HH}{\KK} \aCS_2$ of \cref{ex:gc},
  $\nof[\HH,\KK][\aCS_1(\HH)] 0 = 0$ and
  $\nof[\KK,\HH][{\aCS_2(\KK)}] 2 = 0$.
  \finex
\end{example}

Function $\nof[]{}$ is similar to configuration projection when
considering CFSMs in isolation, with a main difference: when e.g.,
$\gateway \aCM \KK$ in state $q$ receives from a partner in its own
system going to some fresh state $p'$ with a $\tau$-transition to
$p''$, $\nof[]{}$ maps both $p'$ and $p''$ to $p$ (unlike configuration
projection $\cproj \_ \HH$ which maps $q$ to the successor of $p''$).
This represents the fact that the other system, and $\KK$ in
particular, are oblivious of the transition.
In fact, function $\nof[]{}$ is designed to establish a correspondence
with the other system as shown by the next proposition.

\begin{proposition}\label{prop:comp}
  Let $\aCS = \aCS_1 \connect \HH \KK \aCS_2$ be the composition of
  two $(\HH,\KK)$-composable systems $\aCS_1$ and $\aCS_2$.
  If $\aConf \in \RS[\ssem{\aCS}]$ then
  $\cproj \aConf \HH \in \RS[\ssem{\aCS_1}]$,
  $\cproj \aConf \KK \in \RS[\ssem{\aCS_2}]$, and
  $\nof[\HH,\KK][\aCS_1(\HH)]{\aConf(\HH)} \comp
  \nof[\KK,\HH][\aCS_2(\KK)]{\aConf(\KK)}$.
\end{proposition}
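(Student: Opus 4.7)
The plan is to prove all three conclusions simultaneously by induction on the length $n$ of a derivation $\aConf_0 \trans{\al_1} \cdots \trans{\al_n} \aConf$ in $\ssem{\aCS}$. The base case $n = 0$ is straightforward: by construction, the initial states of the gateways $\gateway{\aCS_1(\HH)}{\KK}$ and $\gateway{\aCS_2(\KK)}{\HH}$ coincide with the initial states of $\aCS_1(\HH)$ and $\aCS_2(\KK)$ and are therefore external (the gateway construction only inserts fresh states inside segments). Hence each projection $\cproj{\aConf_0}{\cdot}$ is the initial configuration of the corresponding subsystem, each $\nof[]{}$ acts as the identity on gateway states, and compatibility reduces to the $(\HH,\KK)$-composability hypothesis.

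For the inductive step, suppose the claim holds for $\aConf$ and consider a transition $\aConf \trans{\al} \aConf'$. I would proceed by case analysis on the participants involved in $\al$. If $\al$ is entirely internal to $\aCS_i \setminus \{\HH,\KK\}$, the corresponding projection advances by the identical step (these CFSMs are unchanged by composition), the other projection is unchanged, and the two gateway states do not move, so the io-correspondence witnessing compatibility still applies. If $\al$ involves gateway $\HH$ together with some $\ptp[X] \neq \KK$ in $\aCS_1$, one distinguishes whether the gateway is firing the first transition of an input segment (\cref{eq:gtwin}) or the last transition of an output segment (\cref{eq:gtwout}): by the definition of $\cproj$, the left-projection of $\HH$ either remains at its external pre-input state (only $\ptp[X]$ advances in $\aCS_1$) or jumps to the external post-output state, requiring in $\aCS_1$ the composed execution of the original $\tau$ followed by the original output; in both situations the matching step is available in $\aCS_1$, the right-projection is unchanged, and the $\nof$-projection of $\HH$ on the $\aCS_1$-side either does not move or moves as in the original CFSM. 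The symmetric situation involving $\KK$ is handled dually. An internal $\tau$ of a gateway (case inside a segment) leaves both projections fixed and moves one $\nof$-image to another state related to the same partner by the $\tau$-clause of the io-correspondence.

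The crucial case is when $\al$ is the synchronization $\gint[][\HH][m][\KK]$ between the two gateways (or the symmetric $\gint[][\KK][m][\HH]$): one gateway is completing an input segment by emitting toward the other, simultaneously consuming the head of the partner's output segment. By induction, before $\al$ the pair of $\nof$-projected states of $\HH$ and $\KK$ lies in the io-correspondence $R$. The existence of an output from the $\HH$-side witnessed in $R$ forces a matching input on the $\KK$-side, and by the $?!$-determinism granted by composability together with the absence of asymmetric mixed states, the pair of successor states is uniquely determined, agrees with the unique transitions actually taken by the two gateways, and lies in $R$, so the compatibility invariant persists. For reachability of the projections, observe that $\gint[][\HH][m][\KK]$ completes the forwarding of $\msg$: in $\aCS_1$ this amounts to firing the original input transition of $\HH$ from $\ptp[X]$ (advancing the left-projection to the external post-input state of $\HH$), while in $\aCS_2$ nothing externally visible has yet happened (the new internal state of $\KK$ still projects via $\nof$ to the pre-output state of $\aCS_2(\KK)$, so the right-projection is unchanged).

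The main obstacle will be bookkeeping: the definitions of $\cproj$ and $\nof$ encode a subtle phase shift between the micro-state of a gateway and the macro-state visible to its subsystem, so each case must carefully align the advance, or non-advance, of both projections with a valid step, or chain of steps, in the respective subsystem, and then thread the io-correspondence through the four kinds of intermediate transitions of a segment. The hypotheses of $?!$-determinism and absence of asymmetric mixed states are essential precisely in the gateway-gateway case, since they guarantee that the successor states dictated by the io-correspondence coincide with the unique successors enforced by the two gateways, which is what allows the compatibility invariant to be maintained along the whole induction.
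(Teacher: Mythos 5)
Your overall strategy---induction on the length of the run, case analysis on the last transition, and the use of $?!$-determinism together with the io-correspondence to propagate compatibility through the gateway-to-gateway synchronisation---is exactly the paper's, and most of your cases match its proof. There is, however, a concrete error in your bookkeeping for the case where gateway $\HH$ fires the \emph{first} transition of an input segment, i.e.\ receives a message from some $\ptp[X]\neq\KK$ of $\aCS_1$. You claim that there ``the left-projection of $\HH$ remains at its external pre-input state (only $\ptp[X]$ advances in $\aCS_1$)'', with $\HH$'s projection catching up only later, at the synchronisation with $\KK$'s gateway. This cannot work: after that step $\ptp[X]$ has moved past its output while, on your account, $\HH$ has not yet performed the matching input, and in the synchronous semantics of \cref{def:syncSem} no reachable configuration of $\ssem{\aCS_1}$ separates the two halves of an interaction. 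The reachability part of the invariant would therefore be false at every configuration between the reception from $\ptp[X]$ and the forward to $\KK$.

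The paper's \cref{def:cproj} resolves this phase shift the other way around: both fresh states of an input segment (those with an outgoing $\aout[H][K]$, possibly after a $\tau$) are mapped \emph{forward} to the post-input state $r$, so the left projection fires the whole interaction $\gint[][X][m][H]$ atomically at the very moment the gateway receives from $\ptp[X]$, and is then left unchanged by the internal $\tau$ and by the later synchronisation with $\KK$'s gateway (which in fact leaves \emph{both} projections fixed, the fresh state of $\KK$ being mapped \emph{backward} to its pre-input state, as you correctly say for the $\aCS_2$ side). Your treatment of the dual output segment---projection unchanged at the reception from $\KK$, then the original $\tau$ followed by the original output when $\HH$ emits towards $\ptp[X]$---is the one the paper uses. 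So the gap is local rather than structural: adopt the forward-mapping convention for input segments, rerun that branch of the case analysis, and the rest of your argument goes through essentially as in the paper.
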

\begin{proof}
  Let $\aConf_0$ be the initial configuration of $\aCS$ and
  \begin{align}\label{eq:run}
	 \aConf_0 \trans{\al_1} \aConf_1 \cdots \aConf_{n-1} \trans{\al_n} \aConf_n = \aConf
  \end{align}
  a run reaching $\aConf$ from $\aConf_0$.
  We proceed by induction on $n$.

  If $n = 0$ the thesis is immediate by observing that
  \begin{itemize}
  \item $\cproj \aConf \HH \in \RS[\ssem{\aCS_1}]$ and
	 $\cproj \aConf \KK \in \RS[\ssem{\aCS_2}]$ because left- and
	 right-projections are the initial configurations of $\aCS_1$ and
	 $\aCS_2$ respectively, and
  \item $\nof[\HH,\KK][\aCS_1(\HH)]{\aConf(\HH)}$ and
	 $\nof[\KK,\HH][\aCS_2(\KK)]{\aConf(\KK)}$ are the initial states
	 of $\aCS_1(\HH)$ and $\aCS_2(\KK)$ respectively which are
	 compatible by hypothesis.
  \end{itemize}

  Let $n > 0$ and assume that the statement holds for all
  configurations reachable from $\aConf_0$ in less the $n$
  transitions.
  We have that either none of $\HH$ and $\KK$ are involved in
  $\al_n$ or that at least one of them is.
  In the former case, without loss of generality, assume that the
  interacting participants, say \p\ and \q\, are both in $\aCS_1$.
  Then, by construction (cf. \cref{def:cproj}),
  $\cproj \aConf \KK = \cproj{\aConf_{n-1}} \KK$ and by inductive
  hypothesis $\cproj{\aConf_{n-1}} \KK \in \RS[\ssem{\aCS_2}]$.
  Moreover, $\cproj \aConf \HH$ equals $\cproj{\aConf_{n-1}} \HH$ but
  for the local states of \p\ and \q; hence
  $\cproj{\aConf_{n-1}} \HH \in \RS[\ssem{\aCS_1}]$ by the semantics
  of communicating systems (cf. \cref{def:syncSem}).
  Also,
  \[
	 \nof[\HH,\KK][\aCS_1(\HH)]{\aConf(\HH)}=
	 (\cproj \aConf \HH)(\HH) = (\cproj{\aConf_{n-1}} \HH)(\HH) \comp
	 (\cproj{\aConf_{n-1}} \KK)(\KK) =(\cproj \aConf \KK)(\KK)
	 = \nof[\KK,\HH][\aCS_2(\KK)]{\aConf(\KK)}
  \]
  because the equalities above hold by the definition of asymmetric
  synchronisation (cf. \cref{def:syncSem}) and the compatibility
  relation holds by the inductive hypothesis.
  So, let us assume that at least one between $\HH$ and $\KK$ is
  involved in the last transition reaching $\aConf$ and proceed by
  case analysis on $\al_n$.
  \begin{description}
  \item[\fbox{$\al_n = \tau$}] We consider only the case where $\HH$ is involved since
	 the case where $\KK$ is involved is symmetric.
	 By our gateway construction (cf. \cref{def:gateway}) only one of
	 the following two cases are possible for the transitions in
	 $\gateway{\aCS_1(\HH)} \KK$:
	 \begin{align}
		\label{eq:fromK}
		p \trans{\ain[K][H]} \aConf_{n-1}(\HH) \trans{\tau} q \trans{\aout[H][A]} r
		\\
		\label{eq:fromA}
		q \trans{\ain[A][H]} \aConf_{n-1}(\HH)  \trans{\tau} p \trans{\aout[H][K]} r
	 \end{align}
	 for some participant \p\ of $\aCS_1$ and states $p$, $q$, $r$ of
	 $\gateway{\aCS_1(\HH)} \KK$.
	 %
	 Cases~\eqref{eq:fromK} and~\eqref{eq:fromA}
	 respectively correspond to have the transitions
	 \[
		p \trans{\tau} q \trans{\aout[H][A]} r
		\qquad\qand\qquad
		q \trans{\ain[A][H]} r
	 \]
	 in the machine $\aCS_1(\HH)$.
	  
	 In case \eqref{eq:fromK} the last transition in the
	 run~\eqref{eq:run} must therefore be preceded by a transition, say
	 the $i$-th one, where the machines $\gateway{\aCS_1(\HH)} \KK$ and
	 $\gateway{\aCS_2(\KK)} \HH$ have exchanged message $\msg$.
	 Hence, we have that $\gateway{\aCS_2(\KK)} \HH$ has a transition
	 $\aConf_i(\KK) \trans{\aout[k][h]} \aConf_{i+1}(\KK)$ with
	 $\aConf_{n-1}(\HH) \comp \aConf_{i+1}(\KK)$ since (by inductive
	 hypothesis)
	 $\aConf_i(\HH) = p = \nof[\HH,\KK][\aCS_1(\HH)]{\aConf_i(\HH)}
	 \comp \nof[\KK,\HH][\aCS_2(\KK)]{\aConf_i(\KK)}$ and both gateways
	 are $?!$-deterministic.
	 We now observe that either $\aConf(\KK) = \aConf_{i+1}(\KK)$ or
	 it is a fresh state that $\gateway{\aCS_2(\KK)} \HH$ reaches after
	 having received a message from a participant of $\aCS_2$ (possibly
	 followed by a $\tau$ transition) between the $i$-th and the last
	 interactions in~\eqref{eq:run}.
	 We have that
	 $\cproj \aConf \HH = \upd{\cproj {\aConf_{n-1}} \HH}{\HH}{p}$,
	 hence $\cproj \aConf \HH \in \RS[\ssem{\aCS_1}]$ by inductive
	 hypothesis; also, $\cproj \aConf \KK \in \RS[\ssem{\aCS_2}]$ by
	 the inductive hypothesis since
	 $\cproj \aConf \KK = \cproj {\aConf_{n-1}} \KK$ because $\KK$ is
	 not involved in $\al_n$.
	 Finally, in both cases the last part of the thesis immediately
	 follows by observing that
	 $\nof[\KK,\HH][\aCS_2(\KK)]{\aConf(\KK)} =
	 \nof[\KK,\HH][\aCS_2(\KK)]{\aConf_i(\KK)}$ by definition and
	 $\nof[\HH,\KK][\aCS_1(\HH)]{\aConf(\HH)} = p$.
		
	 In case~\eqref{eq:fromA}, firstly note that by construction
	 $\p \neq \HH$ (cf. \cref{def:syncSem}).
	 Then $\gateway{\aCS_1(\HH)} \KK$ has the transition
	 $\aConf_{n-1}(\HH) \trans{\ain[a][h]} \aConf(\HH)$ which
	 corresponds to an input transition
	 $\aConf_{n-1}(\HH) \trans{\ain[a][h]} r$ where
	 $r = \cproj \aConf \HH (\HH) =
	 \nof[\HH,\KK][\aCS_1(\HH)]{\aConf_{n-1}(\HH)} =
	 \nof[\HH,\KK][\aCS_1(\HH)]{\aConf(\HH)}$ in $\aCS_1(\HH)$.
	 Hence, the thesis follows since
	 $\cproj \aConf \KK = \cproj {\aConf_{n-1}} \KK$ because $\KK$ is
	 not involved in $\al_n$ and
	 $\nof[\HH,\KK][\aCS_1(\HH)]{\aConf(\HH)} \comp
	 \nof[\KK,\HH][\aCS_2(\KK)]{\aConf(\KK)} =
	 \nof[\KK,\HH][\aCS_2(\KK)]{\aConf_{n-1}(\KK)}$ by inductive
	 hypothesis.

  \item[\fbox{$\al_n = \gint[][h][m][x]$}] By construction, either
	 $\p[x] = \KK$ or $\p[x] \neq \HH$ is a participant of $\aCS_1$.
	 In the former case, $\gateway{\aCS_1(\HH)} \KK$ has the transition
	 $\aConf_{n-1}(\HH) \trans{\aout[h][k]} \aConf(\HH)$ which
	 corresponds to an input transition while $\gateway{\aCS_2(\KK)}
	 \HH$ has the transition
	 $\aConf_{n-1}(\KK) \trans{\ain[h][k]} \aConf(\KK)$ which
	 corresponds to a sequence of transitions
	 $p \trans \tau \trans{\aout[k][b]} \aConf(\KK)$ in $\aCS_2(\KK)$
	 for some participant $\q \neq \KK$ in $\aCS_2$.
	 Then the thesis follows by the fact that
	 $\cproj \aConf \HH = \upd{\cproj {\aConf_{n-1}} \HH} \HH
	 {\aConf(\HH)}$ and
	 $\cproj \aConf \KK = \upd{\cproj {\aConf_{n-1}} \KK} \KK p$ by
	 construction (cf. \cref{def:cproj}) and that, by inductive
	 hypothesis,
	 $\nof[\HH,\KK][\aCS_1(\HH)]{\aConf_{n-1}(\HH)} = q \comp
	 \aConf(\KK) = \nof[\KK,\HH][\aCS_2(\KK)]{\aConf_{n-1}(\KK)}$ and
	 therefore, by ?!-determinism and the compatibility relation
	 $\aConf(\HH) \comp \aConf(\KK)$.

	 Suppose now that $\p[x]$ is a participant of $\aCS_1$; note that
	 by construction $\p[x] \neq \HH$ (cf. \cref{def:syncSem}).
	 Since $\cproj \aConf \KK = \cproj {\aConf_{n-1}} \KK$, the inductive
	 hypothesis immediately entails that $\cproj \aConf \KK \in \RS[\ssem{\aCS_2}]$.

	 We first show the reachability of left- and right-projections.
	 The  transition
	 $\aConf_{n-1}(\HH) \trans{\aout[h][x]} \aConf(\HH)$ is in
	 $\gateway{\aCS_1(\HH)} \KK$ by construction and it corresponds to
	 a pair of transitions
	 $p \trans \tau \aConf_{n-1}(\HH) \trans{\aout[h][x]} \aConf(\HH)$
	 in $\aCS_1(\HH)$.
	 We have that $\cproj {\aConf_{n-1}} \HH \in \RS[\ssem{\aCS_1}]$
	 (by inductive hypothesis) and since
	 $\cproj {\aConf_{n-1}} \HH(\HH) = p$ (by \cref{def:cproj}) we have
	 $\cproj {\aConf_{n-1}} \HH \trans \tau \trans{\gint[][h][m][x]} \cproj \aConf \HH$ (by \cref{def:syncSem}).

	 We now show the compatibility condition.
	 The last transition in the run~\eqref{eq:run} must be preceded by
	 a transition, say the $i$-th one, where the machines
	 $\gateway{\aCS_1(\HH)} \KK$ and $\gateway{\aCS_2(\KK)} \HH$ have
	 exchanged message $\msg$.
	 Hence, we have that $\gateway{\aCS_2(\KK)} \HH$ has a transition
	 $\aConf_i(\KK) \trans{\aout[k][h]} \aConf_{i+1}(\KK)$ with
	 \begin{align}\label{eq:nof}
		  \nof[\HH,\KK][\aCS_1(\HH)]{\aConf_{n-1}(\HH)}  = \aConf(\HH)
		  = \nof[\HH,\KK][\aCS_1(\HH)]{\aConf(\HH)}
	 \end{align}
	 which hold by definition of $\nof[{}][{}]{\_}$.
	 We now observe that either $\aConf(\KK) = \aConf_{i+1}(\KK)$ or
	 it is a fresh state that $\gateway{\aCS_2(\KK)} \HH$ reaches after
	 having received a message from a participant of $\aCS_2$ (possibly
	 followed by a $\tau$ transition) between the $i$-th and the last
	 interactions in~\eqref{eq:run}.
	 In both cases the
	 $\nof[\KK,\HH][\aCS_2(\KK)]{\aConf(\KK)} =
	 \nof[\KK,\HH][\aCS_2(\KK)]{\aConf_{n-1}(\KK)}$ and, by inductive
	 hypothesis,
	 \[
		\nof[\KK,\HH][\aCS_2(\KK)]{\aConf_{n-1}(\KK)} \comp
		\nof[\HH,\KK][\aCS_1(\HH)]{\aConf_{n-1}(\HH)}
	 \]
	 hence, by equalities~\eqref{eq:nof},
	 $\nof[\HH,\KK][\aCS_1(\HH)]{\aConf(\HH)} \comp
	 \nof[\KK,\HH][\aCS_2(\KK)]{\aConf(\KK)} $.
  \item[\fbox{$\al_n = \gint[][x][m][h]$}] The case $\p[x] = \KK$ is
	 symmetric to the previous case with $\al_n = \gint[][h][m][k]$ and
	 therefore omitted.
	 So, assume that $\p[x]$ is a participant of $\aCS_1$; note that by
	 construction $\p[x] \neq \HH$ (cf. \cref{def:syncSem}).
	 Then $\gateway{\aCS_1(\HH)} \KK$ has the transition
	 $\aConf_{n-1}(\HH) \trans{\ain[x][h]} \aConf(\HH)$ which
	 corresponds to an input transition
	 $\aConf_{n-1}(\HH) \trans{\ain[x][h]} r$.
	 %
	 We have that $\cproj {\aConf_{n-1}} \HH \in RS[\ssem{\aCS_1}]$ by
	 inductive hypothesis, and so is
	 $\cproj {\aConf_{n-1}} \HH \trans{\gint[][x][m][h]} \cproj \aConf
	 \HH$ since, by definition of left-projection,
	 $\cproj \aConf \HH(\HH) = r$.
	 The reachability of the right-projection of $\aConf$ immediately
	 follows by inductive hypothesis, since
	 $\cproj \aConf \KK = \cproj {\aConf_{n-1}} \KK$.

	 We now show the compatibility condition.
	 By definition, we have that
	 $\nof[\HH,\KK][\aCS_1(\HH)]{\aConf(\HH)} =
	 \nof[\HH,\KK][\aCS_1(\HH)]{\aConf_{n-1}(\HH)}$.
	 Moreover, we necessarily have that $\aConf(\KK)=
	 \aConf_{n-1}(\KK)$.
	 Hence by inductive hypothesis,
	 $$\nof[\HH,\KK][\aCS_1(\HH)]{\aConf(\HH)} = \nof[\HH,\KK][\aCS_1(\HH)]{\aConf_{n-1}(\HH)} \comp 
	 \nof[\KK,\HH][\aCS_2(\KK)]{\aConf_{n-1}(\KK)}= \nof[\KK,\HH][\aCS_2(\KK)]{\aConf(\KK)}$$
%
  \end{description}
  The cases $\al_n = \gint[][x][m][k]$ and $\al_n = \gint[][k][m][x]$
  are similar to the last two cases above.
\end{proof}

\fi

Now, one may think that analogously to what happens
in~\cite{francoICEprev,BarbaneradH19}, if two systems are
$(\HH,\KK)$-composable and deadlock-free then their composition is
deadlock-free too.

%

In \cref{sec:pbc} we shall prove that in our setting lock-freedom is
preserved by composition, without any further condition beside
$(\HH,\KK)$-composability.
Before doing that, we give examples showing the necessity of our conditions for  
deadlock freedom preservation.

Let us begin with compatibility.
Properties cannot be preserved under composition without
compatibility, as shown in the next example.
\begin{example}[Lack of compatibility spoils deadlock freedom preservation]
  The systems
  \[
	 \aCS_1 =\dboxed{
		\begin{array}{c}
		  \begin{tikzpicture}[mycfsm]
			 \node[state, initial, initial where = left, initial distance = .4cm, initial text={$\p$}] (zero) {$0$};
			 \node[state] (one) [right of=zero]   {$1$};
			 \path (zero) edge node[above] {$\ain[H][A][][x]$} (one)
			 ;
		  \end{tikzpicture}
		  \\
		  \begin{tikzpicture}[mycfsm]
			 \node[state, initial, initial where = left, initial distance = .4cm, initial text={$\HH$}] (zero) {$0$};
			 \node[state] (one) [right of=zero]   {$1$};
			 \node[state] (two) [right of=one]   {$2$};
			 \path
			 (zero) edge node[above] {$\tau$} (one)
			 (one) edge node[above] {$\aout[H][A][][x]$} (two)
			 ;
		  \end{tikzpicture}
		\end{array}
	 }
  \qquad\text{and}\qquad
  \aCS_2 =\dboxed{
	 \begin{array}[c]{c}
		\begin{tikzpicture}[mycfsm]
		  \node[state, initial, initial where = left, initial distance = .4cm, initial text={$\KK$}] (zero) {$0$};
		  \node[state] (one) [right of=zero]   {$1$};
		  \path (zero) edge node[above] {$\ain[C][K][][y]$} (one)
		  ;
		\end{tikzpicture}
		\\
		\begin{tikzpicture}[mycfsm]
		  \node[state, initial, initial where = left, initial distance = .4cm, initial text={$\ptp[C]$}] (zero) {$0$};
		  \node[state] (one) [right of=zero]   {$1$};
		  \node[state] (two) [right of=one]   {$2$};
		  \path
		  (zero) edge node[above] {$\tau$} (one)
		  (one) edge node[above] {$\aout[C][K][][y]$} (two)
		  ;
		\end{tikzpicture}
	 \end{array}
  }
  \]
  are trivially deadlock free.
  However, $\HH$ and $\KK$ are not compatible, since there is no
  corresponding input in $\KK$ for the output from $\HH$.
  The composition of $\aCS_1$ and $\aCS_2$  via $\HH$ and $\KK$ yields
  \[
	 \csys[c@{\hspace{1cm}}c]{
		\begin{tikzpicture}[mycfsm]
		  \node[state, initial, initial where = left, initial distance = .4cm, initial text={$\p$}] (zero) {$0$};
		  \node[state] (one) [right of=zero]   {$1$};
		  \path(zero) edge node[above] {$\ain[H][A][][x]$} (one)
		  ;
		\end{tikzpicture}
		&
		\begin{tikzpicture}[mycfsm]
		  \node[state, initial, initial where = left, initial distance = .4cm, initial text={$\gateway {\aCS_2(\KK)} \HH$}] (zero) {$0$};
		  \node[state] (two) [right of=zero]   {$2$};
		  \node[state] (three) [right of=two]   {$3$};
		  \node[state] (one) [right of=three]   {$1$};
		  \path
		  (zero) edge node[above] {$\ain[C][K][][y]$} (two)
		  (two) edge node[above] {$\tau$} (three)
		  (three) edge node[above] {$\aout[K][H][][y]$} (one)
		  ;
		\end{tikzpicture}
		\\[1em]
		\begin{tikzpicture}[mycfsm]
		  \node[state, initial, initial where = left, initial distance = .4cm, initial text={$\gateway {\aCS_1(\HH)} \KK$}] (zero) {$0$};
		  \node[state] (three) [right of=zero]   {$3$};
		  \node[state] (one) [right of=three]   {$1$};
		  \node[state] (two) [right of=one]   {$2$};
		  \path
		  (zero) edge node[above] {$\ain[K][H][][x]$} (three)
		  (three) edge node[above] {$\tau$} (one)
		  (one) edge node[above] {$\aout[H][A][][x]$} (two)
		  ;
		\end{tikzpicture}
		&
		\begin{tikzpicture}[mycfsm]
		  \node[state, initial, initial where = left, initial distance = .4cm, initial text={$\ptp[C]$}] (zero) {$0$};
		  \node[state] (one) [right of=zero]   {$1$};
		  \node[state] (two) [right of=one]   {$2$};
		  \path
		  (zero) edge node[above] {$\tau$} (one)
		  (one) edge node[above] {$\aout[C][K][][y]$} (two)
		  ;
		\end{tikzpicture}
	 }
  \]
  Starting from the initial configuration of
  $\aCS_1 \connect \HH \KK \aCS_2$, the following transitions are
  possible in $\ssem{\aCS_1 \connect \HH \KK \aCS_2}$
  \[
	 (0_{\p},0_{\HH},0_{\KK},0_{\ptp[C]}) 
	 \trans{\tau} (0_{\p},0_{\HH},0_{\KK},1_{\ptp[C]})
	 \trans{\gint[][C][y][K]} (0_{\p},0_{\HH},2_{\KK},2_{\ptp[C]})
	 \trans{\tau} (0_{\p},0_{\HH},3_{\KK},2_{\ptp[C]}) \not\trans{}
  \]
  where $(0_{\p},0_{\HH},3_{\KK},2_{\ptp[C]})$ is a deadlock
  configuration for $\ssem{\aCS_1 \connect \HH \KK \aCS_2}$ since
  $\KK$ wishes to send $\msg[y]$ to $\HH$, which is instead waiting
  for message $\msg[x]$.
  \finex
\end{example}

The following example, casting in our setting an example given
in~\cite{BarbaneradH19} for the asynchronous semantics, illustrates
that asymmetric mixed states must be avoided to preserve properties
under composition.
\begin{example}[Asymmetric mixed-states spoil deadlock freedom preservation]
  Let $\aCS_1$ and $\aCS_2$ be 
  $$
  \aCS_1 =\dboxed{
		\begin{array}{c}
		  \begin{tikzpicture}[mycfsm]
			 \node[state, initial, initial where = left, initial distance = .4cm, initial text={$\p$}] (zero) {$0$};
			 \node[state] (one) [right of=zero]   {$1$};
			 \path (zero) edge node[above] {$\ain[H][A][][x]$} (one)
			 ;
		  \end{tikzpicture}
		  \\
		  \begin{tikzpicture}[mycfsm]
			 \node[state, initial, initial where = left, initial distance = .4cm, initial text={$\q$}] (zero) {$0$};
			 \node[state] (one) [right of=zero]   {$1$};
			 \node[state] (two) [right of=one]   {$2$};
			 \path
			 (zero) edge node[above] {$\tau$} (one)
			 (one) edge node[above] {$\aout[B][H][][y]$} (two)
			 ;
		  \end{tikzpicture}
		  \\
		  \begin{tikzpicture}[mycfsm]
		\node[state, initial, initial where = left, initial distance = .4cm, initial text={$\HH$}] (zero) {$0$};
		\node[state] (one) [above right of=zero]   {$1$};
		\node[state] (two) [right of=one]   {$2$};
	     \node[state] (three) [below right of=two]   {$3$};
	     \node[state] (four) [below right of=zero]   {$4$};
	     \node[state] (five) [right of=four]   {$5$};
		\path
		(zero) edge [bend left] node[above] {$\tau$} (one)
		(one) edge node[above] {$\aout[H][A][][x]$} (two)
		(zero) edge [bend right] node[below] {$\ain[B][H][][y]$} (four)
		(two) edge [bend left]   node[above] {$\ain[B][H][][y]$} (three)
		(four) edge  node[below] {$\tau$} (five)
		(five) edge [bend right]  node[below] {$\aout[H][A][][x]$} (three)
		;
	 \end{tikzpicture}
		\end{array}
	 }
	 \qquad
	 \aCS_2 =\dboxed{
	 \begin{array}[c]{c}
		\begin{tikzpicture}[mycfsm]
		  \node[state, initial, initial where = left, initial distance = .4cm, initial text={$\ptp[C]$}] (zero) {$0$};
		  \node[state] (one) [right of=zero]   {$1$};
		  \path (zero) edge node[above] {$\ain[K][C][][y]$} (one)
		  ;
		\end{tikzpicture}
		\\
		\begin{tikzpicture}[mycfsm]
		  \node[state, initial, initial where = left, initial distance = .4cm, initial text={$\ptp[D]$}] (zero) {$0$};
		  \node[state] (one) [right of=zero]   {$1$};
		  \node[state] (two) [right of=one]   {$2$};
		  \path
		  (zero) edge node[above] {$\tau$} (one)
		  (one) edge node[above] {$\aout[K][D][][x]$} (two)
		  ;
		\end{tikzpicture}
		\\
		\begin{tikzpicture}[mycfsm]
		  \node[state, initial, initial where = left, initial distance = .4cm, initial text={$\KK$}] (zero) {$0$};
		\node[state] (one) [above right of=zero]   {$1$};
		\node[state] (two) [right of=one]   {$2$};
	     \node[state] (three) [below right of=two]   {$3$};
	     \node[state] (four) [below right of=zero]   {$4$};
	     \node[state] (five) [right of=four]   {$5$};
		\path
		(zero) edge [bend left] node[above] {$\ain[D][K][][x]$} (one)
		(one) edge node[above] {$\tau$} (two)
		(zero) edge [bend right] node[below] {$\tau$} (four)
		(two) edge [bend left]   node[above] {$\aout[K][C][][y]$} (three)
		(four) edge  node[below] {$\aout[K][C][][y]$} (five)
		(five) edge [bend right]  node[below] {$\ain[D][K][][x]$} (three)
		;
		\end{tikzpicture}
	 \end{array}
  }
	 $$
	 Noticeably, the initial states are asymmetric mixed and $\HH$
	 and $\KK$ are compatible.
	 The gateways are
    \begin{align*}
	 \begin{tikzpicture}[mycfsm, node distance = 1.2cm]
		\node[state, initial, initial where = right, initial distance = .4cm, initial text={$\gateway {\aCS_1(\HH)} \KK$}] (zero) {$0$};
	     \node[state] (six) [above right of=zero]   {$6$};
	     \node[state] (nine) [below right of=zero]   {$9$};
	     \node[state] (ten) [right of=nine]   {$10$};
		\node[state] (one) [right of=six]   {$1$};
		\node[state] (two) [right of=one]   {$2$};
		\node[state] (seven) [right of=two]   {$7$};
		\node[state] (eight) [right of=seven]   {$8$};
	     \node[state] (three) [below right of=eight]   {$3$};
	     \node[state] (four) [right of=ten]   {$4$};
	     \node[state] (eleven) [right of=four]   {$11$};
	     \node[state] (five) [right of=eleven]   {$5$};
		\path
		(zero) edge [bend left] node[above] {$\ain[K][H][][x]$} (six)
		(one) edge node[above] {$\aout[H][A][][x]$} (two)
		(zero) edge [bend right] node[below] {$\ain[B][H][][y]$} (nine)
		(two) edge   node[above] {$\ain[B][H][][y]$} (seven)
		(eleven) edge  node[below] {$\tau$} (five)
		(four) edge  node[below] {$\ain[K][H][][x]$} (eleven)
		(five) edge [bend right]  node[below] {$\aout[H][A][][x]$} (three)
		(six) edge   node[above] {$\tau$} (one)
		(seven) edge   node[above] {$\tau$} (eight)
		(nine) edge   node[below] {$\tau$} (ten)			
		(eight) edge [bend left]  node[above] {$\aout[H][K][][y]$} (three)			
		(ten) edge  node[below] {$\aout[H][K][][y]$} (four)
		;
	 \end{tikzpicture}
		\qquad
	 \begin{tikzpicture}[mycfsm, node distance = 1.2cm]
		\node[state, initial, initial where = right, initial distance = .4cm, initial text={$\gateway {\aCS_2(\KK)} \HH$}] (zero) {$0$};
		\node[state] (six) [above right of=zero]   {$6$};
		\node[state] (nine) [below right of=zero]   {$9$};
		\node[state] (seven) [right of=six]   {$7$};
		\node[state] (one) [ right of=seven]   {$1$};
		\node[state] (eight) [right of=one]   {$8$};
		\node[state] (two) [right of=eight]   {$2$};
	     \node[state] (three) [below right of=two]   {$3$};
	     \node[state] (four) [ right of=nine]   {$4$};
	     \node[state] (five) [right of=four]   {$5$};
	     \node[state] (ten) [right of=five]   {$10$};
	     \node[state] (eleven) [right of=ten]   {$11$};
		\path
		(zero) edge [bend left] node[above] {$\ain[D][K][][x]$} (six)
		(zero) edge [bend right] node[below] {$\ain[H][K][][y]$} (nine)
		(six) edge node[above] {$\tau$} (seven)
		(seven) edge node[above] {$\aout[K][H][][x]$} (one)
		(one) edge node[above] {$\ain[H][K][][y]$} (eight)
		(eight) edge node[above] {$\tau$} (two)
		(nine) edge  node[below] {$\tau$} (four)
		(two) edge [bend left]   node[above] {$\aout[K][C][][y]$} (three)
		(four) edge  node[below] {$\aout[K][C][][y]$} (five)
		(five) edge  node[below] {$\ain[D][K][][x]$} (ten)
		(ten) edge node[above] {$\tau$} (eleven)
		(eleven) edge [bend right] node[below] {$\aout[K][H][][x]$} (three)
		;
	 \end{tikzpicture}
	 \end{align*}
  The composed system $\aCS_1 \connect \HH \KK \aCS_2$ deadlocks when
  $\gateway {\aCS_1(\HH)} \KK$ receives from \q\ while
  $\gateway {\aCS_2(\KK)} \HH$ receives from \p[d] since both gateways
  reach an output state (respectively states $10$ and $7$).
  \finex
\end{example}

The following example shows that
!?-nondeterminism is problematic too.
%
\begin{example}\label{ex:dlfree}
  Consider the two deadlock-free systems
  \begin{align*}
	 \aCS_1 =
	 \csys[c@{\hspace{.5cm}}c]{
	 \begin{array}[c]{c}
		\begin{tikzpicture}[mycfsm]
		  \node[state, initial, initial where = above, initial distance = .4cm, initial text={\p}] (zero) {$0$};
		  \node[state] (one) [below of=zero]   {$1$};
		  \path (zero) edge node[above] {$\ain[H][A][][m]$} (one)
		  ;
		\end{tikzpicture}
		\qquad
		\begin{tikzpicture}[mycfsm]
		  \node[state, initial, initial where = above, initial distance = .4cm, initial text={\q}] (zero) {$0$};
		  \node[state] (one) [below of=zero]   {$1$};
		  \path (zero) edge node[above] {$\ain[H][B][][m]$} (one)
		  ;
		\end{tikzpicture}
		\\[4em]
		\begin{tikzpicture}[mycfsm]
		  \node[state, initial, initial where = above, initial distance = .4cm, initial text={$\ptp[c]$}] (zero) {$0$};
		  \node[state] (one) [below of=zero]   {$1$};
		  \path
		  (zero) edge   [bend right] node[below] {$\ain[H][C][][x]$} (one)
		  (zero) edge   [bend left] node[below] {$\ain[H][C][][y]$} (one)
		  ;
		\end{tikzpicture}
	 \end{array}
	 &
		\begin{tikzpicture}[mycfsm, node distance = 1.2cm]
		  \node[state, initial, initial where = below, initial distance = .4cm, initial text={$\HH$}] (zero) {$0$};
		  \node[state] (one) [below right of=zero]   {$1$};
		  \node[state] (two) [below  of=one]   {$2$};
		  \node[state] (three) [below of=two]   {$3$};
		  \node[state] (four) [below  of=three]   {$4$};
		  \node[state] (five) [below  of=four]   {$5$};
		  \node[state] (six) [below left of=five]   {$6$};
		  \node[state] (seven) [below left of=zero]   {$7$};
		  \node[state] (eight) [below  of=seven]   {$8$};
		  \node[state] (nine) [below of=eight]   {$9$};
		  \node[state] (ten) [below of=nine]   {$10$};
		  \node[state] (eleven) [below of=ten]   {$11$};
		  \path
		  (zero) edge [bend left] node[above] {$\tau$} (one)
		  (zero) edge   [bend right] node[above] {$\tau$} (seven)
		  (one) edge node[below] {$\aout[H][A][][m]$} (two)
		  (two) edge    node[below] {$\tau$} (three)
		  (three) edge   node[below] {$\aout[H][C][][y]$} (four)
		  (four) edge    node[below] {$\tau$} (five)
		  (five) edge  [bend left]  node[below] {$\aout[H][B][][m]$} (six)
		  (seven) edge   node[below] {$\aout[H][B][][m]$} (eight)
		  (eight) edge  node[below] {$\tau$} (nine)
		  (nine) edge  node[below] {$\aout[H][C][][x]$} (ten)
		  (ten) edge  node[below] {$\tau$} (eleven)
		  (eleven) edge   [bend right] node[below] {$\aout[H][A][][m]$} (six)
		  ;
		\end{tikzpicture}
		}
		\qqand
		\aCS_2 =
		\csys[c@{\hspace{.5cm}}c]{
		\begin{array}[c]{c}
		\begin{tikzpicture}[mycfsm]
		  \node[state, initial, initial where = above, initial distance = .4cm, initial text={$\KK$}] (zero) {$0$};
		  \node[state] (one) [below right of=zero]   {$1$};
		  \node[state] (two) [below  of=one]   {$2$};
		  \node[state] (three) [below left of=two]   {$3$};
		  \node[state] (four) [below left of=zero]   {$4$};
		  \node[state] (five) [below  of=four]   {$5$};
		  \path
		  (zero) edge [bend left] node[above] {$\ain[D][K][][m]$} (one)
		  (one) edge node[below] {$\ain[D][K][][x]$} (two)
		  (two) edge  [bend left]  node[below] {$\ain[D][K][][m]$} (three)
		  (zero) edge   [bend right] node[above] {$\ain[E][K][][m]$} (four)
		  (four) edge  node[below] {$\ain[E][K][][y]$} (five)
		  (five) edge  [bend right]  node[below] {$\ain[E][K][][m]$} (three)
		  (zero) edge node[above] {$\ain[E][K][][z]$} (three)
		  ;
		\end{tikzpicture}
		  \\[8em]
		  \begin{tikzpicture}[mycfsm]
			 \node[state, initial, initial where = left, initial distance = .4cm, initial text={$\ptp[E]$}] (zero) {$0$};
		  \end{tikzpicture}
		\end{array}
	 &
      \begin{tikzpicture}[mycfsm, node distance = 1cm]
		  \node[state, initial, initial where = right, initial distance = .4cm, initial text={$\ptp[D]$}] (zero) {$0$};
		  \node[state] (one) [below  of=zero]   {$1$};
		  \node[state] (two) [below  of=one]   {$2$};
		  \node[state] (three) [below  of=two]   {$3$};
		  \node[state] (four) [below  of=three]   {$4$};
		  \node[state] (five) [below  of=four]   {$5$};
		  \node[state] (six) [below  of=five]   {$6$};
		  \path
		  (zero) edge node[above] {$\tau$} (one)
		  (one) edge node[below] {$\aout[D][K][][m]$} (two)
		  (two) edge  node[above] {$\tau$} (three)
		  (three) edge    node[below] {$\aout[D][K][][x]$} (four)
		  (four) edge    node[above] {$\tau$} (five)
		  (five) edge  node[below] {$\aout[D][K][][m]$} (six)
		  ;
		\end{tikzpicture}
		}
  \end{align*}
 %
  %
  The deadlock freedom of $\aCS_1$ follows from the fact that, from
  its initial configuration $(0_{\p}, 0_{\q}, 0_{\ptp[C]}, 0_{\HH})$,
  $\aCS_1$ can only branch over the two $\tau$-transitions of $\HH$
  reaching either of the following configurations
  $(0_{\p}, 0_{\q}, 0_{\ptp[C]}, 7_{\HH})$ or
  $(0_{\p}, 0_{\q}, 0_{\ptp[C]}, 1_{\HH})$.
  From the former (resp. latter) configuration $\aCS_1$ can only reach
  configurations where $\HH$ synchronises with $\ptp[c]$ and then with
  \p\ (resp. \q).
  In either case $\aCS_1$ reaches the terminal configuration
  $(1_{\ptp[A]}, 1_{\ptp[B]}, 1_{\ptp[C]}, 6_{\ptp[H]})$.
  Let us take a look at $\aCS_2$.
  The only possible transitions of $\aCS_2$ can
  involve $\KK$ and $\ptp[d]$ only since $\ptp[e]$ cannot synchronise being it
  terminated.
  We therefore have that
  \begin{align*}
	 (0_{\ptp[K]}, 0_{\ptp[D]}, 0_{\ptp[E]})
	 &
		\trans \tau
		(0_{\ptp[K]}, 1_{\ptp[D]}, 0_{\ptp[E]})
		\trans{\gint[][d][m][k]}
		(1_{\ptp[K]}, 2_{\ptp[D]}, 0_{\ptp[E]})
		\trans \tau
		(1_{\ptp[K]}, 3_{\ptp[D]}, 0_{\ptp[E]})
		\trans{\gint[][d][x][k]}
		(2_{\ptp[K]}, 4_{\ptp[D]}, 0_{\ptp[E]})
	 \\ &
			\trans \tau
			(2_{\ptp[K]}, 5_{\ptp[D]}, 0_{\ptp[E]})
			\trans{\gint[][d][m][k]}
	 (3_{\ptp[K]}, 6_{\ptp[D]}, 0_{\ptp[E]})
  \end{align*}
  is the only possible execution from the initial configuration
  $(0_{\ptp[K]}, 0_{\ptp[D]}, 0_{\ptp[E]})$ of $\aCS_2$, leading to
  the terminal configuration
  $(3_{\ptp[K]}, 6_{\ptp[D]}, 0_{\ptp[E]})$.
%
  %
  \finex
\end{example}
The next example shows that the compositions of the systems $\aCS_1$
and $\aCS_2$ in \cref{ex:dlfree} can deadlock.
\begin{example}[$?!$-determinism is necessary]\label{rem:iodetnecessary}
  The CFSMs $\HH$ and $\KK$ in \cref{ex:dlfree} are compatible as
  seen in~\cref{ex:compatibility}.
  Hence, we can build the composed system
  $\aCS_1 \connect \HH \KK \aCS_2$ through the gateways
  \begin{align*}
	 \begin{tikzpicture}[mycfsm]
		\node[state, initial, initial where = left, initial distance = .4cm, initial text={$\gateway{\aCM_\HH}{\KK}$}] (zero) {$0$};
		\node[state] (twelve) [below right of=zero]   {$12$};
		\node[state] (thirteen) [above right of=zero]   {$13$};
		\node[state] (one) [right of=thirteen]   {$1$};
		\node[state] (two) [right  of=one]   {$2$};
		\node[state] (fifteen) [right  of=two]   {$15$};
		\node[state] (three) [right of=fifteen]   {$3$};
		\node[state] (four) [right  of=three]   {$4$};
		\node[state] (seventeen) [right  of=four]   {$17$};
		\node[state] (five) [right  of=seventeen]   {$5$};
		\node[state] (six) [below right of=five]   {$6$};
		\node[state] (seven) [right of=twelve]   {$7$};
		\node[state] (eight) [right  of=seven]   {$8$};
		\node[state] (fourteen) [right  of=eight]   {$14$};
		\node[state] (nine) [right of=fourteen]   {$9$};
		\node[state] (ten) [right of=nine]   {$10$};
		\node[state] (sixteen) [right of=ten]   {$16$};
		\node[state] (eleven) [right of=sixteen]   {$11$};
		\path
		(zero) edge [bend right] node[below] {$\ain[K][H][][m]$} (twelve)
		(zero) edge   [bend left] node[above] {$\ain[K][H][][m]$} (thirteen)
		(twelve) edge  node[below] {$\tau$} (seven)
		(thirteen) edge  node[above] {$\tau$} (one)
		(one) edge node[above] {$\aout[H][A][][m]$} (two)
		(two) edge    node[above] {$\ain[K][H][][y]$} (fifteen)
		(fifteen) edge  node[above] {$\tau$} (three)
		(three) edge   node[above] {$\aout[H][C][][y]$} (four)
		(four) edge    node[above] {$\ain[K][H][][m]$} (seventeen)
		(seventeen) edge    node[above] {$\tau$} (five)
		(five) edge  [bend left]  node[above] {$\aout[H][B][][m]$} (six)
		(seven) edge   node[below] {$\aout[H][B][][m]$} (eight)
		(eight) edge  node[below] {$\ain[K][H][][x]$} (fourteen)
		(fourteen) edge  node[below] {$\tau$} (nine)
		(nine) edge  node[below] {$\aout[H][C][][x]$} (ten)
		(ten) edge  node[below] {$\ain[K][H][][m]$} (sixteen)
		(sixteen) edge  node[below] {$\tau$} (eleven)
		(eleven) edge   [bend right] node[below] {$\aout[H][A][][m]$} (six)
		;
	 \end{tikzpicture}
	 \qqand
	 \\[1em]
	 \begin{tikzpicture}[mycfsm]
		\node[state, initial, initial where = left, initial distance = .4cm, initial text={$\gateway{\aCM_\KK}{\HH}$}] (zero) {$0$};
		\node[state] (eight) [above right of=zero]   {$8$};
		\node[state] (nine) [right  of=eight]   {$9$};
		\node[state] (one) [right  of=nine]   {$1$};
		\node[state] (fourteen) [right  of=one]   {$14$};
		\node[state] (fifteen) [right  of=fourteen]   {$15$};
		\node[state] (two) [right  of=fifteen]   {$2$};
		\node[state] (sixteen) [right  of=two]   {$16$};
		\node[state] (seventeen) [right  of=sixteen]   {$17$};
		\node[state] (six) [below right of=zero]   {$6$};
		\node[state] (seven) [right  of=six]   {$7$};
		\node[state] (four) [right  of=seven]   {$4$};
		\node[state] (ten) [right  of=four]   {$10$};
		\node[state] (eleven) [right  of=ten]   {$11$};
		\node[state] (five) [right  of=eleven]   {$5$};
		\node[state] (twelve) [right  of=five]   {$12$};
		\node[state] (thirteen) [right  of=twelve]   {$13$};
		\node[state] (three) [above right of=thirteen]   {$3$};
		\node[state] (18) [right of=zero,xshift=3cm]   {$18$};
		\node[state] (19) [right of=18]   {$19$};
		\path
		(zero) edge [bend left] node[above] {$\ain[D][K][][m]$} (eight)
		(nine) edge node[above] {$\aout[K][H][][m]$} (one)
		(seventeen) edge  [bend left]  node[above] {$\aout[K][H][][m]$} (three)
		(zero) edge   [bend right] node[below] {$\ain[E][K][][m]$} (six)
		(seven) edge  node[below] {$\aout[K][H][][m]$} (four)
		(five) edge   node[above] {$\ain[E][K][][m]$} (twelve)
		(four) edge  node[below] {$\ain[E][K][][y]$} (ten)
		(six) edge  node[above] {$\tau$} (seven)
		(eight) edge  node[above] {$\tau$} (nine)
		(ten) edge  node[below] {$\tau$} (eleven)
		(one) edge  node[above] {$\ain[D][K][][x]$} (fourteen)
		(twelve) edge  node[below] {$\tau$} (thirteen)
		(thirteen) edge  [bend right]  node[below] {$\aout[K][H][][m]$} (three)
		(eleven) edge  node[below] {$\aout[K][H][][y]$} (five)
		(fourteen) edge  node[above] {$\tau$} (fifteen)
		(fifteen) edge  node[above] {$\aout[K][H][][x]$} (two)
		(two) edge node[above] {$\ain[D][K][][m]$} (sixteen)
		(sixteen) edge  node[above] {$\tau$} (seventeen)
		(zero) edge node[above]{$\ain[e][k][][z]$} (18)
		(18) edge node[above]{$\tau$} (19)
		(19) edge node[above]{$\aout[k][h][][z]$} (three)
		;
	 \end{tikzpicture}
  \end{align*}
  Now, from the initial configuration
  $\aConf_0 = (0_{\p}, 0_{\q}, 0_{\ptp[c]},
  0_{\gateway{\aCM_\HH}{\KK}}, 0_{\gateway{\aCM_\KK}{\HH}},
  0_{\ptp[D]}, 0_{\ptp[e]})$ of $\aCS_1 \connect \HH \KK \aCS_2$ we have the following run
  \begin{align}
	 \aConf_0
	 \trans \tau \nonumber
	 &
		(0_{\p}, 0_{\q}, 0_{\ptp[c]}, 0_{\gateway{\aCM_\HH}{\KK}}, 0_{\gateway{\aCM_\KK}{\HH}}, 1_{\ptp[D]}, 0_{\ptp[e]})
	 \\ \nonumber
	 \trans{\gint[][D][m][K]} \trans \tau &
		(0_{\p}, 0_{\q}, 0_{\ptp[c]}, 0_{\gateway{\aCM_\HH}{\KK}}, 9_{\gateway{\aCM_\KK}{\HH}}, 2_{\ptp[D]}, 0_{\ptp[e]})
	 \\\label{eq:gatewaysmove}
	 \trans{\gint[][k][m][h]} \trans \tau &
		(0_{\p}, 0_{\q}, 0_{\ptp[c]}, 13_{\gateway{\aCM_\HH}{\KK}}, 1_{\gateway{\aCM_\KK}{\HH}}, 3_{\ptp[D]}, 0_{\ptp[e]})
	 \\\nonumber
	 \trans{\gint[][d][x][k]} &
		(0_{\p}, 0_{\q}, 0_{\ptp[c]}, 13_{\gateway{\aCM_\HH}{\KK}}, 14_{\gateway{\aCM_\KK}{\HH}}, 4_{\ptp[D]}, 0_{\ptp[e]})
	 \\\nonumber
	 \trans \tau \trans \tau \trans \tau &
		(0_{\p}, 0_{\q}, 0_{\ptp[c]}, 1_{\gateway{\aCM_\HH}{\KK}}, 15_{\gateway{\aCM_\KK}{\HH}}, 5_{\ptp[D]}, 0_{\ptp[e]})
	 \\\label{eq:dl}
	 \trans{\gint[][h][m][a]} &
		(1_{\p}, 0_{\q}, 0_{\ptp[c]}, 2_{\gateway{\aCM_\HH}{\KK}}, 15_{\gateway{\aCM_\KK}{\HH}}, 5_{\ptp[D]}, 0_{\ptp[e]})
  \end{align}
  where the $\tau$-transition of $\ptp[d]$ enables the synchronisation
  of $\gateway{\aCM_\KK}{\HH}$ and $\ptp[d]$ with label
  $\gint[][D][m][K]$ that leads the gateway in state $9$ after its
  $\tau$-transition from state $8$.
  Now, the two gateways can communicate and exchange message
  $\msg[m]$.
  Due to $?!$-nondeterminism of $\aCS_1$, from state $0$
  $\gateway{\aCM_\HH}{\KK}$ can move either to state $12$ or to state
  $13$.
  Fatally, transition \eqref{eq:gatewaysmove} leads to a
  deadlock: after $\gateway{\aCM_\KK}{\HH}$ and $\ptp[d]$ synchronise
  to exchange message $\msg[x]$ the system goes into a configuration
  from where $\gateway{\aCM_\HH}{\KK}$ forwards $\msg[m]$ to \p\ and
  reaches the last configuration \eqref{eq:dl}.
  This is a deadlock for $\aCS_1 \connect \HH \KK \aCS_2$, since no
  CFSM can do a $\tau$-transitions, the only enabled output action is
  from $\gateway{\aCM_\KK}{\HH}$ which tries to send message $\msg[x]$
  to $\gateway{\aCM_\HH}{\KK}$; however, $\gateway{\aCM_\HH}{\KK}$ can
  only receive message $\msg[y]$ from $\KK$ and hence these actions
  cannot synchronise.
  \finex
\end{example}


\section{Preserving Properties by Composition}\label{sec:pbc}
Composition via gateways does not ensure the preservation of
communication properties.
We provide below sufficient conditions for this to happen.
%
%
Recall that $(\HH, \KK)$-composability requires absence of asymmetric
mixed states and ?!-determinism.
\begin{restatable}[Deadlock freedom preservation]{theorem}{seqgatthm}\label{th:dfSeqGat}
  Let $\aCS_1$ and $\aCS_2$ be two $(\HH,\KK)$-composable and
  deadlock-free systems.
  Then the composed system $\aCS_1 \connect \HH \KK \aCS_2$ is
  deadlock-free.
\end{restatable}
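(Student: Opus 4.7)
The proof proceeds by contradiction: suppose $\aConf \in \RS[\ssem \aCS]$ with $\aCS = \aCS_1 \connect \HH \KK \aCS_2$ is a deadlock. Invoking Proposition \ref{prop:comp} gives three facts: $\cproj \aConf \HH \in \RS[\ssem{\aCS_1}]$, $\cproj \aConf \KK \in \RS[\ssem{\aCS_2}]$, and the gateway images satisfy $\nof[\HH,\KK][\aCS_1(\HH)]{\aConf(\HH)} \comp \nof[\KK,\HH][\aCS_2(\KK)]{\aConf(\KK)}$. Since $\aConf$ has no outgoing transitions, no component can fire a $\tau$; this already pins each gateway state to either an external state of the original CFSM or to a fresh state of the shape $p''$ immediately preceding an output to the peer gateway (states of shape $p'$, which sit between an input and a $\tau$, are ruled out).

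The first step is to transfer the enabling of some participant from $\aConf$ to one of the projections. A non-gateway enabled participant retains its enabling state under projection; an enabled gateway corresponds by construction of $\nof$ to a non-terminal state of the original machine, and by the compatibility of $\aCS_1(\HH)$ and $\aCS_2(\KK)$ combined with the absence of asymmetric mixed states, at least one of $\cproj \aConf \HH$ or $\cproj \aConf \KK$ enables some participant. Assume without loss of generality it is the former. By deadlock-freedom of $\aCS_1$, $\cproj \aConf \HH$ admits some outgoing transition in $\ssem{\aCS_1}$, and I then case split on its form and show it lifts to a transition of $\aConf$: (i) a $\tau$ by a participant $\p \neq \HH$, or a synchronisation between two participants $\p, \q \neq \HH$, lifts verbatim; (ii) a synchronisation $\gint[][X][m][H]$ with $\HH$ receiving lifts, since the gateway for $\HH$ in an external state retains that same input in $\gateway{\aCS_1(\HH)}{\KK}$; (iii) a $\tau$ by $\HH$ followed by an output to some $\p \neq \HH$ corresponds in the gateway to an initial input from $\KK$, so I must exhibit a matching output at $\gateway{\aCS_2(\KK)}{\HH}$.

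The main obstacle is case (iii) and its mirror image where the subsystem can only progress if its gateway first receives from the peer. Here I would use the compatibility $\nof[\HH,\KK][\aCS_1(\HH)]{\aConf(\HH)} \comp \nof[\KK,\HH][\aCS_2(\KK)]{\aConf(\KK)}$ transported by Proposition \ref{prop:comp}: the io-correspondence, combined with $?!$-determinism and the absence of asymmetric mixed states, forces the peer gateway to be in a state whose behaviour (up to $\tau$) dualises the needed input; since that peer is also deprived of $\tau$-outgoings and is hence in an external or $p''$-state, the dual output must actually be enabled at $\aConf(\KK)$, yielding the missing gateway-to-gateway interaction. If instead $\cproj \aConf \KK$ is the side that forces the action via its own deadlock-freedom, the symmetric argument applies. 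Either way I obtain a transition out of $\aConf$ in $\ssem\aCS$, contradicting the deadlock assumption and concluding the proof.
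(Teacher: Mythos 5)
Your proposal is correct and follows essentially the same route as the paper's proof: project the deadlocked configuration onto the component systems via \cref{prop:comp}, transfer the enabling of some participant to a projection, invoke deadlock freedom of that component to obtain a transition, and lift it back to the composed system using the gateway construction together with compatibility, $?!$-determinism and the absence of asymmetric mixed states in the critical gateway-to-gateway case. The only (harmless, at sketch level) imprecision is the claim that the peer's dual output ``must actually be enabled at $\aConf(\KK)$'' --- the peer gateway may still be awaiting an input from its own subsystem, in which case one falls back on the symmetric argument via deadlock freedom of $\aCS_2$, exactly as you note in your last sentence and as the paper does.
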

\iftr

  Given an $\HH$-local CFSM $\aCM$ and a participant
  $\KK \in \pset \setminus \{\HH\}$, call \emph{connecting} a fresh
  asymmetric sending state of $\gateway \aCM \KK$ whose next outgoing
  transition does not have $\KK$ as receiver.
  \begin{proof} 
	 We show that if the composed system $\aCS_1 \connect \HH \KK
	 \aCS_2$ reaches a deadlock configuration $\aConf$ then at least
	 one of $\cproj \aConf \HH$ and $\cproj \aConf \KK$ is a deadlock.
	 Without loss of generality, we assume that the deadlock is the
	 left-projection; the case where the deadlock is the
	 right-projection is similar.
	 
	 First, we show that if a participant $\p$ from $\aCS_1$ has an
	 enabled transition in $\aConf$ then some participant in $\aCS_1$
	 has a transition enabled in $\cproj \aConf \HH$.
	 Note that $\cproj \aConf \HH$ is reachable in $\aCS_1$ by
	 \cref{prop:comp}.
	 
	 If $\p \neq \HH$ then any transition of $\p$ enabled in $\aConf$
	 is also enabled in $\cproj \aConf \HH$ since
	 $\cproj \aConf \HH(\p) = \aConf(\p)$ by \cref{def:cproj}.
	 If $\p=\HH$, then either of the following cases occurs
	 \begin{itemize}
	 \item $\HH$ has enabled an input

		Assume that the input is from $\KK$.
		Then by construction $\cproj \aConf \HH(\HH)$ has a
		$\tau$-transition enabled in $\aCS_1(\HH)$.

		If the input of $\HH$ is from a participant \p\ of $\aCS_1$ then
		by construction $\cproj \aConf \HH(\HH)$ has an input transition
		enabled in $\aCS_1(\HH)$.
	 \item $\HH$ has enabled an output.
		
		Assume that the receiver of such output is $\KK$.
		Then
		$\nof[\HH,\KK][\aCS_1(\HH)]{\aConf(\HH)} \comp
		\nof[\KK,\HH][\aCS_2(\KK)]{\aConf(\KK)}$ by \cref{prop:comp}.
		By definition of $\nof[]{}$ and of gateway,
		$\nof[\HH,\KK][\aCS_1(\HH)]{\aConf(\HH)}$ has an input
		transition enabled from a participant in $\aCS_1$, hence
		$\nof[\KK,\HH][\aCS_2(\KK)]{\aConf(\KK)}$ has a corresponding
		output transition enable towards a participant in $\aCS_2$ by
		compatibility.
		By construction (\cref{def:cproj}),
		$\cproj \aConf \KK(\KK) =
		\nof[\KK,\HH][\aCS_2(\KK)]{\aConf(\KK)}$, hence there is a
		participant, in particular $\KK$, willing to take a transition.

		If the receiver of the output from $\HH$ is a participant of
		$\aCS_1$ then, by definition of gateway and configuration
		projection, we get that in $\HH$ is willing to perform an output
		from $\cproj \aConf \HH(\HH)$ in $\aCS_1$.
	 \item $\HH$ can perform a $\tau$-transition.

		Then, there is a sequence of transitions of the form
		$\aConf(\HH) \trans \tau \trans{\aout[h][x]}$ in $\aCS(\HH)$
		with $\p[x] = \KK$ or $\p[x] \neq \HH$ participant of $\aCS_1$.
		If the former case we can reason as in the previous case when
		$\HH$ outputs to $\KK$.
		Otherwise, $\cproj \aConf \HH(\HH)$ has an enabled
		$\tau$-transition in $\aCS_1(\HH)$ by \cref{def:cproj}.
	 \end{itemize}
	 If $\aConf$ is a deadlock, by definition of deadlock freedom
	 (cf. \cref{def:props}), $\aConf \not\trans{}$ but there are
	 participants in $\aCS$ with enabled transitions in $\aConf$.
	 Under the assumption that $\cproj \aConf \KK$ is deadlock-free,
	 such participants must belong to $\aCS_1$.
	 By the cases shown above, $\cproj \aConf \HH$ enables some
	 participants in $\aCS_1$; therefore, there is
	 $\cproj \aConf \HH \trans \al$ because $\aCS_1$ is deadlock free
	 by hypothesis.
	 It must be that $\HH$ is involved in all transitions from
	 $\cproj \aConf \HH$ of $\aCS_1$ otherwise $\aConf \trans \al$
	 since for all $\p[x] \in \ptpof[\al]$
	 $\aConf(\p[x]) = \cproj \aConf \HH (\p[x])$ (by \cref{def:cproj})
	 contrary to our assumption that $\aConf$ is a deadlock.
	 We proceed by case analysis on $\al$.

	 \noindent
	 \fbox{$\al = \gint[][h][m][x]$} If $\p[x] \neq \KK$ then
	 $\cproj \aConf \HH (\p[x]) = \aConf(\p[x]) \trans{\ain[h][a]}$;
	 also, $\HH$ would be in a connecting state and, by
	 \cref{def:cproj},
	 $\cproj \aConf \HH (\HH) = \aConf(\HH) \trans{\aout[h][a]}$.
	 Hence $\aConf \trans \al$ contrary to the hypothesis
	 that $\aConf$ is a deadlock configuration.

	 If $\p[x] = \KK$ then we have again a contradiction since
	 $\aConf \trans{\al}$ by \cref{prop:comp}.
	 
	 \noindent
	 \fbox{$\al = \gint[][x][m][h]$} If $\p[x] \neq \KK$ then
	 $\cproj \aConf \HH (\p[x]) = \aConf(\p[x]) \trans{\aout[a][h]}$
	 and $\cproj \aConf \HH (\p[x]) \trans{\ain[a][h]} = \aConf(\HH)$
	 Hence $\aConf \trans \al$ contrary to the hypothesis that $\aConf$
	 is a deadlock configuration.

	 If $\p[x] = \KK$ then we have again a contradiction since
	 $\aConf \trans{\al}$ by \cref{prop:comp}.

	 \noindent
	 \fbox{$\al = \tau$} If $\aConf(\HH) = p''$ is fresh then
	 $\gateway{\aCS_1(\HH)} \KK$ must have a sequence of transitions
	 \begin{align} \label{eq:toK}
		p \trans{\ain[a][h]} p' \trans \tau p'' \trans{\aout[h][k]} r
	 \end{align}
	 with $\p$ participant of $\aCS_1$ and $p'$ fresh.
	 (Note that it cannot be $\aConf(\HH) = p'$ otherwise
	 $\aConf(\HH) \trans \tau$ contradicting $\aConf \not\trans{}$.)
	 By \cref{def:cproj}, $\cproj \aConf \HH(\HH) = r$.
	 Hence, by \cref{prop:comp} we have $s \trans{\gint[][h][m][k]}$
	 contradicting $\aConf \not\trans{}$.

	 If $\aConf(\HH)$ is not fresh then $\HH$ has a $\tau$ transition
	 enabled at $\aConf$ (because $\aConf(\HH) = \cproj \aConf
	 \HH(\HH)$ by \cref{def:cproj}) again contradicting
	 $\aConf \not\trans{}$.
  \end{proof}
  \else
\begin{proofsketch}
  The proof relies on the fact that the reachable configurations of
  $\aCS_1 \connect \HH \KK \aCS_2$ can be projected on reachable
  configurations of $\aCS_1$ and $\aCS_2$.
  This implies that a deadlock in $\aCS_1 \connect \HH \KK \aCS_2$
  corresponds to a deadlock in $\aCS_1$ or in $\aCS_2$.
  See the appendix for the detailed proof.
\end{proofsketch}
  \fi

\begin{example}
  We can infer deadlock-freedom of the system
  $\aCS = \aCS_1 \connect{\HH}{\KK} \aCS_2$ of \cref{ex:gc} by the
  result above, since $\aCS_1$ and $\aCS_1$ are $(\HH,\KK)$-composable
  and deadlock-free.
\end{example}

Somehow surprisingly, in the symmetric case preservation of deadlock
freedom requires stricter conditions on gateways than in the
asymmetric case.
In fact, in the asymmetric case, deadlock freedom preservation
requires only absence of asymmetric mixed states and $?!$-determinism
while the symmetric case requires the (stronger) condition of
\emph{sequentiality}.
\begin{definition}[Sequential CFSM]
  A CFSM is \emph{sequential} if each of its states has at most one
  outgoing transition.
  A participant \p\ of a system $\aCS$ is sequential if $\aCS(\p)$ is
  so.
\end{definition}
As we will see (cf. \cref{thmt@@lfpresthm}), sequentiality is necessary
to preserve lock-freedom also in the asymmetric case.
We note that sequentiality implies absence of asymmetric mixed states
and $?!$-determinism, while the converse does not hold.

As mentioned before, the property of lock freedom is not preserved in
general by composition, as shown by the following example.
\begin{example}[Composability does not preserve lock-freedom]\label{rem:lfnotpres} 
  Take the communicating systems
  \begin{align*}
	 \aCS_1 = \dboxed{
	 \begin{array}{c}
		\begin{tikzpicture}[mycfsm]
		  \node[state, initial, initial where = left, initial distance = .4cm, initial text={$\p$}] (zero) {$0$};
		  \node[state] (one) [below of=zero]   {$1$};
		  \path
		  (zero) edge[bend left] node[below] {$\tau$} (one)
		  (one) edge[bend left]  node[above] {$\aout[A][H][][m]$} (zero)
		  ;
		\end{tikzpicture}
	 \end{array}
	 \qquad
	 \begin{tikzpicture}[mycfsm]
		\node[state, initial, initial where = left, initial distance = .4cm, initial text={$\HH$}] (zero) {$0$};
      \node[state] (one) [below of=zero]   {$1$};
      \path
      (zero) edge [loop right,looseness=40] node[above] {$\ain[A][H][][m]$} (zero)
      (zero) edge node[above] {$\ain[A][H][][x]$} (one)
      ;
	 \end{tikzpicture}
	 }
	 \qand
	 \aCS_2 = \dboxed{
	 \begin{array}{ccc}
		\begin{tikzpicture}[mycfsm]
		  \node[state, initial, initial where = left, initial distance = .4cm, initial text={$\KK$}] (zero) {$0$};
		  \node[state] (one) [below left of=zero]   {$1$};
		  \node[state] (two) [below right of=zero]   {$2$};
		  \node[state] (three) [below  of=two]   {$3$};
		  \path
		  (zero)  edge[bend right] node[above, near end] {$\tau$} (one)
		  (zero) edge[bend left] node[above] {$\tau$} (two)
		  (two) edge node[above] {$\aout[K][C][][x]$} (three)
		  (one) edge[bend right] node[below] {$\aout[K][C][][m]$} (zero)
		  ;
		\end{tikzpicture}
		&
		\begin{tikzpicture}[mycfsm]
		  \node[state, initial, initial where = left, initial distance = .4cm, initial text={$\q$}] (zero) {$0$};
		  \node[state] (one) [below of=zero]   {$1$};
		  \path (zero) edge node[above] {$\ain[C][b][][stop]$} (one)
		  ;
		\end{tikzpicture}
		&
		\begin{tikzpicture}[mycfsm]
		  \node[state, initial, initial where = left, initial distance = .4cm, initial text={$\p[c]$}] (zero) {$0$};
		  \node[state] (one) [below right of=zero]   {$1$};
		  \node[state] (two) [below left of=one]   {$2$};
		  \node[state] (three) [above left of=two]   {$3$};
		  \path
		  (zero) edge[loop below,looseness=40] node[below] {$\ain[K][C][][m]$} (zero)
		  (zero) edge [bend left] node[above] {$\ain[K][C][][x]$} (one)
		  (one) edge [bend left]  node[above] {$\tau$} (two)
		  (two) edge [bend left]  node[below] {$\aout[C][b][][stop]$} (three)
		  ;
		\end{tikzpicture}
	 \end{array}
	 }
  \end{align*}
  Note that both $\aCS_1$ and $\aCS_2$ are lock-free and that $\HH$
  and $\KK$ are compatible.
  The gateways are
  \begin{align*}
	 \begin{tikzpicture}[mycfsm]
		\node[state, initial, initial where = left, initial distance = .4cm, initial text={$\gateway {\aCM_\HH} \KK$}] (zero) {$0$};
      \node[state] (one) [above right of=zero]   {$1$};
      \node[state] (two) [below right of=zero]   {$2$};
      \node[state] (three) [right  of=one]   {$3$};
      \node[state] (four) [right  of=two]   {$4$};
      \node[state] (five) [right  of=three]   {$5$};
      \path
      (zero) edge[bend right] node[below] {$\ain[A][H][][m]$} (two)
      (zero) edge[bend left] node[above] {$\ain[a][H][][x]$} (one)
      (one) edge node[above] {$\tau$} (three)
      (two) edge node[below] {$\tau$} (four)
      (three) edge node[above] {$\aout[H][K][][x]$} (five)
      (four) edge[bend right] node[above] {$\aout[H][K][][m]$} (zero)
      ;
	 \end{tikzpicture}
	 \qqand
	 \begin{tikzpicture}[mycfsm]
		\node[state, initial, initial where = left, initial distance = .4cm, initial text={$\gateway {\aCM_\KK} \HH$}] (zero) {$0$};
      \node[state] (five) [below right of=zero]   {$5$};
      \node[state] (four) [above right of=zero]   {$4$};
      \node[state] (one) [right  of=five]   {$1$};
      \node[state] (two) [right of=four]   {$2$};
      \node[state] (three) [right  of=two]   {$3$};
      \path
      (zero)  edge[bend right] node[below] {$\ain[H][K][][m]$} (five)
      (zero) edge[bend left] node[above] {$\ain[H][K][][x]$} (four)
      (five)  edge node[below] {$\tau$} (one)
      (four) edge node[above] {$\tau$} (two)
      (two) edge node[above] {$\aout[K][b][][x]$} (three)
      (one) edge[bend right] node[below] {$\aout[K][C][][m]$} (zero)
      ;
	 \end{tikzpicture}
  \end{align*}
  Hence, the composed system $\aCS_1 \connect \HH \KK \aCS_2$ is non
  lock-free because e.g. the configuration
  \[
	 \aConf = (0_{\p},0_{\gateway {\aCM_{\HH}}{ \KK}}, 0_{\gateway {\aCM_{\KK}}{ \HH}}, 0_{\q}, 0_{\ptp[c]})
  \]
  is a lock for $\q$, since the only outgoing transition from $0_{\q}$
  could be fired only in case the transition $\aout[c][b][][stop]$ is
  enabled.
  However, this is impossible since $\gateway {\aCM_{\HH}}{ \KK}$
  forwards only message $\msg[m]$; hence, the run
  $\aConf \trans \tau \trans{\gint[][a][m][h]} (0_{\p},2_{\gateway
	 {\aCM_{\HH}}{\KK}}, 0_{\gateway{\aCM_{\KK}}{\HH}}, 0_{\q},
  0_{\ptp[c]}) \trans \tau \trans{\gint[][a][m][h]} \aConf \cdots$
  (which does not involve $\q$) is perpetually executed.
  \finex
\end{example}

%
We show that the problem of \cref{rem:lfnotpres} cannot happen in case
we restrict to sequential gateways, as done for deadlock freedom in
the symmetric case (cf.~\cite{BLT20b}).
As usual, $\prestrict f X$ denotes the restriction of a function $f$
on a subset $X$ of its domain.
%

\begin{restatable}[Lock-freedom preservation]{theorem}{lfpresthm}
  Let $\aCS_1$ and $\aCS_2$ be two $(\HH,\KK)$-composable and
  lock-free systems with $\HH$ and $\KK$ sequential.
  Then the composed system $\aCS_1 \connect \HH \KK \aCS_2$ is
  lock-free.
\end{restatable}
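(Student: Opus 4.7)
The plan is to proceed by contraposition. Suppose that a reachable configuration $\aConf$ of $\aCS_1 \connect \HH \KK \aCS_2$ is a lock for some participant $\p$; I aim at a contradiction with the lock-freedom of $\aCS_1$ or $\aCS_2$. By symmetry, assume $\p \in \dom{\aCS_1}$. The strategy is to show that the projection $\cproj \aConf \HH$, which lies in $\RS[\ssem{\aCS_1}]$ by Proposition~\ref{prop:comp}, is a lock for some participant in $\aCS_1$, contradicting the hypothesis.

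First, I would verify that $\cproj \aConf \HH$ enables the intended participant. For $\p \neq \HH$, the equality $\cproj \aConf \HH(\p) = \aConf(\p)$ given by \cref{def:cproj} yields enabledness directly. For $\p = \HH$, I proceed by a case analysis on the gateway state $\aConf(\HH)$: if it is an external state, its projection already enables $\HH$ in $\aCS_1$ with either an input or a $\tau$; if instead $\aConf(\HH)$ is a fresh or internal state, then $\HH$ has either an immediate $\tau$, an immediate input, or an immediate output toward $\KK$. In the latter subcases, using sequentiality and compatibility to argue that the corresponding $\KK$-side action is enabled, one obtains a one-step run from $\aConf$ involving $\HH$, contradicting the lock assumption outright.

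The central step is the \emph{lifting} of runs from $\aCS_1$ to the composed system. Given any run $\sigma$ from $\cproj \aConf \HH$ in $\aCS_1$ in which $\p$ is involved, I build by induction on $\sigma$ a run from $\aConf$ in the composed system in which $\p$ is likewise involved. Transitions in $\sigma$ not involving $\HH$ are directly mirrored, since all non-$\HH$ participants share the same local state in $\aConf$ and in $\cproj \aConf \HH$. A transition of $\HH$ in $\sigma$ is simulated by the corresponding gateway pattern: an $\aCS_1$-input to $\HH$ is performed and then completed by the silent-then-output forwarding to $\KK$, while an $\aCS_1$-output of $\HH$ is preceded by the input-then-silent pattern receiving from $\KK$. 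Sequentiality of $\HH$ and $\KK$ makes these forwarding paths deterministic, avoiding the kind of wrong-branch commitment that caused the counterexample in \cref{rem:lfnotpres}, and compatibility together with Proposition~\ref{prop:comp} guarantees that the logical states of the two gateways stay matched so that each required synchronisation is enabled.

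The hard part is handling the coupling with $\aCS_2$: when the lift requires $\HH$ to receive from $\KK$, $\KK$ must first obtain the matching message from some $\p[b] \in \aCS_2$. I would interleave into the lift the minimal $\aCS_2$-steps that feed $\KK$ with the expected message, using lock-freedom of $\aCS_2$ to ensure such auxiliary actions can always be performed and sequentiality of $\KK$ (together with compatibility) to guarantee that the message actually offered by $\aCS_2$ is precisely the one expected by the gateway forwarding. Since the resulting composed-system run involves $\p$ at exactly the same point where $\sigma$ does, the initial lock assumption on $\aConf$ is contradicted.
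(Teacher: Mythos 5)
Your proposal is correct and follows essentially the same route as the paper: assume a lock for some $\p[x]$ in $\aCS_1 \connect \HH \KK \aCS_2$, project the offending configuration onto $\aCS_1$ via $\cproj \aConf \HH$ (reachable by \cref{prop:comp}), invoke lock-freedom of $\aCS_1$ to obtain a run involving $\p[x]$, and lift that run back to the composed system by induction on its length. The paper packages the lifting as a standalone reconstruction lemma (\cref{lem:reconstruction}) applied repeatedly, which is exactly your inductive step. The one genuine divergence is how the cross-gateway synchronisation is justified when the lift must let $\gateway{\aCS_1(\HH)}{\KK}$ receive from $\gateway{\aCS_2(\KK)}{\HH}$ before emitting an output inside $\aCS_1$: the paper's lemma derives the enabledness of that forwarding directly from the compatibility invariant of \cref{prop:comp}, whereas you interleave the minimal $\aCS_2$-steps that actually feed $\KK$'s gateway the message it must forward, paying for them with lock-freedom of $\aCS_2$ and sequentiality of $\KK$ (which pins down the unique message $\KK$ can accept and hence forward). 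Your version is the more conservative one --- the invariant of \cref{prop:comp} identifies, under $\nof[]{}$, the states of $\KK$'s gateway before and after it has consumed the message from its own system, so making the feeding explicit closes a step the paper's lemma passes over quickly; the extra cost is only the (easy) observation that the interleaved steps touch no participant of $\aCS_1$ and so do not disturb the rest of the lift. Your remaining sketchy points (the enabledness analysis when the locked participant is $\HH$ itself, and the dual obligation that $\KK$'s gateway be able to accept a message forwarded by $\HH$'s gateway) are left at the same level of detail as in the paper's own argument.
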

\iftr
  \begin{lemma}\label{lem:reconstruction}
	 Let $\aCS = \aCS_1 \connect \HH \KK \aCS_2$ where $\aCS_1$ and
	 $\aCS_2$ are two $(\HH,\KK)$-composable systems with $\HH$ and
	 $\KK$ sequential.
	 Given $\aConf \in \RS[\ssem \aCS]$, if
	 \begin{itemize}
	 \item[(1)] either $\cproj{\aConf} \HH \trans \al \aConf'$
		in $\aCS_1$ and
		$\al = \tau \implies \cproj \aConf \HH(\HH) = \aConf'(\HH)$
	 \item[(2)] or $\cproj \aConf \HH \trans \tau \hat \aConf$
		involving $\HH$ in $\aCS_1(\HH)$ and $\hat \aConf$ reaching a
		configuration $\hat \aConf'$ such that
		$\hat \aConf' \trans \al \aConf'$ in $\aCS_1$ with
		$\al = \gint[][h][m][a]$
	 \end{itemize}
	 then there is a run
	 $\aConf \trans{\al_1} \cdots \trans{\al_n} \hat\aConf$ in $\aCS$
	 such that $\al_n = \al$ and $\cproj{\hat \aConf} \HH = \aConf'$.

	 The same holds for the right-projection of $\aCS$.
  \end{lemma}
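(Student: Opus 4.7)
The plan is to case-split on the shape of $\al$ and on the current gateway state $\aConf(\HH)$, then use \cref{def:gateway}'s segment forms to translate the $\aCS_1$-move into a run in $\aCS$ ending with $\al$. Sequentiality of $\HH$ and $\KK$ is pivotal throughout: at every intermediate gateway state the next step is uniquely determined, and so is the matching choice on the $\KK$-side forced by compatibility. Whenever $\aConf(\HH)$ lies in the middle of an input-forwarding segment whose endpoint is the external state needed to fire $\al$, I first drive the $\HH$-gateway through the forced residual steps (an internal $\tau$ and the final output $\aout[h][k]$ to $\KK$) to reach that external state; the synchronisation of the $\aout[h][k]$ with $\KK$ is supplied by the $\nof$-compatibility invariant of \cref{prop:comp}.

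For case (1), if $\al$ does not involve $\HH$ (which, via the side-condition on $\tau$-labels, covers every $\tau$ not performed by $\HH$), the transition lifts directly because \cref{def:cproj} acts as the identity on the involved participants. If $\al = \gint[][a][m][h]$ is an input of $\HH$ from \p, then after the prefix driving above we have $\aConf(\HH) = p$, and a single interaction synchronising \p's output with the gateway edge $p \trans{\ain[a][h]} p'$ closes the case, with the projection of $p'$ equal to $r = \aConf'(\HH)$. If $\al = \gint[][h][m][a]$ is an output of $\HH$, the shape of sending states in \cref{def:cfsm} (unique incoming $\tau$) together with sequentiality and \cref{def:cproj} rule out any fresh state projecting to the source $q$, so $\aConf(\HH) = q$ already and the single step $q \trans{\aout[h][a]} r$ synchronised with \p\ suffices.

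For case (2), \cref{def:gateway} rewrites the pair $p \trans{\tau} q \trans{\aout[h][a]} r$ into the extended segment $p \trans{\ain[k][h]} p' \trans{\tau} q \trans{\aout[h][a]} r$, so the $\HH$-gateway must first receive the prompt $\msg$ from $\KK$. After prefix-driving the gateway to $p$, \cref{prop:comp} supplies a matching input of $\msg$ in $\aCS_2(\KK)$, exposed in $\KK$'s gateway as an input-forwarding segment ending with the output $\aout[k][h]$ to $\HH$. Executing that segment (synchronising first with the $\aCS_2$-participant that feeds $\KK$, then with $\HH$), performing the $\tau$ $p' \trans{\tau} q$, and finally firing $\al$ as $\al_n$, yields the required run; any intermediate $\aCS_1$-steps between $\hat\aConf$ and $\hat\aConf'$ in the hypothesis are handled by iterating the case-(1) argument, and the final left-projection coincides with $\aConf'$ by \cref{def:cproj}.

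The main obstacle is certifying that every $\aCS_2$-participant we rely on — both the sender of $\msg$ to $\KK$ in case (2) and any $\aCS_2$-participant receiving from $\KK$ during prefix driving — is actually enabled at $\aConf$. The plan is to combine sequentiality of $\KK$ (which deterministically pins down $\KK$'s required next gateway commitment) with reachability of $\aConf$ and the compatibility invariant of \cref{prop:comp}: any run reaching $\aConf$ must already have positioned those $\aCS_2$-participants coherently, for otherwise compatibility between the two gateways would have been violated along the run; where residual preparation of an $\aCS_2$-participant is still needed, a symmetric application of the lemma on the right-projection drives it into place before we commit to the steps above.
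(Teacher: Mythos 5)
Your overall route coincides with the paper's: case analysis on $\al$, lifting each $\aCS_1$-step through the gateway segments of \cref{def:gateway}, with sequentiality pinning down the gateway's residual moves and \cref{prop:comp} supplying the cross-system matching. The small divergences in case (1) are harmless: the paper excludes the subcase $\al = \gint[][h][m][a]$ outright, arguing via \cref{def:cproj} that the left-projection never exposes an output-enabled state of $\HH$ (so all $\HH$-outputs are funnelled into case (2)), whereas you discharge it directly; and the paper organises your \quo{prefix driving} in case (2) as an explicit three-way split on whether $\aConf(\HH)$ is the external state, the fresh state reached after the input from $\KK$, or the state reached after the subsequent $\tau$.

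The genuine problem is your last paragraph. To certify that the $\aCS_2$-participant feeding $\KK$ is in place you propose \quo{a symmetric application of the lemma on the right-projection}. That step is circular: the lemma is invoked inside its own proof with no induction measure, and the right-projection instance would in any case need to know that $\cproj \aConf \KK$ can actually perform the relevant step in $\aCS_2$ --- a lock-freedom property of $\aCS_2$ that the lemma does not assume (it only becomes available in \cref{thmt@@lfpresthm}, where the lemma is applied). Worse, lifting that $\aCS_2$-step could in turn require driving $\aCS_1$-participants, so the mutual recursion has no visible bottom. The paper never recurses into the other system: in case (2) it closes the required $\gint[][k][m][h]$ interaction by appealing only to the compatibility invariant of \cref{prop:comp} together with the position of $\aConf(\KK)$ inside its current sequential segment, and no other $\aCS_2$-participant is ever \quo{prepared} inside the lemma. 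You are right that this availability is the crux of the argument, but as written your resolution is not a proof step; it needs to be replaced by a direct appeal to the compatibility invariant (as in the paper) or the lemma must be restated with stronger hypotheses.
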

  \begin{proof}
	 We give the proof for each case.

	 \noindent
	 \fbox{Case (1)} By case analysis on $\al$ noticing that the case
	 $\al = \gint[][h][m][a]$ is not possible since $\cproj \aConf \HH$
	 cannot enable output transitions from $\HH$ by construction
	 (cf. \cref{def:cproj}).
	 \begin{description}
	 \item[$\al = \tau$.] Then the $\tau$-transition is executed by
		$\p \neq \HH$ in $\aCS_1$; hence there is a transition
		$\cproj \aConf \HH(\p) \trans \tau q$ in $\aCS_1(\p)$.
		Observing that $\aConf(\p) = \cproj \aConf \HH(\p)$ by
		\cref{def:cproj} we have the thesis since
		$\aConf \trans \tau \upd \aConf \p q$.
	 \item[{$\al = \gint[][a][m][h]$}.] We have
		\[
		  \cproj \aConf \HH(\HH) \trans{\ain[a][h]} p
		  \text{ in } \aCS_1(\HH)
		  \qqand
		  \cproj \aConf \HH(\p) \trans{\aout[a][h]} q
		  \text{ in } \aCS_1(\p)
		\]
		moreover, $\aConf(\HH) = \cproj \aConf \HH(\HH)$ and
		$\aConf(\p) = \cproj \aConf \HH(\p)$.
		Hence, 
		\[
		  \aConf \trans{\gint[][a][m][h]} \aConf''
		  \qqand[with]
		  \aConf''(\p[x]) =
		  \begin{cases}
			 p, & \text{if } \p[x] = \HH
			 \\
			 q, & \text{if } \p[x] = \p
			 \\
			 \aConf(\p[x]), & \text{otherwise}		
		  \end{cases}
		\]
		by \cref{def:syncSem}.
		Therefore $\cproj{\aConf''}\HH = \aConf'$.
	 \end{description}

	 \noindent
	 \fbox{Case (2)} Note that $\aCS_1(\HH)$ necessarily contains the
	 transitions
	 $\cproj \aConf \HH(\HH) \trans \tau \hat \aConf'(\HH)
	 \trans{\aout[h][a]} r$.
	 Then, by construction, in $\gateway{\aCS_1(\HH)}{\KK}$ we have
	 \begin{align}\label{eq:taus}
		\cproj \aConf \HH(\HH) \trans{\ain[k][h]} p \trans \tau \hat \aConf'(\HH) \trans{\aout[h][a]} r
	 \end{align}
	 note that $p$ and $\hat \aConf'(\HH)$ are fresh, that these are
	 the only transitions from $\cproj \aConf \HH(\HH)$ to $r$ by
	 sequentiality, and that $\aConf(\HH) \in \{\cproj \aConf \HH(\HH),p, \hat \aConf'(\HH)$ by construction
	 (cf. \cref{def:cproj}).
	 If $\aConf(\HH) = \cproj \aConf \HH(\HH)$ then, by
	 \cref{prop:comp}, we have
	 $\aConf \trans{\gint[][k][m][h]} \trans \tau \aConf''$ with
	 $\aConf''(\HH) = \hat \aConf'(\HH)$.
	 Hence
	 $\cproj{\aConf''} \HH = \upd{\cproj \aConf \HH} \HH r$ and therefore
	 $\aConf'' \trans{\gint[][h][m][a]} \aConf''[\HH \mapsto r, \p \mapsto
	 \aConf'(\p)] = \aConf'''$ which yields the thesis noticing that
	 $\cproj{\aConf'''} \HH = \aConf'$.
	 If $\aConf(\HH) = p$ then $\aConf \trans \tau \aConf''$ with
	 $\aConf''(\HH) = \hat \aConf'(\HH)$; hence the thesis follows as in
	 the previous case.
	 Finally, if $\aConf(\HH) = \hat\aConf'(\HH)$ then
	 $\aConf \trans{\gint[][h][m][a]} \aConf''$ with
	 $\aConf'' = \aConf[\HH \mapsto r, \p \mapsto \aConf'(\p)]$ and therefore
	 $\cproj{\aConf''}\HH = \aConf'$.
  \end{proof}
  \lfpresthm*
  \begin{proof}
	 By contradiction, let us assume
	 $\aCS = \aCS_1 \connect \HH \KK \aCS_2$ not to be lock-free.
	 Then there is a configuration $\aConf \in \RS[\ssem \aCS]$ and a
	 participant $\p[x]$ not involved in any run from $\aConf$.
	 Without any loss of generality, we can assume $\p[x] \in \aCS_1$.
	 We have $\cproj \aConf \HH \in \RS[\ssem{\aCS_1}]$ by
	 \cref{prop:comp} and, by lock-freedom of $\aCS_1$,
	 $\cproj \aConf \HH$ cannot be a lock of $\aCS_1$ for $\p[x]$.
	 Hence, there exists a run
	 $\cproj \aConf \HH \trans{\al_0} \aConf_0 \cdots \aConf_{n-1} \trans{\al_n}
	 \aConf_n$ of $\aCS_1$ with $\p[x]$ involved in $\al_n$.
	 We show that this induces a run from $\aConf$ in $\aCS$ involving
	 $\p[x]$ by induction on $n$.

	 %
	 
	 \begin{itemize}
	 \item If $n=0$, by \cref{lem:reconstruction} there is a run
		$\aConf \trans{\psi} \trans \al \aConf'$ such that
		$\cproj{\aConf'} \HH = \aConf_0$ with $\al_0 = \al$.
	 \item If $n>0$, we assume that the statement holds for all runs with
		less than $n$ transitions.
		If $\p[x]$ is involved in $\al_i$ with $0 \leq i < n$ then
		the thesis follows by inductive hypothesis.
		Let us therefore assume that $\p[x]$ is involved in $\al_n$ only.
		By repeated application of \cref{lem:reconstruction}, there is a
		run
		$\aConf \trans{\psi_1 \cdot \al_1'} \aConf_1' \cdots
		\trans{\psi_{n-1} \cdot \al_{n-1}'} \aConf_{n-1}' \trans{\psi_n
		  \cdot \al_n'} \aConf_n'$ in $\aCS$ such that $\al_i' = \al_i$
		and $\cproj{\aConf_i'} \HH = \aConf_i$ for each $1 \leq i \leq n$.
	 \end{itemize}
	 In both cases $\aConf$ reaches a configuration with a run
	 involving \p[x], which contradicts our assumption.
  \end{proof}
  \else
  \begin{proofsketch}
  The proof goes as the one of \cref{th:dfSeqGat} noticing that
  we have to reconstruct \quo{backward} the sequence of interactions.
  This exploits sequentiality and lock-freedom of $\aCS_1$ and
  $\aCS_2$ in order to guarantee the reconstruction when we
  \quo{cross} the two composed systems through the gateways.
\end{proofsketch}
\fi

We turn now our attention to strong lock-freedom.
In this case, as for deadlock freedom, $(\HH,\KK)$-composability
suffices for preservation by composition; we shall see that this is
not the case for lock freedom preservation.

\begin{restatable}[Strong lock freedom preservation]{theorem}{slfpresthm}\label{th:slfpres}
  Let $\aCS_1$ and $\aCS_2$ be two $(\HH,\KK)$-composable and strongly
  lock free systems.
  Then the composed system $\aCS_1 \connect \HH \KK \aCS_2$ is
  strongly lock free.
\end{restatable}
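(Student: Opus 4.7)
The plan is to argue by contradiction, following closely the scheme of \cref{th:dfSeqGat} but lifted from single deadlocked configurations to maximal (in particular infinite) runs. Suppose $\aCS := \aCS_1 \connect{\HH}{\KK} \aCS_2$ is not strongly lock free; then there exist a reachable configuration $\aConf \in \RS[\ssem{\aCS}]$, a participant $\p$ enabled at $\aConf$, and a maximal run $\rho$ from $\aConf$ in which $\p$ is never involved. Without loss of generality assume $\p \in \dom{\aCS_1}$.

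The first step is to show that $\rho$ is infinite. Since strong lock freedom of $\aCS_1$ and $\aCS_2$ entails their deadlock freedom, \cref{th:dfSeqGat} yields deadlock freedom of $\aCS$. If $\rho$ were finite ending at some configuration $\aConf^\star$, then $\p$ would still be enabled at $\aConf^\star$ (its state is preserved along $\rho$), while no transition would be available, making $\aConf^\star$ a deadlock, contradiction.

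The second step projects $\rho$ to $\aCS_1$ and $\aCS_2$ via $\cproj{\cdot}{\HH}$ and $\cproj{\cdot}{\KK}$, relying on \cref{prop:comp}. A case analysis on the transitions of $\aCS$, analogous to the one in the proof of \cref{th:dfSeqGat}, shows that every step of $\rho$ either advances one of the two projections by a genuine transition in the corresponding subsystem, or is a \emph{stutter} step for both projections (internal $\tau$s between fresh gateway states, or cross-gateway interactions). Inspection of the gateway construction shows that any stutter step needs a preceding non-stuttering transition in order to be enabled: for example, a cross-gateway interaction $\gint[][h][m][k]$ requires $\HH$'s gateway to have reached a fresh internal state through a prior receive from some participant of $\aCS_1$, which is a non-stuttering step of $\aCS_1$. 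Consequently in the infinite run $\rho$ at least one of the projected sequences $\sigma_1, \sigma_2$ is infinite (hence a maximal run of the corresponding subsystem), and neither of them involves $\p$.

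If $\sigma_1$ is infinite and $\p \neq \HH$, then $\cproj{\aConf}{\HH}(\p) = \aConf(\p)$ enables $\p$ in $\aCS_1$, so $\sigma_1$ is a maximal run from a configuration enabling $\p$ yet not involving $\p$, contradicting strong lock freedom of $\aCS_1$; the case where $\sigma_2$ is infinite is analogous, via the forwarding mechanism. I expect the main obstacle to be the case $\p = \HH$ (and symmetrically $\p = \KK$): enabledness of $\HH$ in $\aCS$ at $\aConf$ does not always correspond to enabledness of $\HH$ at $\cproj{\aConf}{\HH}(\HH)$ in $\aCS_1$, for instance when $\aConf(\HH)$ is a fresh state whose only outgoing action is an internal $\tau$ of the gateway preparing a forward to $\KK$. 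To handle this I would exploit the compatibility relation carried by \cref{prop:comp} through the $\nof$ map, together with absence of asymmetric mixed states and $?!$-determinism, to transfer the obstruction to the $\aCS_2$ side and invoke strong lock freedom of $\aCS_2$ on $\KK$'s corresponding state to derive the final contradiction.
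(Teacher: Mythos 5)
Your proposal is correct and takes essentially the same route as the paper's proof: argue by contradiction, project the offending maximal run onto $\aCS_1$ and $\aCS_2$ via $\cproj{\aConf}{\HH}$ and $\cproj{\aConf}{\KK}$ (through \cref{prop:comp} and the case analysis underlying \cref{th:dfSeqGat}), show that at least one projection is a maximal run not involving the chosen participant, and contradict strong lock freedom of that component. Your only departures are streamlinings of steps the paper leaves implicit: you dispatch finite maximal runs upfront by invoking \cref{th:dfSeqGat} directly (the paper instead reuses the second part of that theorem's proof at the level of projections), and your explicit stutter analysis together with the flagged $\p = \HH$ corner case correspond to what the paper delegates to the first part of the deadlock-freedom proof.
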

\iftr
  \begin{proof}
	 By contradiction, let us assume $\aCS_1 \connect \HH \KK \aCS_2$
	 not to be strongly lock free.
	 This means that there are a reachable configuration $\aConf$, a
	 participant $\p[x]$, and a maximal run $\psi$ of
	 $\aCS_1 \connect \HH \KK \aCS_2$ such that $\aConf \trans{\psi}$
	 and \p[x] is not involved in any of those transitions.
	 By the first part of the proof of \cref{th:dfSeqGat}, there exist
	 two run $\cproj \aConf \HH \trans{\psi_1}$ and
	 $\cproj \aConf \KK \trans{\psi_2}$ of $\aCS_1$ and $\aCS_2$
	 respectively such that \p[x] is involve neither in $\psi_1$ nor
	 in $\psi_2$.
	 \begin{itemize}
	 \item In case $\psi$ is infinite, we get that either $\psi_1$ or
		$\psi_2$ is infinite, and hence maximal.
	 \item In case $\psi$ is finite it is possible to use the second
		part of the proof of \cref{th:dfSeqGat} to show that either
		$\psi_1$ or $\psi_2$ is maximal,
	 \end{itemize}
	 In both cases we get a contradiction of the hypothesis that
	 $\aCS_1$ and $\aCS_2$ are strong lock free.
  \end{proof}
  \else
  \begin{proofsketch}
  The proof is similar to the one of \cref{thmt@@lfpresthm} but for
  the use of strong lock freedom of $\aCS_1$ and $\aCS_2$ instead of
  their deadlock freedom.
\end{proofsketch}
\fi


\section{Conclusions and Future Work}\label{sec:conc}
We introduce an asymmetric synchronous semantics of communicating
systems which breaks the symmetry between senders and receivers.
In fact, our semantics decouples communication from choice resolution
as in standard semantics of communicating systems (and other models).
We then adapted the gateway composition mechanism defined
in~\cite{francoICEprev,BarbaneradH19} to our asymmetric semantics and
gave conditions for the preservation of some communication properties
under this notion of composition.

In \emph{contract automata}~\cite{bdft16,btp20} transitions express
\quo{requests} and \quo{offers} among participants.
The composition mechanism is based on \quo{trimming} a product of
contract automata according to relevant \emph{agreement} properties.
This yields controllers that preserve deadlocks.
Contract automata do not consider asymmetric synchronous semantics.
Our composition mechanism does not introduce orchestrators which,
under some conditions, can be avoided also for contract
automata~\cite{bdft16,btp20}.

Modular approaches to the development of concurrent systems can be
exploited even for systems designed using formalisms intrinsically
dealing with {\em closed} systems.
Indeed, given two systems, any two components -- one per system --
exhibiting {\em compatible} behaviours can be replaced by two coupled
forwarders (gateways) connecting the systems, as investigated
initially in~\cite{francoICEprev,BarbaneradH19} for an asynchronous
interaction model.
The investigation on the composition-by-gateways technique was shifted
in \cite{BLT20b} towards synchronous symmetric interactions.
We pushed a step forward such an investigation,
by considering {\em asymmetric} synchronous interactions.
Interestingly, deadlock freedom preservation in the synchronous
asymmetric case we consider does not require sequentiality of
gateways, like in the asynchronous case, and differently from the
synchronous symmetric case.
Notably, sequentiality is needed here for lock-freedom preservation,
but not for strong-lock freedom preservation.

While the path of investigation above is quite homogeneous, the
different analyses present some methodological differences.
For instance, \cite{BLT20b} considers also another form of
composition, where one single gateway (interacting with both the
composed systems) is used.
On the other side, \cite{BLT20b} focused only on deadlocks,
disregarding other properties we consider.
A first item of future research consist in filling the bits missing
due to the mismatches above.



A more challenging direction for future work is looking for refined
composition mechanisms in order to get preservation of relevant
properties under weaker conditions.


  %
  %

%
\bibliographystyle{eptcs}
\bibliography{bib}

\end{document}
